\setlist[itemize]{itemsep=1pt, topsep=0pt}
\setlist[enumerate]{itemsep=1pt, topsep=0pt}
\theoremstyle{plain}
\newtheorem{theorem}{Theorem}[section]
\newtheorem{proposition}[theorem]{Proposition}
\theoremstyle{definition}
\theoremstyle{remark}
\newcommand{\opreview}{o1-preview}
\newcommand{\claude}{Claude 3.5 Sonnet}
\newcommand{\gptfouro}{GPT-4o}
\newcommand{\gemini}{Gemini 1.5 Pro}
\newcommand{\llama}{Llama 3.1 70b Instruct}
\newcommand{\gptthree}{GPT-3.5 Turbo}
\definecolor{lightgreen}{HTML}{DFFFD6}
\definecolor{lightred}{HTML}{FFD6D6}
\definecolor{lightyellow}{HTML}{FFFFD6}
\definecolor{lightblue}{HTML}{D6F0FF}
\definecolor{lightgray}{HTML}{E0E0E0}
\newcommand{\hlgreen}[1]{\sethlcolor{lightgreen}\hl{#1}}
\newcommand{\hlred}[1]{\sethlcolor{lightred}\hl{#1}}
\newcommand{\hlyellow}[1]{\sethlcolor{lightyellow}\hl{#1}}
\newcommand{\hlblue}[1]{\sethlcolor{lightblue}\hl{#1}}
\definecolor{midblue}{HTML}{B3E6FF}
\icmltitlerunning{What Can Large Language Models Do for Sustainable Food?}
\begin{document}

\twocolumn[
\icmltitle{What Can Large Language Models Do for Sustainable Food?}




\begin{icmlauthorlist}
\icmlauthor{Anna T. Thomas}{yyy}
\icmlauthor{Adam Yee}{comp}
\icmlauthor{Andrew Mayne}{yyy}
\icmlauthor{Maya B. Mathur}{yyy}
\icmlauthor{Dan Jurafsky}{yyy}
\icmlauthor{Kristina Gligori\'c}{yyy}
\end{icmlauthorlist}

\icmlaffiliation{yyy}{Stanford University}
\icmlaffiliation{comp}{Umai Works}

\icmlcorrespondingauthor{Anna T. Thomas}{thomasat@stanford.edu}
\icmlcorrespondingauthor{Kristina Gligorić}{gligoric@stanford.edu}

\icmlkeywords{Machine Learning, ICML}

\vskip 0.3in
]



\printAffiliationsAndNotice{} 

\begin{abstract}
Food systems are responsible for a third of human-caused greenhouse gas emissions. We investigate what Large Language Models (LLMs) can contribute to reducing the environmental impacts of food production. We define a typology of design and prediction tasks based on the sustainable food literature and collaboration with domain experts, and evaluate six LLMs on four tasks in our typology. For example, for a sustainable protein design task, food science experts estimated that collaboration with an LLM can reduce time spent by 45\% on average, compared to 22\% for collaboration with another expert human food scientist. However, for a sustainable menu design task, LLMs produce suboptimal solutions when instructed to consider both human satisfaction and climate impacts. We propose a general framework for integrating LLMs with combinatorial optimization to improve reasoning capabilities.
Our approach decreases emissions of food choices by 79\% in a hypothetical restaurant while maintaining participants' satisfaction with their set of choices. Our results demonstrate LLMs' potential, supported by optimization techniques, to accelerate sustainable food development and adoption. Our code is available at \url{https://github.com/thomasat/llms-sustainable-food}.
\end{abstract}

\section{Introduction}
Global food production practices are a leading contributor to climate change, deforestation, water pollution, biodiversity loss, antibiotic resistance, and zoonotic disease transmission~\cite{allan2023intergovernmental,hopwood2020health,scherer2019opportunity}. 
Given the disproportionate harms of animal agriculture in particular, which accounts for 57\% of food systems emissions, a number of companies have formed to develop more sustainable protein sources~\cite{xu2021global,mylan2023big}. Additionally, programs such as the Menus of Change have developed to promote sustainable choices in foodservice operations~\cite{menusofchangeAboutMenus}. These directions require knowledge of both food science and human preferences.

Though LLMs have shown promise in scientific discovery~\cite{boiko2023autonomous,guo2023can} and modeling human behavior~\cite{park2022social,horton2023large}, applications in sustainable food remain underexplored. Beyond its importance in sustainability and public health, this domain represents a compelling testbed for broader machine learning (ML) challenges, including modeling diverse human preferences, generating creative solutions, and aligning outputs with real-world constraints and societal values~\cite{pmlr-v235-chakraborty24b,yang2024large}. These challenges are not only central to sustainable food but also to other domains, making sustainable food tasks a valuable benchmark for advancing general ML methods.
To initiate an exploration of what LLMs can contribute to sustainable food, we assembled a team consisting of culinary and food science professionals and researchers in ML, natural language processing, and statistics. We make three main contributions: 
\begin{itemize}
\itemsep0em 
 \item Based on collaboration with expert culinary and food science professionals, we define a typology of tasks grounded in the needs of food scientists and chefs. Our typology covers design and prediction tasks across the resolutions of ingredients, recipes, and food systems.  
\item We evaluate six LLMs on four concrete instantiations of tasks in our typology, incorporating two datasets not previously studied by the machine learning community. 
 Notably, LLMs meet or exceed expert human performance on our sustainable protein experimental design task, and also show promising performance in coarse-grained preference prediction. However, they exhibit an omnivore bias, produce suboptimal solutions when balancing multiple constraints (e.g. maintain satisfaction while reducing emissions), and fail to perform well in fine-grained preference prediction.
\item We propose an approach that integrates LLMs with combinatorial optimization techniques to address LLMs' relative weakness in mathematical reasoning, and demonstrate that our approach achieves a 79\% average emissions reduction in food choices while maintaining patron satisfaction with their set of choices. 
\end{itemize}

\section{Related Work}
\noindent \textbf{Large Language Model Evaluations.}
Rigorous evaluation of LLMs has clarified their strengths and weaknesses. 
\citet{guha2024legalbench} identified six types of legal reasoning tasks and evaluated 20 models on them. 
In chemistry,~\citet{guo2023can} identified three types of capabilities, established a benchmark of eight tasks, and evaluated five LLMs. 
However, no work, to our knowledge, has formalized key tasks and evaluated LLMs in the sustainable food domain.

\noindent \textbf{Large Language Models for Scientific Discovery.}
In chemistry,~\citet{boiko2023autonomous} demonstrated that GPT-4 combined with internet search and code execution can perform experimental design and execution.
In genomics,~\citet{roohani2024biodiscoveryagent} showed that an LLM with access to several tools can design genetic perturbation experiments. In computational nanobody design,~\citet{swanson2024virtual} propose a framework in which LLM agents receive periodic human feedback, and showed promising performance in the design of novel antibodies. In artificial intelligence research,~\citet{si2024can} show that LLMs can generate novel research ideas. However, we are not aware of prior work on LLMs for discovery of novel sustainable foods. 

\noindent \textbf{Large Language Models for Simulating Human Preferences.}
While we are not aware of prior work studying how LLMs can model human preferences and perceptions related to food, a growing literature has shown evidence that LLMs can accurately simulate human behavior in tasks such as search behavior, participation in online communities, hiring scenarios, and classic economic, psycholinguistic, and social psychology experiments~\cite{zhang2024usimagent,park2022social,horton2023large,aher2023using}.

\noindent \textbf{Large Language Models and Optimization.}
Our work also pertains to the use of LLMs for mathematical optimization, and combining LLMs with optimization techniques. ~\citet{yang2024large} evaluated LLMs for several optimization problems.~\citet{pmlr-v235-ahmaditeshnizi24a} introduce OptiMUS, an LLM-based agent for solving optimization problems. Motivated by sustainable food applications, we propose a general framework for integrating LLMs with combinatorial optimization techniques that leverages LLMs' knowledge on topics such as human preferences. We further discuss these works and additional related work in Appendix~\ref{sec:extended-related-work}. 



\section{Task Definitions}

\begin{table*}[ht!]
\small
\centering

\begin{tabular}{p{0.14\textwidth} | p{0.25\textwidth} |p{0.25\textwidth} |p{0.25\textwidth}}
\toprule
\textbf{Task \textbackslash Resolution} & \textbf{Ingredients}                                       & \textbf{Recipes}                                                          & \textbf{Systems}                                      \\
\midrule
\multirow{2}{*}{\textbf{Design}}    & 1. Design breeding strategies for chickpeas to improve yield
& 1. Design a lentil soup recipe meeting nutritional and cost constraints                                & \hlblue{1. Design a menu meeting a foodservice operation's constraints}          \\
& 2. Design a plant-based protein blend to mimic egg white texture                            & \hlyellow{2. Design  experiments to improve
a plant-based product formulation
in response to feedback}                                & 2. Design an optimized supply chain for a plant-based meat company to reduce climate impacts 
\\
\midrule
\multirow{2}{*}{\textbf{Prediction}}       & 1. Predict complementary ingredient pairings for traditional plant proteins &  \hlgreen{1. Predict which of two recipes a target population will prefer}              & 1. Predict change in patron satisfaction when making plant-based the default \\
& 2. Predict functional properties (e.g.,
solubility) of a proposed plant-based analog for casein         & \hlred{2. Predict sensory profile of a plant-based product formulation} & 2. Predict climate and economic impacts of a regional transition to plant-based meat                 \\
                 \bottomrule
\end{tabular}
\caption{\textbf{A typology of tasks at the intersection of ML and sustainable food development.} Tasks are categorized into design and prediction across the resolutions of ingredients, recipes, and food systems. For each task, we list examples corresponding to both traditional (1) and novel (2) foods. Tasks we study are highlighted.}
\label{tab:taxonomy}
\end{table*}





\subsection{Typology}
We identify a non-exhaustive typology of sustainable food tasks amenable to ML approaches, shown in Table~\ref{tab:taxonomy}.
Based on the sustainable food literature and collaboration with experts, we identify two types of tasks, (1) design and (2) prediction, across three resolutions, (1) ingredients, (2) recipes, and (3) systems. 
 \textit{Design tasks} focus on creating or optimizing food products or systems to meet specific target properties, such as sensory dimensions (taste, texture, etc.), nutrition, cost, and climate impacts.
 \textit{Prediction tasks} focus on inferring the mapping from a given representation of a food product or system to a property of interest. 
 \textit{Ingredients} are the fundamental components used to create food products, including raw materials (e.g. grains, legumes, oils) and processed components (e.g. protein isolates, flavorings). 
 \textit{Recipes} are structured formulations or instructions for combining ingredients to create food products. 
 \textit{Systems} refer to the broader processes that govern the production, distribution, and consumption of food. This includes supply chains, foodservice operations, and consumer behavior.

We additionally distinguish between novel and traditional foods. Novel sustainable foods, often referred to as ``sustainable proteins" or ``alternative proteins", are defined as foods developed using modern food science and technology to emulate or improve upon animal products with respect to dimensions such as taste. The three main types of sustainable protein technologies are plant-based, cultivated, and fermentation~\cite{gfiScienceAlternative}. Here we focus on plant-based products. Example of novel plant-based products include plant-based meat analogs (e.g. Beyond and Impossible burgers) and plant-based dairy alternatives (e.g. oat milk). While fermentation and cultivated meat are promising approaches for shifting consumption, and we believe our frameworks can be extended to these domains as well, they are outside the scope of the present work. We refer the reader to~\citet{gfiPrecisionFermentation} and~\citet{todhunter2024artificial} to learn about these two technologies.



\subsection{Selected Tasks}
Starting from this general typology, we then selected four tasks to address in the context of both sustainable protein and traditional foods. These tasks include two design tasks (experimental design and menu design) and two prediction tasks (sensory profile prediction and recipe preference prediction). The four tasks were selected based on interviews with food scientists and chefs. 

\noindent \textbf{Experimental Design.}
Taste is a key determinant of food choice~\cite{glanz1998americans}. Food companies depend on ``sensory panels", a group of individuals who provide feedback on food, to routinely quantify sensory properties of their products~\cite{kerth2013science}. 
Once the sensory panel evaluation has been run, food companies will then synthesize the feedback and revise their product accordingly via a series of experiments~\cite{beckley2008accelerating,redefinemeatRoleSensory}. We consider the common task of, given a product formulation and sensory panel feedback, designing a set of experiments to address the identified issues and improve the product.

\noindent \textbf{Menu Design.}
Foodservice operations, such as restaurants and dining halls, aiming to reduce their climate impacts must also balance other factors, such as patron satisfaction, nutrition, and cost. We thus consider the task of revising a menu to promote more sustainable choices while maintaining patron satisfaction and other relevant factors~\cite{banerjee2023sustainable,parkin2022menu}.

\noindent \textbf{Sensory Profile Prediction.}
Sensory panel evaluations are very expensive and time consuming to run, and smaller food companies usually cannot afford to run them~\cite{varela2012sensory}. Due to these challenges, we consider the task of estimating a sustainable protein product's sensory profile. While we do not expect to eliminate the need for sensory panels altogether, accurate sensory modeling could reduce the need for experiments and help to efficiently prioritize them. We note that prior work has shown that ML can accurately model the human sense of smell~\cite{lee2023principal}, which is closely related to taste~\cite{spence2015just}. 

\noindent \textbf{Recipe Preference Prediction.}
Similarly, in the domain of traditional foods, chefs must anticipate patron preferences and design their recipes and menus accordingly. We thus consider the task of predicting the mapping from a traditional recipe to average consumer satisfaction within a specified population. We study how accuracy on this task compares for plant-based versus animal-based recipes. While food preferences of course vary greatly from individual to individual, in both of our prediction tasks the goal is to simulate a representative sample of humans from a specified population~\cite{aher2023using}. 

\section{Datasets}

\noindent \textbf{NECTAR Sustainable Protein Dataset.}
The NECTAR Initiative's (\url{nectar.org}) sensory panel data\footnote{Access can be requested \href{https://docs.google.com/forms/d/e/1FAIpQLSdYXkB1yEOt_gMBrnHIpHv_xX9peh5NS1kdbfuY8-6-77kSug/viewform?usp=preview}{here}.} is freely available to academic researchers.
The dataset, which will continue to expand in size, consists of 47 products across five categories. 
One product in each category is a meat-based reference product, and one product is a hybrid beef and mushroom burger, yielding 41 plant-based products. For each product, at least 100 sensory evaluations from American omnivores were performed along 21 quantitative and qualitative dimensions. More details are in Appendix~\ref{app:nectar}.

\noindent \textbf{Food.com Recipe Dataset.}
The \url{Food.com} dataset\footnote{Publicly available \hyperlink{https://www.kaggle.com/datasets/irkaal/foodcom-recipes-and-reviews}{here}.} contains 522,517 recipes, including ingredients and preparation instructions. 
We use the associated 1,401,982 reviews, containing ratings and text, to capture online users' preferences. 
Given the large number of recipes, the dataset allows for finding pairs of comparable recipes, i.e., similar in dish type and ingredients, but significantly differing in the users's ratings. 
An example recipe is shown in Appendix~\ref{fig:example-food.com-recipe}.

\section{Experimental Evaluation}
We evaluate \claude, \gemini, \gptthree, \gptfouro, \llama,  and \opreview. In the experimental design task, due to limited availability of expert food scientists, we used \opreview\:only, based on feedback on relative LLM performance from our food scientist team member. We use a combination of automated evaluations and human subjects evaluations, including both food science experts and participants recruited on Prolific. IRB approval was obtained. All evaluations were in a zero-shot setting. Additional details for each task, including prompts and additional results, are in Appendices~\ref{app:methods} and~\ref{app:results}. 
\subsection{Experimental Design (Sustainable Protein)}
\noindent \textbf{Methods.}
In this task we study whether LLMs can generate an experimental design for systematically improving a sustainable protein product on the basis of qualitative and quantitative sensory panel data. 
We evaluate the performance of \opreview\:via the feedback of 20 expert food scientists, with an average of five years of experience in plant-based food science and three years of experience in plant-based meat specifically. In Phase 1, we ask both \opreview\:and expert food scientists to generate experimental designs.  In Phase 2, we ask food scientists to evaluate both \opreview\:and a fellow food scientist (blinded, and in randomized order) along the dimensions of accuracy, specificity, complementarity (to the evaluating food scientist's own thought process), and estimated time saved by collaborating with the anonymous ``scientist.'' We ensured that no food scientists evaluated their own ideas. The food scientists were compensated via \$50 Amazon gift cards, and an additional \$50 bonus was provided to the highest scoring food scientist from Phase 1 to incentivize effort. Before Phase 2, following~\citet{si2024can} we perform style standardization (also with \opreview) on both the LLM and human responses to avoid confounding by style. In order to approximately match the length of the human responses, \opreview\:was instructed to limit its response to  250 words. 

\noindent \textbf{Results.} Our results are shown in Figure~\ref{fig:task2}. Across the 30 products, the mean performance of LLMs was higher on all four dimensions, though only statistically significant (in a $t$-test) for specificity ($p$=0.003) and percent time saved ($p$=0.0002). Mean estimated percent time saved was 22\% for the human food scientists and 45\% for o1-preview. 

\subsection{Menu Design (Traditional Foods)}\label{sec:init-menu-design}
\noindent \textbf{Methods.} Here we study what LLMs can contribute to greater adoption of existing sustainable foods. Extensive past work has studied how to shift consumption toward sustainable options in foodservice operations~\cite{lohmann2024choice,attwood2020menu,banerjee2023sustainable,weijers2024nudging}. However, these studies either obtained relatively small improvements in climate impact or did not measure patron satisfaction with their set of choices. Additionally, none accounted for the trade-off known as the ``small body problem", in which simply substituting beef by chicken or fish, a common approach for reducing climate impacts, can be much worse for animal welfare due to the relative sizes of these animals and their typical conditions~\cite{mathur2022ethical}. Public opinion polling suggests high levels of concern for animal welfare~\cite{gallupUSMore}. Given that LLMs' training data includes many recipes and menus, we explore what LLMs can contribute to designing appealing menus under the constraints of ingredient availability, greenhouse gas (GHG) emissions, and animal welfare. Our final approach, described in Section~\ref{sec:augmenting}, can also easily be extended to other constraints such as nutrition, cost, allergies, other dimensions of climate impact, etc.

\begin{figure}[ht!]
\centering
\includegraphics[width=0.5\textwidth]{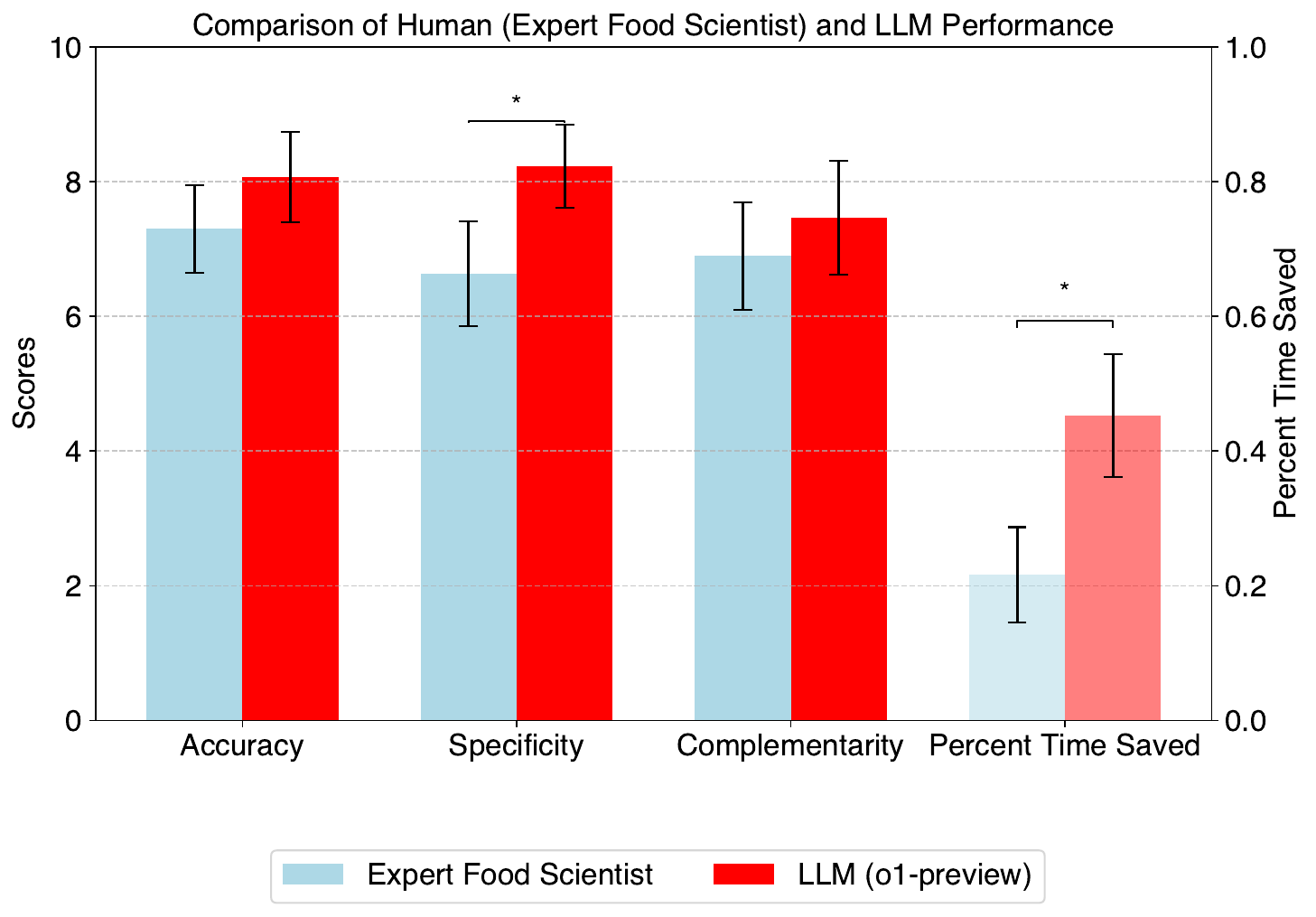}
\caption{\opreview\:meets or exceeds the performance of expert food scientists on a sustainable protein design task. \opreview\:and food scientists were asked to generate experimental designs for improving  product formulations in response to sensory panel feedback, a common task for food scientists. Error bars are 95\% confidence intervals. Example outputs are in Appendix~\ref{app:results-experimental-design}.}
\label{fig:task2}
\end{figure}

As a reference menu, we use a 36-item menu from~\citet{banerjee2023sustainable}, created using Deliveroo's and Just Eat's top 100 items ordered in the United Kingdom in 2019. Each item consists of a title and an appealing description, e.g. ``Chicken Curry Ramen. Japanese fried chicken and noodles in a delicious curry broth." 
Each LLM was instructed to reduce emissions of food choices by 75\% while maintaining satisfaction, cost, nutrition, preparation time, and animal welfare, and using only the same ingredients as the original menu other than standard vegetarian protein sources (most of which were already in the original menu, other than eggs and chickpeas). Adherence to the ingredient availability constraint, as well as meeting the constraint of generating 36 properly formatted recipes, was checked, and the LLM was given up to five chances to correct itself. GPT-3.5 Turbo and \llama\:did not meet the formatting or ingredient constraints. 

\noindent \textbf{Results.}
We found that this approach yielded entirely vegetarian or vegan menus (shown in Appendix~\ref{sec:menus}), known to generally decrease satisfaction in a typical omnivore population~\cite{hartmann2017consumer}. This decrease in satisfaction was further supported in our later experiments (Figure~\ref{fig:task4}), motivating our approach described in Section~\ref{sec:augmenting}.

\begin{table*}[t!]
\small
\centering
\begin{tabular}{llllllll}
\toprule
 & \claude & \gemini & \gptthree & \gptfouro & 
Llama & \opreview & Baseline \\
\midrule
\multicolumn{8}{l}{
\textbf{Sensory Profile Prediction: LLM Prediction vs. Sensory Panel (Sustainable Protein)}} \\
\midrule
All Dimensions & \textbf{0.61} & 0.53 & 0.52 & \textbf{0.59} & \textbf{0.57} & \textbf{0.64} & \textbf{0.67} \\
Overall Satisfaction & 0.60 & 0.41 & 0.38 & 0.39 & 0.51 & 0.64 & \textbf{0.73} \\
Meatiness & 0.61 & 0.57 & 0.51 & 0.62 & 0.52 & 0.66 & 0.48 \\
Greasiness & 0.58 & \textbf{0.64} & 0.63 & \textbf{0.74} & \textbf{0.73} & \textbf{0.64} & \textbf{0.69} \\
Juiciness & 0.62 & 0.60 & 0.63 & 0.63 & 0.51 & 0.64 & \textbf{0.71} \\
Sweetness & 0.36 & 0.61 & 0.61 & 0.61 & 0.50 & 0.64 & 0.75 \\
Saltiness & \textbf{0.70} & 0.67 & 0.57 & \textbf{0.73} & \textbf{0.70} & \textbf{0.75} & \textbf{0.73} \\
Purchase & 0.65 & \textbf{0.29} & 0.37 & 0.45 & 0.43 & 0.54 & 0.61 \\

\midrule
\multicolumn{8}{l}{\textbf{Recipe Preference Prediction (\url{Food.com}): LLM Prediction vs. Rating (Traditional Foods)}} \\
\midrule
Rating & \textbf{0.63} & 0.48 & 0.57  & \textbf{0.64}  & 0.51 & \textbf{0.60} & 0.50 \\
\midrule
\multicolumn{8}{l}{\textbf{Recipe Preference Prediction (Menu Design): LLM Prediction vs. Order Frequency (Traditional Foods)}} \\
\midrule
Order Frequency & \textbf{0.73} & \textbf{0.62} & -  & \textbf{0.67}  & \textbf{0.60} & \textbf{0.56} & 0.50 \\
\bottomrule
\end{tabular}
\caption{We evaluate LLMs' pairwise ranking accuracy for three forms of preference prediction: 1) the mapping from a sustainable protein product's ingredients and nutritional information to sensory properties 2) the mapping from a recipe to its online rating 3) the mapping from a recipe to frequency of orders in a hypothetical restaurant. Sample sizes are in Appendix Table~\ref{tab:prediction-sample-sizes}. Statistically significant results, based on a chi-squared test, are in bold. For the sustainable protein task, the baseline is described in Section~\ref{sec:sensory-profile-prediction}. For recipe preference prediction, the baseline corresponds to a random choice. \gptthree\:did not produce a valid output in the menu design experiment. 
}
\label{tab:sensory-profile-prediction-results}
\end{table*}
\subsection{Sensory Profile Prediction (Sustainable Protein)}\label{sec:sensory-profile-prediction}
\noindent \textbf{Methods.}
Here we study whether LLMs can make accurate predictions about the mapping from a sustainable protein product's ingredients and nutritional information to its sensory properties. Of the 21 sensory dimensions in the NECTAR dataset, seven dimensions (overall satisfaction, meatiness, greasiness, sweetness, juiciness, saltiness, purchase intent) were selected for testing, to reduce the number of tests.
Within each of the five product categories, and each sensory dimension, we ask LLMs to compare pairs (with randomized ordering, to account for positional biases) of plant-based products with a statistically significant ($p < 0.05$ in a $t$-test) difference on the sensory dimension, as evaluated by human omnivores. This yielded 495 pairs. For the baseline, we use the available nutritional information and prior knowledge from the food science literature. Specifically, for greasiness and juiciness, meatiness, sweetness, and saltiness, the ratio of total fat, protein, sugar, and sodium respectively per serving to serving size are used to predict the sensory dimension. For example, the product that is higher on protein is ranked higher on meatiness according to the baseline. 
For overall satisfaction and purchase intent, the average of normalized fat and sodium content is used~\cite{drewnowski1983cream,mattes1997taste}. 

\noindent \textbf{Results.}
Accuracies are shown in Table~\ref{tab:sensory-profile-prediction-results}. Accuracies significant according to a chi-squared test with Bonferroni correction for multiple testing are bolded~\cite{vanderweele2019some}. Performance is generally not meaningfully superior to the baseline. Accuracy for the best  (across all dimensions) LLM, \opreview, improves in the highest quartile of the ground truth difference between the recipes, e.g. to 81\% and 75\% for overall satisfaction and purchase intent. However, it is still not meaningfully superior to the simple baseline, which achieved 86\% and 70\% accuracy on those dimensions in this subset of pairs (Appendix Table~\ref{tab:sensory-profile-prediction-results-q4}).
An expert human baseline for the ``Overall Satisfaction" dimension was collected from a collaborator with deep domain expertise (11 years of food science experience, including co-founding two plant-based meat companies), who is also a co-author on this paper. The results are shown in Appendix Table~\ref{tab:expert-food-scientist}.


\subsection{Recipe Preference Prediction (Traditional Foods)}

\noindent \textbf{Methods.} We created a set of 500 recipe pairs with similar ingredients but significantly different average ratings ($p<0.05$ in a $t$-test), ensuring that one recipe is consistently preferred by users. Examples pairs are shown in Appendix~\ref{app:food.com}. To identify recipe pairs with the highest similarity, the overlap of ingredients was calculated for each pair using the ratio of shared ingredients to the total number of unique ingredients. We addressed the possibility of overlap with LLM training data by curating an additional set of 58 recipe pairs from \url{epicurious.com}. These recipes were posted behind a paywall after the release dates of all the models we tested.
Our primary metric was accuracy---whether the models correctly identified the preferred recipe---compared to a random baseline. We tested statistical significance using a chi-squared test with Bonferroni correction when evaluating on multiple conditions. The order in which the recipes were shown was randomized. 

\noindent \textbf{Results.} \claude, \gptfouro, and \opreview\:outperform the random baseline (Table~\ref{tab:sensory-profile-prediction-results}). The highest accuracy achieved is 64\% (\gptfouro), with accuracy improving to 85\% in the highest quartile of the ground truth review score gap, i.e., pairs of recipes for which people's preferences were most different (Appendix Table~\ref{tab:task3a}).
Moreover, performance varies widely by comparison type (Appendix Table~\ref{tab:task3b}). Across all the models, accuracy is the highest when two non-vegetarian meals are compared, followed by vegetarian vs. non-vegetarian comparisons, with the worst accuracy observed in vegetarian vs. vegetarian meal comparisons. This aligns with existing literature showing that minority preferences and opinions are modeled less accurately~\cite{santurkar2023whose}. Within the vegetarian vs. non-vegetarian meal comparisons, accuracy is lower when vegetarian options are rated higher, reflecting LLMs' omnivore bias, as LLMs more frequently prefer non-vegetarian options compared to ground truth human preferences. This finding is consistent with the literature on bias amplification in the food domain~\cite{luo2024othering} and points to the need to improve preference prediction performance in order to develop tools that can better  support the development of sustainable food. Finally, similar pairwise accuracy across models on post-cut-off data suggests that performance cannot be fully attributed to memorization (Appendix Table~\ref{tab:task3c}).


\section{Augmenting LLMs with Combinatorial Optimization}\label{sec:augmenting}

When prompted directly to revise a menu to reduce emissions by at least 75\% while maintaining patron satisfaction and other factors, LLMs generate entirely plant-based menus (Section~\ref{sec:init-menu-design}, Appendix~\ref{sec:menus}). While this approach achieves the emissions target, it significantly decreases satisfaction (Figure~\ref{fig:task4}). This failure highlights a broader, previously documented challenge: LLMs struggle with optimization problems~\cite{pmlr-v235-ahmaditeshnizi24a,yang2024large}.
However, frequently in real-world applications we need to consider some subjective, human-evaluated property (like satisfaction) while also satisfying other constraints (like emissions targets or budgets). For instance, in education, we may want to maximize student engagement while meeting curriculum requirements. In fitness, we may aim to maximize enjoyment while covering specified muscle groups within time constraints. In travel planning, we may seek to maximize trip enjoyment while meeting budget and scheduling constraints. Additionally, in some settings it may take considerable effort to quantify dimensions such as emissions or other climate impacts of a food item.

We propose to address these challenges by combining LLMs' background knowledge and ability to model human preferences, demonstrated in~\citet{horton2023large},~\citet{park2022social} and in our own experiments, particularly for coarse-grained prediction (Tables~\ref{tab:sensory-profile-prediction-results},\ref{tab:menu-preference-rating-q4},\ref{tab:sensory-profile-prediction-results-q4},\ref{tab:task3a}), with mathematical optimization. While LLMs alone struggle with  optimization, and traditional optimization methods do not have access to LLMs' background knowledge, combining them allows for leveraging the strengths of both approaches.


\subsection{LLM-Guided Combinatorial Optimization}
We begin with the general constrained optimization setting:
\begin{equation}
\begin{aligned}
\max_{x} \quad & f(x)\\
\textrm{s.t.} \quad & g_i(x) \leq 0 \: \forall i \in [1, ..., m] \\
\end{aligned}\label{eq:general-opt}
\end{equation}
We make three assumptions:
\begin{enumerate}
\itemsep0em 
    \item The optimization problem involves selecting a subset of items $S$ from a ground set $U = \{u_1, ..., u_N\}$. Each $u_i$ represents an item, e.g. a recipe or exercise. The selection is represented by a binary vector $x \in \{0,1\}^N$, where $x_i = 1$ if $u_i$ is selected, and 0 otherwise. 
    \item At least one of $f(x)$ or the constraints $g_i(x)$ can be expressed in terms of a \textit{scoring function} $p(u_i)$, which computes a scalar property (e.g. satisfaction) for each item $u_i$. The scoring function can be estimated by an LLM. For example, if the objective is to maximize the total satisfaction of the end user with the selected items, $f(x)$ can be written as $f(x) = \sum_{i=1}^N p(u_i)x_i$.
    \item Once $p(u_i)$ is estimated for each $u_i$ in $U$, (\ref{eq:general-opt}) has a tractable form and can be solved exactly or approximately via standard techniques, e.g. submodular optimization or integer programming for small to medium problem sizes~\cite{fujishige2005submodular,wolsey2020integer}. We give a concrete example in Section~\ref{sec:constrained-menu-design}.
\end{enumerate}
 
We can then solve~(\ref{eq:general-opt}) via the following steps:
\begin{enumerate}
\itemsep0em 
\item Generate the ground set $U = \{u_1, ..., u_N\}$, e.g. a diverse set of recipes or exercises.
\item Obtain the estimates $\hat{p}(u_i)\: \forall i\in [1, ..., N]$ via an LLM.
\item Solve the combinatorial optimization problem using standard techniques, e.g. submodular optimization or integer programming, depending on the forms of $f(x)$ and $g_i(x)$. This will yield a subset $S \subset U$. 
\end{enumerate}
Step 1 could be performed via an LLM or via other techniques. We discuss the relationship of this framework to prior work in 
Appendix~\ref{sec:extended-related-work}. A natural question is how performance depends on the accuracy of the LLM's estimates $\hat{p}(u_i)$, and other parameters. For our analysis we consider the cardinality constrained setting where $f(x)$ has a component that is linear in $p(u_i)$ for the selected subset, and another component $d(x)$ that does not depend on $p(u_i)$. Concretely, we aim to maximize $f(x) = \sum_{i=1}^N p(u_i)x_i + \lambda d(x)$ subject to the constraint $\sum_{i=1}^N x_i = K$, though this result can be extended to more general objectives and constraints.

\begin{proposition}
Let $x^{\ast}$ be the optimal solution with respect to actual preferences $p(u_i)$, and $\hat{x}^{\ast}$ be the optimal solution with respect to estimated preferences $\hat{p}(u_i)$. Let $p(u_i), \hat{p}(u_i) \in [0,1]$, $d(x) \in [0,D]$. If $|\hat{p}(u_i) - p(u_i)| \leq \epsilon$ for all $u_i \in U$, then $|f(\hat{x}^{\ast}) - f(x^{\ast})| \leq 2K\epsilon$.
\end{proposition}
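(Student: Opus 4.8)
The plan is to use the standard argument that optimization error is bounded by twice the estimation error. First I would introduce the surrogate objective $\hat f(x) = \sum_{i=1}^N \hat p(u_i)\,x_i + \lambda d(x)$, so that $x^{\ast}$ maximizes the true objective $f$ and $\hat x^{\ast}$ maximizes $\hat f$, both over the feasible set $\{x \in \{0,1\}^N : \sum_{i=1}^N x_i = K\}$. Note that $\hat x^{\ast}$ is feasible for the true problem and $x^{\ast}$ for the surrogate problem, which is what lets the bounds below be applied at these specific points.

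The key observation is that $f$ and $\hat f$ are uniformly close on the feasible set. Since the term $\lambda d(x)$ is identical in both objectives and does not depend on $p$, it cancels, giving $f(x) - \hat f(x) = \sum_{i=1}^N \bigl(p(u_i) - \hat p(u_i)\bigr)x_i$. Because exactly $K$ of the coordinates $x_i$ equal $1$ and each summand is at most $\epsilon$ in absolute value by hypothesis, we get $|f(x) - \hat f(x)| \le K\epsilon$ for every feasible $x$. This is the only place the cardinality constraint $\sum_i x_i = K$ enters; the bounds $p,\hat p \in [0,1]$ and $d \in [0,D]$ merely ensure the objectives are finite, and $D$ does not appear in the final estimate precisely because the $\lambda d$ term cancels.

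Then I would chain three inequalities. Optimality of $x^{\ast}$ for $f$ gives $f(x^{\ast}) \ge f(\hat x^{\ast})$, so it suffices to upper-bound the nonnegative quantity $f(x^{\ast}) - f(\hat x^{\ast})$. Writing
\[
f(x^{\ast}) - f(\hat x^{\ast}) = \bigl(f(x^{\ast}) - \hat f(x^{\ast})\bigr) + \bigl(\hat f(x^{\ast}) - \hat f(\hat x^{\ast})\bigr) + \bigl(\hat f(\hat x^{\ast}) - f(\hat x^{\ast})\bigr),
\]
the first and third bracketed terms are each at most $K\epsilon$ by the uniform closeness just established, while the middle term is $\le 0$ by optimality of $\hat x^{\ast}$ for $\hat f$. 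Summing yields $0 \le f(x^{\ast}) - f(\hat x^{\ast}) \le 2K\epsilon$, which is the claimed bound $|f(\hat x^{\ast}) - f(x^{\ast})| \le 2K\epsilon$. There is no genuinely hard step here: once the uniform bound $|f - \hat f| \le K\epsilon$ on feasible points is in place, the result is a one-line sandwich. The only subtlety to track carefully is that $\lambda d(x)$ really does cancel in $f(x) - \hat f(x)$, so the error is controlled purely by the $K$ selected preference estimates rather than by $D$ or $\lambda$.
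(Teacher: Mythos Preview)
Your proof is correct and follows essentially the same approach as the paper: establish the uniform bound $|f(x)-\hat f(x)|\le K\epsilon$ on feasible points and then chain it with the two optimality conditions to obtain $f(x^{\ast})-f(\hat x^{\ast})\le 2K\epsilon$. The paper's version is terser (a single inequality chain $f(\hat x^{\ast}) \ge \hat f(\hat x^{\ast}) - K\epsilon \ge \hat f(x^{\ast}) - K\epsilon \ge f(x^{\ast}) - 2K\epsilon$), but the argument is identical in substance.
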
\label{prop:llm-accuracy}

A simple proof is in Appendix~\ref{sec:proof-prop-1}. Thus, the error can be bounded as a linear function of the number of items selected and the maximum item-level error of the LLM.

\subsection{Constrained Menu Design Formulation}\label{sec:constrained-menu-design}
Now we show how to implement this framework for the setting of constrained menu design. First, we prompt the LLM to generate a set $C$ of plant-based candidate recipes, though this could be modified to allow for also generating animal-based recipes at this stage. As in Section~\ref{sec:init-menu-design}, the LLM is also instructed to not worsen cost, nutrition, or preparation time. We then combine this set with the recipes in the original menu $O$, where $K = |O|$, to form the ground set $U = C \cup O$. We then estimate $p(u_i)\: \forall u_i \in U$ by prompting an LLM to rate each recipe title in the ground set by expected preferences of the population of interest (here, American omnivores). Then, integer quadratic programming (IQP) is applied to select a subset of recipes from $U$ for the menu.

More specifically, we maximize expected satisfaction with the selected recipes, plus a $\lambda d(x)$ term, where $\lambda$ controls the weight on diversity, to encourage a diverse set of recipes, subject to constraints on climate impacts and animal welfare. This approach easily extends to other constraints such as nutrition and cost. The predicted preferences $\hat{p}(u_i)$ are used to compute expected emissions and animal welfare impacts of choices by weighting more preferred options more highly, under the assumption that these options will be chosen more often. 
Our problem formulation is thus as follows:
\begin{equation}
\begin{aligned}
\max_{x} \quad & \sum_{i=1}^N \hat{p}(u_i)x_i + \lambda d(x)\\
\textrm{s.t.} \quad & E_{\hat{p}}[l_j(x)] \leq C_j E_{\hat{p}}[l_j(x_O)] \: \forall j \in [1, 2] \\
& \sum_{i=1}^N x_i = K  \\
& x_i \in \{0,1\}\:\forall i \in [1, ..., N] \\
\end{aligned}\label{eq:menu-formulation}
\end{equation}
$x_O$ corresponds to the subset of recipes in the original menu. $E_{\hat{p}}[l_j(x)]$ computes the expected emissions or animal welfare impacts of the chosen subset of recipes, with respect to the probability distribution associated with $\hat{p}(u_i)$. 

We set $d(x)$ to be negative pairwise similarity between the recipes in the selected subset. This yields an IQP since all other terms are linear in $x$. 
$C_i$ controls the reduction in emissions or animal welfare, and is set to 0.25 for emissions and 1.0 for animal welfare, corresponding to reducing emissions by at least 75\% while maintaining or improving upon animal welfare. We set $K=36$, to match the length of the original menu from~\citet{banerjee2023sustainable}. We prompt the LLM to generate 20 additional recipes. Thus, $N=56$. Finally, we set $\lambda=100$. For computing pairwise recipe similarity we use the Ratcliff/Obershelp sequence matching algorithm as implemented in Python's difflib \citep{ratcliff1988pattern}. The recipes were ordered first by whether they were LLM generated, and second by predicted rating. We included a baseline in which the original menu was simply re-ordered so that vegetarian items were first, to account for the possibility that benefits of the LLM+IQP approach were simply due to placing the LLM generated (and therefore plant-based) recipes first.

\begin{figure*}[ht!]
\centering
\includegraphics[width=0.85\textwidth]{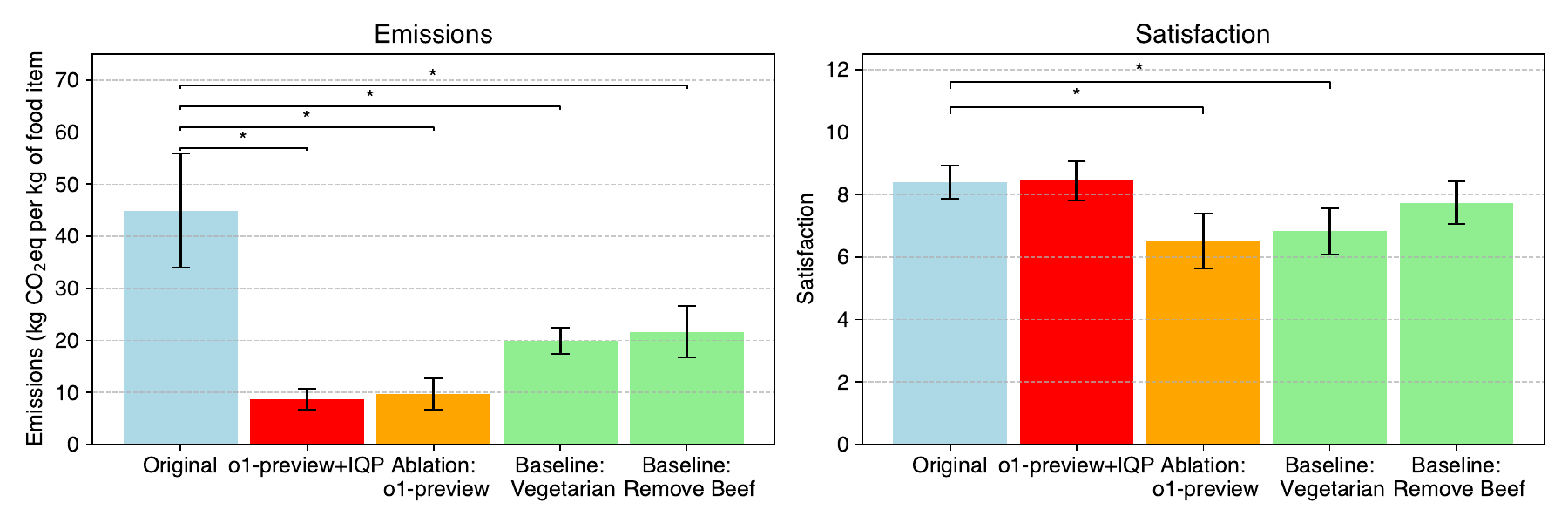}
\caption{Our o1-preview+IQP approach reduces GHG emissions of food choices by 79\% compared to the original menu from~\citet{banerjee2023sustainable} while maintaining participants' satisfaction with their set of choices. o1-preview+IQP additionally outperforms the baselines of removing meat from the original menu (``Vegetarian") and removing beef (``Remove Beef"), and maintains animal welfare and other satisfaction metrics, as shown in Appendix~\ref{appendix:task4}. As an ablation we also prompt o1-preview directly to revise the menu. Error bars are 95\% confidence intervals. Statistical significance was assessed with a $t$-test, with Bonferroni correction for multiple hypothesis testing.}
\label{fig:task4}
\end{figure*}

We then evaluated our approach via two human subjects experiments (total $n=800$) in which participants recruited via Prolific, constrained to be in the United States and fluent English speakers, were instructed to make a selection from a menu. In the first experiment, the menu was randomly assigned from one of the five LLMs (augmented with IQP) and three baselines (original menu, original menu with beef removed, and a vegetarian subset of the original menu).\footnote{Removal of beef is a common strategy for reducing climate impacts~\cite{grummon2023simple}.} In the second experiment, conducted the following day, the menu was randomly assigned from one of three additional baselines (\opreview\:on its own, without IQP; \opreview+IQP, but with the \opreview\:generated descriptions replaced with a simple list of ingredients; and placing the vegetarian items first in the original menu). The participants were paid \$16 per hour. The evaluation metrics consisted of satisfaction metrics (satisfaction with their set of choices, likelihood of visiting, and likelihood of recommending the restaurant to a friend), emissions of food choices, and animal welfare impacts of food choices. 

\subsection{Results}
The results for the best LLM, o1-preview, are shown in Figure~\ref{fig:task4}, with results for other LLMs and baselines in Appendix~\ref{appendix:task4}. When controlling for dietary preference, age, gender, and race, we estimate that the \opreview+IQP approach reduces emissions by 79\% while maintaining patron satisfaction with their set of choices ($p$=1.37e-18). We obtain similar results when removing vegan and vegetarian participants (Appendix~\ref{appendix:task4}). We observe a small (5\%) reduction in the likelihood of recommending the restaurant to a friend, and a small increase in animal usage (3.5\%), but these were not statistically significant. The original, baseline, and generated menus are shown in Appendix~\ref{sec:menus}. We also evaluated the accuracy of the LLM's predicted preference ratings, in the same pairwise ranking fashion as the sensory profile and recipe preference prediction tasks, with the best LLM, \claude, achieving 73\% accuracy (Table~\ref{tab:sensory-profile-prediction-results}). This increases to 85\% in the highest quartile of the preference gap (Appendix Table~\ref{tab:menu-preference-rating-q4}). 
An expert human baseline was collected from a collaborator with deep domain expertise (32 years of culinary experience), who is also a co-author on this paper. The results are shown in Appendix Table~\ref{tab:menu-design-all-results}.



\section{Discussion}
Our results identify both strengths and weaknesses of LLMs in the domain of sustainable food. One actionable finding is that, as determined by expert food scientists, o1-preview appears to be useful, even superior to fellow expert human food scientists, in generating ideas for revision of sustainable protein products in response to sensory panel feedback. Our evaluation suggests that the benefits may be driven by increased specificity in particular, relative to the human food scientists (Figure~\ref{fig:task2}). LLMs display a weakness in the menu design task, where they are instructed to satisfy multiple constraints. They generate fully plant-based menus, which we find to significantly reduce satisfaction. We remedy this by integrating LLMs with combinatorial optimization techniques. Our approach achieves a 79\% emissions reduction while maintaining patron satisfaction (Figure~\ref{fig:task4}). Our work also contributes to the literature on reasoning in LLM-based systems (further discussed in Appendix~\ref{sec:extended-related-work}), providing an approach for solving a broad class of optimization problems that cannot be easily specified mathematically, e.g. those involving human preferences. Future work can apply this framework to other domains such as education and health. 


We evaluated LLMs on three types of preference prediction. The first is sensory profile prediction, in which we compared LLM predictions to evaluations from an actual taste tasting. The second is the \url{Food.com} recipe rating task, where we compared LLM predictions to online recipe ratings. The third is recipe rating in the context of the menu design experiment, where the LLM’s recipe-level predictions are compared against the order frequency - the frequency with which people chose the dishes in our human subjects experiment - for the recipes that were selected for the final menu. For the holistic measures of overall satisfaction and purchase intent in the sensory profile prediction task, no LLMs outperform a random baseline, though some do on specific dimensions such as greasiness or saltiness. For online recipe rating, three of six LLMs outperform a random baseline. All evaluated LLMs outperform a random baseline in context of the menu design experiment, where no actual tasting occurs and the orders are purely based on the text of the recipe (Table~\ref{tab:sensory-profile-prediction-results}). Additionally, we studied how performance varies with the magnitude of the ground truth preference gap. When restricting the evaluated recipe pairs to be in the top quartile of preference gaps, LLM performance improves, with the best LLMs achieving 81\%, 86\%, and 85\% accuracy respectively in the three prediction tasks (Appendix Tables~\ref{tab:menu-preference-rating-q4},\ref{tab:sensory-profile-prediction-results-q4},\ref{tab:task3a}). Thus, our analysis suggests that LLMs can be useful for coarse-grained preference prediction tasks, but that further work is needed for fine-grained prediction, particularly for the mapping from recipes or product formulations to actual taste.
Additionally, consistent with past research on LLMs' modeling of minority preferences~\cite{santurkar2023whose}, our results showed that LLMs model human preferences related to plant-based food less accurately. Thus, regular human feedback, e.g. taste testings, from diverse populations remains critical. 
Overall, these results suggest that LLMs can be useful in experimental design, generation of sustainable recipes (as evaluated in the menu design task), and coarse-grained preference prediction. 
\noindent \textbf{Limitations.} We did not do significant prompt engineering, and leave this as an area for future work.
In the sustainable protein tasks, we did not have access to the full product formulation, only ingredients and nutritional information. However, we consider this a positive, that users do not necessarily have to provide sensitive intellectual property to receive useful outputs from LLMs.
In our menu design task, we did not provide incentives to make a realistic choice, such as delivering the meals to participants. Additionally, we did not study dimensions of food choice such as food waste and portion size.
A stronger baseline for this task would have been a human chef given the same menu and constraints, though we note that our ``Vegetarian" baseline is based on the top vegetarian meals on online delivery apps.
Future work could establish an expert human (food scientist or chef) baseline for the preference prediction tasks.
In the experimental design task, implementing the suggestions and running a second sensory panel to determine whether the suggestions improve the product is an area for future work.




\section*{Impact Statement}
Our paper is directly motivated by broader societal impacts, specifically in sustainability and public health. Our task definitions and results could spur progress in both development of novel sustainable foods and greater adoption of existing sustainable foods.
The potential benefits of a sustainable protein transition are significant. For example, a recent life cycle assessment found that, relative to a beef burger patty, the production of a plant-based burger patty made with soy protein is associated with 98\%, 87\%, and 99\% lower GHG emissions, land use, and air pollution respectively~\cite{saerens2021life}, with similar estimates for other types of sustainable protein~\cite{gfiEnvironmentalImpacts}. Additionally, the Intergovernmental Panel on Climate Change (IPCC) estimates that a societal transition from animal-based food (ABF) to plant-based food (PBF) by 2050 would reduce land-based GHG emissions by two-thirds relative to the business-as-usual case \cite{ipcc2019}. 

In the menu design task, we consider ethical aspects such as the trade-offs among patron satisfaction, sustainability, and animal welfare, and show that our approach can improve or maintain all dimensions. Designing menus to shift the distribution of patron choices may raise questions of respect for individual autonomy. We thus use three different metrics of patron satisfaction, and show that all are maintained with the revised menu (Appendix Figures~\ref{fig:task4-satisfaction},\ref{fig:task4-all-visit},\ref{fig:task4-all-recommend}). 

We also note that current sustainable proteins are sometimes considered ``ultra-processed" and therefore unhealthy~\cite{nytimesFakeMeat}. We view this as a challenge for the next generation of sustainable proteins that AI could potentially help with, via jointly optimizing for satisfaction, nutrition, climate impacts, cost, and other relevant factors. 
Use of LLMs for developing unhealthy or less sustainable foods is also a concern. We acknowledge that such work is likely ongoing, and we study here how LLMs can also be useful for applications that benefit society. 

Lastly, the benefits of LLM-supported development of sustainable food should be interpreted jointly with the environmental impacts of LLMs themselves (such as energy and water use). While pre-training in particular requires significant resources~\cite{van2021sustainable}, we note that work to reduce the climate impacts of LLMs is ongoing~\cite{patterson2021carbon,li2024sprout}. Future research can continue to explore ways to leverage pre-trained models for applications that contribute to positive environmental outcomes. 

\section*{Acknowledgments}
This work is supported by the Stanford Plant-Based Diet Initiative (PBDI). Anna Thomas is supported by National Institutes of Health Grant R01LM013866. Kristina Gligori\'c is supported by the Swiss National Science Foundation (Grant P500PT-211127). We thank Caroline Cotto and Max Elder for access to the NECTAR dataset, and Moses Charikar, Ali Teshnizi, Madeleine Udell, and Katie Yoon for helpful discussions. We thank Yifan Mai and the Center for Research on Foundation Models (CRFM) for their support.

\bibliography{example_paper}
\bibliographystyle{icml2025}

\newpage
\appendix




\onecolumn

\section{Proof of Proposition \ref{prop:llm-accuracy}}\label{sec:proof-prop-1}
The problem we would like to solve is:
\begin{equation}
\begin{aligned}
\max_{x} \quad & f(x) := \sum_{i=1}^N p(u_i)x_i + \lambda d(x)\\
& \sum_{i=1}^N x_i = K  \\
& x_i \in \{0,1\}\:\forall i \in [1, ..., N]
\end{aligned}\label{eq:ideal-prob}
\end{equation}

However, we do not have access to the true scores $p(u_i)$, e.g. actual satisfaction of a specified population with a recipe. Instead, we use an LLM to produce the estimates $\hat{p}(u_i)$, and solve:

\begin{equation}
\begin{aligned}
\max_{x} \quad & \hat{f}(x) := \sum_{i=1}^N \hat{p}(u_i)x_i + \lambda d(x)\\
& \sum_{i=1}^N x_i = K  \\
& x_i \in \{0,1\}\:\forall i \in [1, ..., N]
\end{aligned}\label{eq:actual-prob}
\end{equation}

Let $x^{\ast}$ be the optimal solution of (\ref{eq:ideal-prob}), and let $\hat{x}^{\ast}$ be the optimal solution of (\ref{eq:actual-prob}).

\begin{proposition}
Let $p(u_i), \hat{p}(u_i) \in [0,1]$, and $d(x) \in [0,D]$. If $|\hat{p}(u_i) - p(u_i)| \leq \epsilon$ for all $u_i \in U$, then $|f(\hat{x}^{\ast}) - f(x^{\ast})| \leq 2K\epsilon$.
\end{proposition}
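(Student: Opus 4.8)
The plan is to bound the suboptimality of $\hat{x}^{\ast}$ by comparing the true objective $f$ and the estimated objective $\hat{f}$ pointwise on all feasible solutions, and then using optimality of each solution with respect to its own objective. The key observation is that for any feasible $x$ (i.e.\ any $x \in \{0,1\}^N$ with $\sum_i x_i = K$), the two objectives differ only through the linear term, since the $\lambda d(x)$ term is identical in both. Hence
\begin{equation*}
|f(x) - \hat{f}(x)| = \left| \sum_{i=1}^N \bigl(p(u_i) - \hat{p}(u_i)\bigr) x_i \right| \leq \sum_{i=1}^N |p(u_i) - \hat{p}(u_i)|\, x_i \leq \epsilon \sum_{i=1}^N x_i = K\epsilon,
\end{equation*}
using $|\hat{p}(u_i) - p(u_i)| \leq \epsilon$ and the cardinality constraint $\sum_i x_i = K$. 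So $f$ and $\hat{f}$ are uniformly within $K\epsilon$ of each other on the feasible set.

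Next I would chain the inequalities in the standard way. Since $\hat{x}^{\ast}$ maximizes $\hat{f}$ over the feasible set and $x^{\ast}$ is feasible, we have $\hat{f}(\hat{x}^{\ast}) \geq \hat{f}(x^{\ast})$. Combining this with the uniform bound applied at both $x^{\ast}$ and $\hat{x}^{\ast}$:
\begin{equation*}
f(x^{\ast}) - f(\hat{x}^{\ast}) \leq \bigl(\hat{f}(x^{\ast}) + K\epsilon\bigr) - \bigl(\hat{f}(\hat{x}^{\ast}) - K\epsilon\bigr) = \hat{f}(x^{\ast}) - \hat{f}(\hat{x}^{\ast}) + 2K\epsilon \leq 2K\epsilon.
\end{equation*}
Since $x^{\ast}$ is optimal for $f$ and $\hat{x}^{\ast}$ is feasible, we also have $f(x^{\ast}) \geq f(\hat{x}^{\ast})$, so $f(x^{\ast}) - f(\hat{x}^{\ast}) \geq 0$. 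Together these give $|f(\hat{x}^{\ast}) - f(x^{\ast})| = f(x^{\ast}) - f(\hat{x}^{\ast}) \leq 2K\epsilon$, which is the claim.

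This argument is entirely routine — it is the classic "optimize a perturbed objective" stability bound — so there is no real obstacle; the only things to be careful about are (i) noting that the diversity term $\lambda d(x)$ cancels exactly in the difference $f(x) - \hat{f}(x)$ so that $D$ plays no role in the final bound (it is presumably included in the hypotheses only to make the setting concrete), and (ii) that both $x^{\ast}$ and $\hat{x}^{\ast}$ lie in the same feasible set, so each can be plugged into the other's optimality inequality. I would also remark that the $[0,1]$ bound on $p, \hat{p}$ is not strictly needed for this particular inequality given the $\epsilon$ hypothesis, but it is natural to state. The proof would then be just the two displayed computations above plus a sentence of setup.
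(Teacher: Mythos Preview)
Your proposal is correct and follows essentially the same approach as the paper: first establish the uniform bound $|f(x)-\hat{f}(x)|\le K\epsilon$ over all feasible $x$, then chain the optimality inequalities for $x^\ast$ and $\hat{x}^\ast$ to obtain $2K\epsilon$. Your write-up is in fact more explicit than the paper's (you spell out why the diversity term cancels and why the absolute value is justified), but the argument is the same.
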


\begin{proof}
First, note that $\forall x, |f(x) - \hat{f}(x)| \leq K\epsilon$.
Then, combining this with the optimality of $x^{\ast}$ and $\hat{x}^{\ast}$, $f(\hat{x}^{\ast}) \geq \hat{f}(\hat{x}^{\ast}) - K\epsilon \geq \hat{f}(x^{\ast}) - K\epsilon \geq f(x^{\ast}) - 2K\epsilon$.
\end{proof}

\section{Extended Related Work}\label{sec:extended-related-work}
\paragraph{Large Language Models and Reasoning.} Our work broadly pertains to the integration of LLMs with other tools for reasoning tasks. \citet{pmlr-v235-ahmaditeshnizi24a} develop an agent called OptiMUS for formulating and solving (mixed integer) linear programming problems from their natural language descriptions. Their goal is to expand access to solvers to individuals who lack the expertise required to formulate and solve these problems. \citet{jiang2024llmopt} propose a five-element formulation for parameterizing optimization problems and fine-tune LLMs for generating five-element formulations and solver code from natural language descriptions of optimization problems. \citet{ye2024satlm} propose SatLM, which combines LLMs with SAT solvers to solve problems in arithmetic reasoning, logical reasoning, and symbolic reasoning. \citet{pan-etal-2023-logic} propose Logic-LM, which integrates LLMs with symbolic solvers to improve logical reasoning. Relative to these works, our setting is fundamentally different, where the underlying optimization problem is not fully specified due to involving subjective human preferences or other components that are difficult to formalize mathematically. We are specifically motivated by the food domain, in which human satisfaction must be considered (alongside other factors such as nutrition, cost, emissions, etc.), but is difficult to write down mathematically. Similar problems arise in settings such as education (e.g. curriculum design) and health (e.g. fitness regimen design). Our framework is also applicable to settings where it may take considerable effort to formally specify the optimization problem, e.g. computing climate impacts as well as cost, nutrition, etc. for each food item, but an LLM may be able to generate sufficiently accurate estimates. We also note past work in the natural language processing literature on combining integer programming with machine learning, e.g.~\citet{denis2007joint}, which used integer linear programming for coference resolution.
\paragraph{Large Language Models for Climate, Sustainability, and Food.}
\citet{pmlr-v235-bulian24a} propose an evaluation framework for LLMs in the domain of climate information.~\citet{morio2023an} create a benchmark dataset for assessing corporate climate policy engagement and evaluate several language models. 
\citet{huang2024foodpuzzle} introduce the FoodPuzzle dataset and create an LLM-based agent for flavor profile prediction. They define two tasks: molecular food prediction, in which the goal is to predict food sources based on their molecular composition, and molecular profile completion, in which the goal is to identify the missing molecules needed to complete the molecular profile of a given food item. Within our paper, the most closely related task is sensory profile prediction.
Our work differs in that we evaluate LLMs on their ability to predict dimensions otherwise evaluated by a human sensory panel, such as overall satisfaction, purchase intent, meatiness, etc. prevalent in the current practice of product development in food science. Sensory panels are essential but also very expensive and time consuming to both run and analyze. 
Additionally, our evaluation specifically focuses on sustainable protein rather than general food science. Finally,~\citet{huang2024foodpuzzle} develop an agent based on in-context learning and retrieval augmented generation, whereas we evaluate LLMs on our sustainable protein tasks via zero-shot prompting, as well as integration with combinatorial optimization.
\paragraph{Data-Driven Optimization} Our work also pertains to the literature on data-driven optimization. Previous work has proposed the ``predict-then-optimize" framework in which a predictive model is used to define the parameters of an optimization problem~\cite{elmachtoub2022smart,bertsimas2020predictive}. We extend the predict-then-optimize framework in the setting of combinatorial optimization by incorporating a component where the elements of the ground set (e.g. recipes) are generated, and applying LLMs to both the generation and prediction steps.  
\paragraph{Artifical Intelligence for Climate, Sustainability, and Food.} 
Previous work has studied climate- and environment- related applications of artificial intelligence~\cite{tuia2022perspectives,chapman2024biodiversity,madadkhani2024tackling,kaack2022aligning}. Within the food domain, previous work has explored how AI can reduce energy use in plant factories to support sustainable food production~\cite{decardi2024artificial}. \citet{st2024mechanical} demonstrated that a combination of mechanical testing and ML can describe food texture in a similar manner to human taste testers.

\section{Supplementary Information on Datasets}
\subsection{NECTAR Dataset}\label{app:nectar}
The NECTAR dataset was collected over the period of June to August 2023 in Precision Research's Chicago research center. It is only available to academic researchers to reduce harm to the reputations of companies whose products performed poorly in the taste testing. Additional information can be found at \url{https://www.nectar.org/taste}.
It consists of 5,516 sensory evaluations of 47 products, and will continue to grow in size over the coming years. Each sensory evaluation consists of 21 dimensions, shown in Table~\ref{tab:sensory-dims}. 
The products were prepared in a test kitchen according to manufacturers' instructions.
Plant-based product selection criteria were based on popularity, availability (i.e. distributed in-market at the time of the test), and
similarity to analog animal offerings (i.e. veggie burgers made from whole plants were not included, whereas plant-based burgers
aiming to mimic the eating experience of animal-based burgers were included).
All tasting was blind and monadic (one product at a time), and the panel was untrained.


\begin{table}[h!]
\centering
\begin{tabular}{lrccccc}
\toprule
Sensory Dimension              & Question Type & Burger & Hot Dog & Bacon & Chicken Nuggets & Chicken Tenders  \\ 
\midrule
Appearance         & 7 pt Likert Scale    & X & X & X &X & X      \\
Color & 7 pt Semantic Differential & X & X & X &X & X\\
Flavor Liking       & 7 pt Likert Scale   & X & X & X &X & X        \\ 
Spiciness       & 7 pt Semantic Differential   & X & X & X &X & X        \\ 
Like           & Open Ended   & X & X & X &X & X        \\ 
Dislike & Open Ended  & X & X & X &X & X         \\
Overall Liking      & 7 pt Likert Scale & X & X & X &X & X          \\ 
Meatiness   & 7 pt Semantic Differential & X & X & X &X & X          \\ 
Greasiness   & 7 pt Likert Scale   & X & X & X &X & X        \\ 
Juiciness   & 7 pt Semantic differential   & X & X & X &X & X        \\ 
Smokiness      & 7 pt Semantic Differential   &  & X & X & &         \\ 
Sweetness        & 7 pt Semantic Differential &  & X & X & &           \\ 
Saltiness       & 7 pt Semantic Differential   &  & X & X &X & X         \\ 
Crispiness       & 7 pt Semantic Differential   &  &  & X &X & X        \\ 
Aftertaste Strength       & 5 pt Likert Scale & X & X & X &X & X          \\ 
Aftertaste       & 5 pt Semantic Differential    & X & X & X &X & X       \\ 
Purchase Intent       & 5 pt Likert Scale   & X & X & X &X & X        \\ 
Texture: Liking     & 7 pt Likert Scale    & X & X & X &X & X        \\
Chewiness     & 7 pt Semantic Differential   & X & X & X &X & X        \\ 
Firmness     & 7 pt Semantic Differential   & X & X & X &X & X        \\ 
Breading Flavor     & 7 pt Semantic Differential   &  &  &  &X & X        \\ 
\bottomrule
\end{tabular}
\caption{Overview of sensory dimensions in NECTAR dataset, by category of sustainable protein product.}\label{tab:sensory-dims}
\end{table}


\subsection{\url{Food.com} Recipe Dataset}\label{app:food.com}
Figure~\ref{fig:example-food.com-recipe} shows an example recipe from the \url{Food.com} dataset.

\begin{figure}[h!]
\begin{tcolorbox}[colback=lightgray, colframe=gray, coltitle=black, title=Example Recipe]
Name:  Carina's Tofu-Vegetable Kebabs\\
Description:  This dish is best prepared a day in advance to allow the ingredients to soak in  the marinade overnight.\\
Ingredients:  extra firm tofu, eggplant, zucchini, mushrooms, soy sauce, low sodium soy sauce, olive oil, maple syrup, honey, red wine vinegar, lemon juice, garlic cloves, mustard powder, black pepper.\\
Instructions:\\
1. Drain the tofu, carefully squeezing out excess water,  and pat dry with paper towels.\\
2. Cut tofu into one-inch squares.\\
3. Set aside.  Cut  eggplant lengthwise in half, then cut each half into approximately three strips.\\
4. Cut strips crosswise into one-inch cubes.
5. Slice zucchini into half-inch thick  slices.\\ 6. Cut red pepper in half, removing stem and seeds, and cut each half into  one-inch squares.\\ 7. Wipe mushrooms clean with a moist paper towel and remove  stems.\\
8. Thread tofu and vegetables on to barbecue skewers in alternating color  combinations: For example, first a piece of eggplant, then a slice of tofu, then zucchini, then red pepper, baby corn and mushrooms.\\
9. Continue in this way until  all skewers are full.\\
10. Make the marinade by putting all ingredients in a  blender, and blend on high speed for about one minute until mixed.\\
11. Alternatively, put all ingredients in a glass jar, cover tightly with the lid  and shake well until mixed.\\
12. Lay the kebabs in a long, shallow baking pan or on  a non-metal tray, making sure they lie flat. Evenly pour the marinade over the  kebabs, turning them once so that the tofu and vegetables are coated.\\
13. Refrigerate the kebabs for three to eight hours, occasionally spooning the  marinade over them.\\
14. Broil or grill the kebabs at 450 F for 15-20 minutes, or on the grill, until the vegetables are browned.\\
15. Suggestions  This meal can be served over cooked, brown rice. Amounts can easily be doubled to make four servings.
\end{tcolorbox}\label{fig:example-food.com-recipe}
\caption{Example recipe from the \url{Food.com} dataset.}
\end{figure}

We also show some example pairs from this dataset.

\begin{figure}[h!]
\begin{tcolorbox}[colback=lightgray, colframe=gray, coltitle=black, title=Example Pair 1: Recipe 1]
Corned Beef Dinner - Crock Pot\\
Ingredients: corned beef brisket, onion, celery, potatoes, carrots, bay leaf, garlic clove, Worcestershire sauce, dry mustard, cabbage, caraway seed.\\
Instructions: Trim brisket of all visible fat cut to fit 4 qt or larger crockpot if necessary. Place onion celery potatoes and carrots in bottom of crockpot lay brisket on top. Whisk together bouillon bay leaf garlic Worcestershire sauce and dry mustard. Pour over brisket cover pot. Cook on low setting for 8 to 10 hours adding cabbage wedges and caraway seed for the last hour of cooking. To serve discard cooking liquid slice meat onto hot serving plates accompany with the cooked potatoes carrots celery cabbage wedges and your favourite mustard.
\end{tcolorbox}

\begin{tcolorbox}[colback=lightgray, colframe=gray, coltitle=black, title=Example Pair 1: Recipe 2]
Crock Pot Corned Beef and Cabbage\\
Ingredients: corned beef brisket, onions, cabbage, pepper, vinegar, sugar, water.\\
Instructions: Combine ingredients in crock pot with cabbage on top. Cut meat to fit if necessary. Cover and cook on low 10-12 hours; high 6-7 hours or auto 6-8 hours.
\end{tcolorbox}

\caption{Example pair 1 from the \url{Food.com} dataset.}
\label{fig:example-food.com-recipe-pair-1}
\end{figure}

\begin{figure}[h!]
\begin{tcolorbox}[colback=lightgray, colframe=gray, coltitle=black, title=Example Pair 2: Recipe 1]
Oven-Fried Chicken Chimichangas\\
Ingredients: salsa, ground cumin, dried oregano leaves, cheddar cheese, green onions, flour tortillas, margarine, cheddar cheese, green onion.\\
Instructions: Mix chicken picante sauce or salsa cumin oregano cheese and onions. Place about 1/4 cup of the chicken mixture in the center of each tortilla. Fold opposite sides over filling. Roll up from bottom and place seam-side down on a baking sheet. Brush with melted margarine. Bake at 400°F for 25 minutes or until golden. Garnish with additional cheese and green onion and serve salsa on the side.	
\end{tcolorbox}

\begin{tcolorbox}[colback=lightgray, colframe=gray, coltitle=black, title=Example Pair 2: Recipe 2]
Chicken Rice Casserole\\
Ingredients: butter, flour, dried thyme, chicken broth, milk, salt, black pepper, cooked rice, parmesan cheese.\\
Instructions: Melt 3 tablespoons of butter in a saucepan over medium heat whisk in flour and thyme and cook for 1 minute. Gradually stir in broth and milk stirring until thick and smooth. Stir in chicken and add salt and pepper; set aside. Grease a 1 1/2 quart shallow baking dish. Spread rice in prepared baking dish sprinkle with the peas and then pour creamed chicken mixture over. Dot with remaining tablespoon of butter and sprinkle with bread crumbs and cheese which you mix together. Bake at 400° until hot and bubbly about 20 to 25 minutes.
\end{tcolorbox}

\caption{Example pair 2 from the \url{Food.com} dataset.}
\label{fig:example-food.com-recipe-pair-2}
\end{figure}

\begin{figure}[h!]
\begin{tcolorbox}[colback=lightgray, colframe=gray, coltitle=black, title=Example Pair 2: Recipe 1]
Bow Tie Pasta With Broccoli and Broccoli Sauce\\
Ingredients: broccoli, water, bow tie pasta, butter, salt.\\
Instructions: Cut the broccoli buds into small florets leaving as short a stem as possible; set aside; cut the remaining broccoli stems into 1/2\"" pieces; boil them covered in the 1 1/4 cups water lightly salted for 15 minutes or until very soft. Meanwhile boil the pasta in salted water until al dente; drain reserving 1/2 cup of the cooking water; return the pasta to the pot and stir until all moisture is evaporated. Place the reserved florets into a small pot and steam or boil 5 minutes retaining the bright green color and a bit of crunch. In a food processor puree the cooked broccoli and its cooking water until very smooth adding a bit of the reserved pasta water if needed to make a smooth sauce; add the butter and pulse until melted; add salt to taste. Pour warm sauce over the pasta and stir gently to combine; plate and sprinkle with drained florets.
\end{tcolorbox}

\begin{tcolorbox}[colback=lightgray, colframe=gray, coltitle=black, title=Example Pair 2: Recipe 2]
Lemon Garlic Pasta\\
Ingredients: garlic, extra virgin olive oil, lemon juice, chicken flavor instant bouillon, pepper, spaghetti, angel hair pasta, parmesan cheese, parsley.\\
Instructions: In large skillet cook garlic in olive oil until golden. Add lemon juice bouillon and pepper. Cook and stir until bouillon dissolves. In large bowl toss pasta garlic mixture cheese and parsley; serve immediately.
\end{tcolorbox}

\caption{Example pair 3 from the \url{Food.com} dataset.}
\label{fig:example-food.com-recipe-pair-3}
\end{figure}

\section{Supplementary Methods}\label{app:methods}

\begin{table*}[h!]
\small
\centering
\begin{tabular}{p{0.2\textwidth}|p{0.2\textwidth}|p{0.2\textwidth}|p{0.25\textwidth}}
\toprule
\textbf{Task}                                            & \textbf{Data}     & \textbf{Techniques Used}                &\textbf{ Evaluation Metrics}                                                                          \\ 
\midrule

\hlyellow{\textbf{Sustainable protein: experimental design }}           & NECTAR   & Zero-shot prompting            & Accuracy, Specificity, Complementarity, Time Saved (Expert Human Food Scientist Evaluation) \\ \midrule

\hlblue{\textbf{Traditional foods: menu design }}              & Menu from \citet{banerjee2023sustainable}, emissions data from \citet{poore2018reducing}, animal welfare data from \citet{faunalyticsImpactReplacing} & Zero-shot prompting, integer quadratic programming  & Satisfaction (Human Subjects Evaluation), Emissions,  Animal Welfare (Automated)                      \\ \midrule
\hlred{\textbf{Sustainable protein: sensory profile prediction}} & NECTAR   & Zero-shot prompting & Accuracy (Automated)                                                                        \\ \midrule
\hlgreen{\textbf{Traditional foods: recipe preference prediction}}         & Food.com & Zero-shot prompting            & Accuracy (Automated)                                                                        \\ 
\bottomrule
\end{tabular}
\caption{Overview of tasks, techniques used, and evaluation metrics.}
\label{tab:selected tasks}
\end{table*}

\subsection{Experimental Design}\label{sec:additional-methods-exp-design}
The 30 sustainable protein products used in this task consisted of six products from each of the five categories. Further information on the food scientists is in Table~\ref{tab:food-scientist-properties}.
The prompt for Phase 1 is shown in Figure~\ref{fig:llm-prompt-phase-1}, and 
our standardization prompt is shown in Figure~\ref{fig:llm-prompt-standardization}. 
Precise definitions of the dimensions of evaluation in Phase 2 are in Section~\ref{sec:eval-dims-alt-protein-design}. In Phase 1, 15 food scientists participated, each evaluating 2 products. In Phase 2, again 15 food scientists participated, each evaluating 2 products (with one human and one LLM response per product), with an overlap of 10 food scientists between Phase 1 and Phase 2. 

\begin{figure}[h!]
\begin{tcolorbox}[colback=lightyellow, colframe=yellow, coltitle=black, title=Prompt]
You are an expert plant-based meat food scientist. You have devised a $<$\textit{category}$>$ product with the following ingredient list: $<$\textit{ingredient list}$>$.\\
Additionally, it has the following nutritional information: $<$\textit{nutrition facts}$>$.\\
You ran a blind taste test of American omnivores and received the following quantitative feedback on your product: $<$\textit{quantitative feedback}$>$.\\
Additionally, you received the following qualitative feedback about what people liked: $<$\textit{positive feedback}$>$.\\
You also received the following qualitative feedback about what people disliked: $<$\textit{negative feedback}$>$.\\
What changes would you consider making to your product? Could you design a set of experiments on the key areas that need improvement? You will be evaluated on metrics including accuracy and specificity.
\end{tcolorbox}
\caption{Prompt for Phase 1 of the experimental design task, in which \opreview\: and expert food scientists generated experimental designs for improving sustainable protein products on the basis of sensory panel feedback. For the food scientists, the first line was removed. \opreview\:was additionally instructed to limit its responses to 250 words, to approximately match the mean length of the food scientists' responses. Even though the mean original human response length was 311 words (Table~\ref{tab:exp-design-original-edited-lengths}), 250 was chosen in the instruction to \opreview\:because we noticed that \opreview\:tended to exceed the word limit.}\label{fig:llm-prompt-phase-1}
\end{figure}

\begin{figure}[h!]
\begin{tcolorbox}[colback=lightyellow, colframe=yellow, coltitle=black, title=Prompt]
You are a writing assistant specializing in editing writing produced by food scientists. I will give you some text to edit.\\
Instructions:\\
1. Convert all suggestions to a numbered list, with a title for each suggestion. Do not include any content that is not part of the numbered list.\\
2. Do not change the length.\\
3. Remove any references to the author’s personal experience or to other writing.\\
4. Rewrite it as if it could have come from either a human or LLM.\\
5. Use complete sentences.\\
6. Do not add any prefix like `Here is the edited text'. Just output the edited text.\\
7. Do not add or remove any of the meaning, unless necessary for following instruction \#3.\\
8. Do not use asterisks.\\
Here is the text:\\
$<$\textit{original text}$>$
\end{tcolorbox}
\caption{Prompt for style standardization step of the experimental design task.}\label{fig:llm-prompt-standardization}
\end{figure}

\begin{table}[h!]
\centering
\begin{tabular}{lll}
\toprule
& Original & Edited  \\ \hline
\opreview &                319.8 (37.7)&               327.6 (87.2) \\ \hline
Food Scientist         &                311.3 (223.9) &                311.6 (195.3) \\ 
\bottomrule
\end{tabular}
\caption{Original and edited lengths of \opreview's and food scientists' experimental designs. Standard deviation is in parentheses.}\label{tab:exp-design-original-edited-lengths}
\end{table}

\subsubsection{Properties of Expert Food Scientists}

20 food science experts were recruited via direct outreach on LinkedIn or email from our food scientist team member. Their properties are shown in Table~\ref{tab:food-scientist-properties}.

\begin{table}[h!]
\centering
\begin{tabular}{llll}
\toprule
& Phase 1 ($n=15$) & Phase 2 ($n=15$) & All ($n=20$) \\ \hline
Mean Years of Experience in Plant-Based Food Science &                4.33 (2.78)&                4.73 (5.30) & 5.05 (4.73)\\ \hline
Mean Years of Experience in Plant-Based Meat         &                2.23 (2.01)&                3.27 (3.82) & 3.08 (3.38)\\ 
\bottomrule
\end{tabular}
\caption{Properties of food scientists. 
Standard deviation is in parentheses. 10 food scientists participated in both Phase 1 and Phase 2, but we ensured that they did not evaluate their own experimental designs.}\label{tab:food-scientist-properties}
\end{table}

\subsubsection{Evaluation Dimensions}\label{sec:eval-dims-alt-protein-design}
The dimensions of evaluation were defined to the food scientists as follows:
\begin{itemize}
\itemsep0em 
\item \underline{Accuracy}: Are these suggestions likely to be useful in improving the product, to the best of your knowledge? Are they consistent with the provided feedback?
\item \underline{Specificity}: How detailed and actionable are the suggestions?
\item \underline{Complementarity}: Did this scientist see anything you missed? Would they be complementary as a brainstorming partner?
\item \underline{Time saved}: How much time, if any, do you think collaborating with this scientist would save you in designing a full (detailed and implementable) experimental design plan to improve this product? Please specify in hours. 
\end{itemize}

\subsection{Menu Design}\label{sec:appendix-methods-menu-design}
The original menu we use is from \citet{banerjee2023sustainable}, who chose the items from Deliveroo's and Just Eat's top 100 items ordered in the United Kingdom in 2019. They also adjusted the items following pilot surveys. The menu includes 19 non-vegetarian and 17 vegetarian or vegan items. The original paper says 18 non-vegetarian and 18 vegetarian or vegan options, but a dish called ``Panchetta Carbonara'' was included in the vegetarian section. For some items, the original menu included a ``(v)'' parenthetical indicating that a vegan option is available. We removed these parentheticals for simplicity in survey design and computing GHG emissions. Following~\citet{banerjee2023sustainable}, GHG emissions were estimated based on the main ingredient of the dish. Data from~\citet{poore2018reducing} were used to compute emissions. For animal welfare, data from Faunalytics, a leading animal welfare organization, on number of animals killed per kilogram of various meats were used ~\cite{faunalyticsImpactReplacing}. Two meats in the original menu, lamb and duck, were not contained in the Faunalytics data, and instead were imputed using the value for turkey, an animal of roughly similar size. Table~\ref{tab:menu-design-sample-sizes} shows the sample sizes per arm in the Prolific experiments.



\begin{table}[h!]
\centering
\begin{tabular}{lr}
\toprule
Arm              & Sample Size \\ 
\midrule
Original         & 50          \\
Vegetarian Subset       & 50          \\ 
Remove Beef         & 50          \\ 
Vegetarian First & 50          \\
o1-preview       & 49          \\ 
o1-preview+IQP   & 50          \\ 
o1-preview RD    & 50          \\ 
GPT-4o+IQP       & 50          \\ 
Llama+IQP        & 52          \\ 
Claude+IQP       & 50          \\ 
Gemini+IQP       & 51          \\ 
\bottomrule
\end{tabular}
\caption{Sample sizes per arm in Prolific randomized experiments for the menu design task. ``\opreview\:RD" is an ablation in which \opreview's descriptions were replaced with a simple list of ingredients.}\label{tab:menu-design-sample-sizes}
\end{table}

Other than the choice of dish, the participants were also asked the following questions:
\begin{itemize}
\itemsep0em
\item How satisfied are you with your set of choices? 1: not at all. 10: very satisfied.
\item How likely would you be to visit this restaurant, assuming it is affordable and a convenient distance from you? 1: not likely at all. 10: very likely. 
\item How likely would you be to recommend this restaurant to your friends? 1: not likely at all. 10: very likely.
\end{itemize}

\begin{figure}[h!]
\begin{tcolorbox}[colback=lightblue, colframe=midblue, coltitle=black, title=Prompt]
You are a brilliant chef experienced at creating sustainable and delicious food.\\
Here is a menu: $<$\textit{original menu}$>$\\  Please generate a revised menu, with the same number of recipes ($<$\textit{n}$>$) and no new ingredients other than tofu, lentils, mushrooms, chickpeas, eggs, and cheese.\\
Design the menu to achieve at least a 75\% CO$_2$ emissions reduction in people's choices while maintaining or improving patron satisfaction
 with their set of choices.\\ Patrons will be American omnivores. Emissions will be computed based on the main (first) ingredient.\\
 Please output each recipe in same format as this example:\\
 \textit{Tofu curry ramen}\\
 \textit{Fried tofu, noodles, curry broth, pak choi, pickled onions.}\\
 \textit{Appealing description.}\\
 The ingredients must be in order of usage, i.e the main ingredient must come first.\\
 Very important: you must only use ingredients in the original menu or the list above. For every ingredient, there must be an exact match in the original menu or the list above.\\
 Do not worsen cost, nutrition, animal welfare (number of animals used, computed based on the first ingredient), or preparation time.\\
 Do not include any stars, asterisks, hashtags, underscores. Do not number the recipes. Do not include any text other than recipe information, e.g. do not say `Here are the recipes'.
\end{tcolorbox}\label{fig:prompt-task4-direct}
\caption{LLM prompt used in the menu design ablation, in which \opreview\:on its own was asked to directly revise the original menu.}
\end{figure}

\begin{figure}[h!]
\begin{tcolorbox}[colback=lightblue, colframe=midblue, coltitle=black, title=Prompt]
You are a brilliant chef experienced at creating sustainable and delicious food. Here is a menu: $<$\textit{original menu}$>$\\
Please generate ($<$\textit{k}$>$) new, delicious, and diverse vegan or vegetarian dishes from this set of ingredients. You are also allowed to use tofu, lentils, mushrooms, chickpeas, eggs, and cheese.\\
 Patrons will be American omnivores.\\
 Please output in same format as this example:\\
 \textit{Tofu curry ramen}\\
 \textit{Fried tofu, noodles, curry broth, pak choi, pickled onions.}\\
 \textit{Appealing description.}\\
 The ingredients must be in order of usage, i.e the main ingredient must come first.\\
 Very important: you must only use ingredients in the original menu or the list above. For every ingredient, there must be an exact match in the original menu or the list above.\\
 Do not worsen CO$_2$ emissions, cost, nutrition, or preparation time. Emissions will be computed based on the main (first) ingredient.\\
 Do not include any stars, asterisks, hashtags, underscores. Do not number the recipes. Do not include any text other than recipe information, e.g. do not say `Here are the recipes'.
\end{tcolorbox}\label{fig:prompt-task4-iqp-recipe-gen}
\caption{LLM prompt used in the recipe generation stage of the LLM+IQP approach.}
\end{figure}

\begin{figure}[h!]
\begin{tcolorbox}[colback=lightblue, colframe=midblue, coltitle=black, title=Prompt]
Here are $<$\textit{r}$>$ recipes. Please rate them on a scale of 1-10 based on standard American omnivore taste preferences, 1 being unappealing and 10 being appealing. Output only a comma-separated list of $<$\textit{r}$>$  numbers, from 1 to 10.\\
$<$\textit{recipes}$>$
\end{tcolorbox}\label{fig:prompt-task4-iqp-recipe-rating}
\caption{LLM prompt used in the rating stage of the LLM+IQP approach.}
\end{figure}

\subsection{Sensory Profile Prediction}\label{sec:sensory-profile-prediction-app}
Figure~\ref{fig:sensory-profile-prompt} shows the prompt for the sensory profile prediction task. Table~\ref{tab:sensory-profile-pairs} shows the number of pairs for each sensory dimension. Table~\ref{tab:prediction-sample-sizes} shows sample sizes across all prediction tasks.
\begin{figure}[h!]
\begin{tcolorbox}[colback=lightred, colframe=red, coltitle=black, title=Prompt]
You are an expert plant-based meat food scientist. Here are the ingredient lists of two $<$\textit{category}$>$ products.\\
Product 1: $<$\textit{ingredient list}$>$\\
Product 2: $<$\textit{ingredient list}$>$\\
Additionally, here are the nutrition facts for product 1: $<$\textit{nutrition facts}$>$.\\
And here are the nutrition facts for product 2: $<$\textit{nutrition facts}$>$.\\
Now, suppose that a group of 100 omnivores eats both products in a blind taste test. 
Which do you predict would be ranked higher on the dimension of $<$\textit{dimension}$>$? 
Please output a single character, either 1 or 2 on the first line. 



\end{tcolorbox}
\caption{LLM prompt for the sensory profile prediction task.}\label{fig:sensory-profile-prompt}
\end{figure}

\begin{table}[h!]
\centering
\begin{tabular}{lr}
\toprule
 & $N$ \\
\midrule
Overall Satisfaction & 85 \\
Meatiness & 61 \\
Greasiness & 103 \\
Juiciness & 73 \\
Sweetness & 28 \\
Saltiness & 63 \\
Purchase & 82 \\
\midrule
All Dimensions & 495 \\
\bottomrule
\end{tabular}
\caption{Sample sizes (number of pairs) for each sensory dimension in the sensory profile prediction task.}\label{tab:sensory-profile-pairs}
\end{table}

\begin{table*}[t!]
\small
\centering
\begin{tabular}{llllllll}
\toprule
 & \claude & \gemini & \gptthree & \gptfouro & 
Llama & \opreview & Baseline \\
\midrule
\multicolumn{8}{l}{
\textbf{Sensory Profile Prediction: LLM Prediction vs. Sensory Panel (Sustainable Protein)}} \\
\midrule
All Dimensions & 495 & 495 &495 & 495 & 495 & 495 & 495 \\
Overall Satisfaction & 85 & 85 & 85 & 85 & 85 & 85 & 85 \\
Meatiness & 61 & 61 & 61 & 61 & 61 & 61 & 61 \\
Greasiness & 103 & 103 & 103 & 103 & 103 & 103 & 103 \\
Juiciness & 73 & 73 & 73 & 73 & 73 & 73 & 73 \\
Sweetness & 28 & 28 & 28 & 28 & 28 & 28 & 28 \\
Saltiness & 63 & 63 & 63 & 63 & 63 & 63 & 63 \\
Purchase & 82 & 82 & 82 & 82 & 82 & 82 & 82 \\

\midrule
\multicolumn{8}{l}{\textbf{Recipe Preference Prediction (\url{Food.com}): LLM Prediction vs. Rating (Traditional Foods)}} \\
\midrule
Rating & 500 & 500 & 500  & 500  & 500 & 500 & 500 \\
\midrule
\multicolumn{8}{l}{\textbf{Recipe Preference Prediction (Menu Design): LLM Prediction vs. Order Frequency (Traditional Foods)}} \\
\midrule
Order Frequency & 1225 & 1275 & -  & 1225  & 1326 & 1225 & 1225 \\
\bottomrule
\end{tabular}
\caption{Sample sizes (number of pairs) for prediction tasks.}\label{tab:prediction-sample-sizes}
\end{table*}

\subsection{Recipe Preference Prediction}\label{app:recipe-rating-prediction}
Figure~\ref{fig:prompt-recipe-rating} shows the prompt for the recipe preference prediction task.

\begin{figure}
\begin{tcolorbox}[colback=lightgreen, colframe=green!75!black, coltitle=black, title=Prompt]
You are an expert online recipe writer. Which online recipe would people prefer, recipe 1 or recipe 2?

Output a number only (1 or 2). You must choose one. If unsure, provide your best guess.

Recipe 1: $<$\textit{recipe text}$>$

Recipe 2: $<$\textit{recipe text}$>$

Answer:
\end{tcolorbox}
\caption{LLM prompt for the recipe preference prediction task.}\label{fig:prompt-recipe-rating}
\end{figure}

\section{Supplementary Results}\label{app:results}

\subsection{Experimental Design}\label{app:results-experimental-design}
Figures~\ref{fig:o1-preview-output} and~\ref{fig:o1-preview-output-2} show an example output from \opreview.

\begin{figure}[h!]
\begin{tcolorbox}[colback=lightgray, colframe=gray, coltitle=black, title=Example Output]
1. Appearance Enhancement

   Many participants found the appearance unappetizing, describing it as``looks like a dog treat,"``raw," or ``fake."

   - Experiment: Test natural coloring agents to achieve a more authentic bacon look. Incorporate beet juice or paprika extract to enhance redness and use coconut oil for marbling effects.\\
   - Variables: Vary concentrations of coloring agents and adjust cooking times and temperatures to achieve desired caramelization.\\
   - Metrics: Conduct sensory evaluations focusing on appearance, measuring scores against the current mean of 2.47.

2. Texture Improvement

   Issues with a mushy center and inconsistent crispiness were noted.

   - Experiment: Adjust the protein blend ratios and incorporate texturizers like methylcellulose to enhance firmness and coherence.\\
   - Variables: Test different ratios of soy and wheat proteins; vary levels of konjac powder and vegetable gums.\\
   - Metrics: Use texture profile analysis to measure hardness and chewiness; collect sensory feedback on texture aiming to improve the mean from 3.43 towards 4 (neutral).

3. Flavor Enhancement

   Participants desired more smokiness and found the taste bland.

   - Experiment: Increase natural smoke flavoring and seasoning levels to enhance meatiness and smokiness.\\
   - Variables: Experiment with different types and concentrations of smoke extracts (e.g., hickory, applewood) and adjust salt and sugar levels.\\
   - Metrics: Sensory evaluations on flavor and smokiness, targeting improvements from means of 3.58 and 2.43 towards 4.

4. Reduce Greasiness

   Some found the product too greasy.

   - Experiment: Reduce sunflower oil content or replace it with a less greasy alternative like canola oil.\\
   - Variables: Vary oil types and quantities.\\
   - Metrics: Measure perceived greasiness in sensory tests, aiming to adjust the mean from 2.53 to 3 (moderately greasy).

\end{tcolorbox}
\caption{Example o1-preview output for the experimental design task, after style standardization.}\label{fig:o1-preview-output}
\end{figure}

\begin{figure}[h!]
\begin{tcolorbox}[colback=lightgray, colframe=gray, coltitle=black, title=Example Output]
1. Improve Texture and Crispiness.
Issue: The product is perceived as too soft and lacking crispiness (Mean crispiness rating: 5.86; ideal is 4).\\
Changes:\\
- Adjust ingredient composition by incorporating crisping agents like rice flour or modifying the ratio of wheat gluten to reduce chewiness.\\
- Modify the process by slicing the product thinner and adjusting cooking temperatures and times to achieve a crisper texture.\\
Experiment:\\
- Design a factorial experiment testing different levels of rice flour (e.g., 1\%, 2\%, 3\%) and cooking temperatures (e.g., 350°F, 375°F, 400°F).\\
- Measure the crispiness using texture analysis and conduct sensory evaluations.\\
2. Enhance Flavor Profile. Issue: Flavor is disliked, with complaints of excessive saltiness and artificial taste (Mean flavor rating: 2.22; saltiness rating skewed towards too salty).
Changes:\\
- Reduce salt content by gradually decreasing salt levels by 10-20\%.\\
- Increase umami and smokiness by adding natural umami enhancers like mushroom extract or increasing natural smoke flavors.\\
- Balance sweetness by introducing a small amount of natural sweeteners like maple syrup to mimic bacon's sweet-salty balance.
Experiment:\\
- Create a series of formulations with varying salt levels (e.g., full, -10\%, -20\%) and added sweeteners.\\
- Conduct sensory tests focusing on flavor liking and saltiness perception.\\
3. Improve Appearance. Issue: Appearance is inconsistent; some find it artificial (Mixed feedback on color; Mean color rating: 4.42).\\
Changes:\\
- Introduce marbling by using natural colorants to create a marbled fat appearance, mimicking real bacon.\\
- Adjust color by modifying the hue using oleoresin paprika for red tones and annatto for lighter shades.\\
Experiment:\\
- Develop prototypes with different marbling techniques and colorant levels.\\
- Perform visual assessments with consumers to rate appearance authenticity.\\
4. Increase Greasiness. Issue: Product is not perceived as greasy enough (Mean greasiness rating: 1.39; ideal is 3).\\
Changes:\\
- Incorporate plant-based fats by increasing the amount of oils like coconut or sunflower oil.\\
- Use fat release technology by employing oil encapsulation that melts during cooking to simulate bacon fat.\\
Experiment:\\
- Test formulations with varying fat levels (e.g., 2\%, 4\%, 6\% additional oil).
- Evaluate greasiness through sensory panels and measure perceived oiliness.
\end{tcolorbox}
\caption{Example o1-preview output for the experimental design task, after style standardization.}\label{fig:o1-preview-output-2}
\end{figure}

\begin{table}[]
\begin{tabular}{lllll}
\toprule
Test                                                                                                         & Accuracy & Complementarity & Specificity    & Percent Time Saved \\
\midrule
$t$-test                                                                                                     & 0.121    & 0.350           & \textbf{0.003} & \textbf{0.0002}    \\
Paired $t$-test                                            & 0.125    & 0.368           & \textbf{0.003} & \textbf{0.0003}    \\
Wilcoxon signed-rank test & 0.078    & 0.284           & \textbf{0.005} & \textbf{0.0007}    \\
Linear mixed model with random effect for evaluator ID                                                       & 0.075    & 0.993           & \textbf{0.001} & \textbf{0.0002}    \\
Permutation test                            & 0.140    & 0.392           & \textbf{0.003} & \textbf{0.0003}    \\
OLS                            & 0.1812   & 0.416           & \textbf{0.001} & \textbf{0.0006}\\  
\bottomrule
\end{tabular}
\caption{Sensitivity of $p$-values in the experimental design task to choice of statistical test. The paired $t$-test takes into account the pairing across products. The Wilcoxon signed-rank test is exact, for paired samples, nonparametric, and does not rely on normality assumptions. The permutation is also exact and does not rely on normality assumptions. In the OLS, we control for product ID, evaluator ID, generator ID as fixed effects.}
\end{table}

\subsection{Menu Design}\label{appendix:task4}

Table~\ref{tab:menu-preference-rating-q4} shows accuracy of the predicted preferences.

\begin{table}[h!]
\small
\centering
\begin{tabular}{l|r}
\toprule

\textbf{ LLM }&  \textbf{Accuracy} \\

\midrule
\textbf{\claude} & \textbf{0.85} \\
\textbf{\gemini} & \textbf{0.72} \\
\textbf{\gptthree}& - \\
\textbf{\gptfouro} & \textbf{0.82}  \\
\textbf{\llama} & \textbf{0.72} \\
\opreview & 0.50  \\
\textbf{\opreview, RD} & \textbf{0.84}\\
\bottomrule
\end{tabular}
\caption{Accuracy in the fourth (highest) quartile of the preference gap, for predicted preferences in the menu design task vs. actual order frequency. The 25th and 50th percentile of the preference gap were both 1 (the minimum gap needed for inclusion in the analysis in Table~\ref{tab:sensory-profile-prediction-results}), so we did not test those quartiles. Accuracies statistically significant according to chi-squared test are in bold. \gptthree\:did not produce a valid output for this task and thus was not tested. We note that LLM predictions were based on the recipe title alone (not including the full description), which could explain the discrepancy between item-level prediction performance and menu-level performance for \opreview. When its descriptions were replaced with a simple list of ingredients (in one of the ablations), \opreview's accuracy in this quartile was 84\% ($p$=2.40e-20).}
\label{tab:menu-preference-rating-q4}
\end{table}

\begin{table}[h!]
\begin{tabular}{lll}
\hline
\textbf{Outcome}  & \textbf{Treatment Effect Estimate (p-value)} & \textbf{Percent Change} \\
\midrule
Emissions                              &                                              34.31 (1.37e-18)&                         78.55\\
Satisfaction with Set of Choices       &                                              -0.04 (0.93)&                         -0.58\\ 
Likelihood of Visiting                 &                                              -0.07 (0.89)&                         -0.95\\ 
Likelihood of Recommending to a Friend &                                              -0.38 (0.47)&                         -4.9\\ 
Animal Usage                           &                                              0.00 (0.91)&                         -3.5\\ 
\bottomrule
\end{tabular}
\caption{Treatment Effect Estimates ($p$-value in parentheses) and percent change for \opreview+IQP vs. the original menu, adjusting (via linear regression) for the covariates of age, gender, race, and dietary preference. Signs are flipped so that positive values are better in all cases. Compared to the original menu, \opreview+IQP reduces average emissions by 79\% while maintaining participants's satisfaction with their set of choices to within 1\% of the original. Small reductions are observed in likelihood of visiting, likelihood of recommending to a friend, and animal welfare, but none are statistically significant.}
\end{table}

\begin{table}[h!]
\small
\begin{tabular}{p{3cm}llllllll}
\toprule
 & Emissions & Satisfied & Visit & Recommend & Animals & PB & LLM Gen. (AO) & LLM Gen. (AC) \\
\midrule
Original & 44.91 & 8.40 & 7.90 & 7.82 & 0.08 & 0.20 & 0.00 & 0.00 \\
Expert Chef & 41.28 & 8.20 & 7.76 & 7.49 & 0.08 & 0.27 & 0.00 & 0.00 \\
Vegetarian Subset & 19.87 & 6.82 & 6.76 & 6.36 & 0.00 & 1.00 & 0.00 & 0.00 \\
Remove Beef & 21.65 & 7.74 & 7.50 & 7.08 & 0.16 & 0.24 & 0.00 & 0.00 \\
Beef to Chicken & 17.57 & 8.04 & 8.00 & 7.98 & 0.15 & 0.36 & 0.00 & 0.00 \\
Vegetarian First & 37.37 & 8.20 & 7.74 & 7.62 & 0.13 & 0.28 & 0.00 & 0.00 \\
o1-preview+IQP & 8.70 & 8.44 & 7.80 & 7.46 & 0.08 & 0.54 & 0.50 & 0.44 \\
o1-preview & 9.64 & 6.51 & 6.29 & 6.22 & 0.00 & 1.00 & 1.00 & 1.00 \\
o1-preview+IQP, RD & 11.05 & 7.48 & 6.98 & 6.76 & 0.18 & 0.36 & 0.50 & 0.20 \\
o1-preview+IQP, RP & 9.44 & 7.84 & 7.63 & 7.33 & 0.12 & 0.55 & 0.36 & 0.39 \\
o1-preview+IQP, RDi & 9.75 & 8.16 & 7.96 & 7.66 & 0.13 & 0.50 & 0.47 & 0.34 \\
o1-preview+IQP, RPDi & 8.54 & 7.12 & 6.56 & 6.50 & 0.07 & 0.54 & 0.47 & 0.42 \\
Claude+IQP & 10.16 & 7.30 & 6.68 & 6.86 & 0.16 & 0.38 & 0.42 & 0.28 \\
Gemini+IQP & 11.71 & 7.59 & 7.35 & 7.02 & 0.19 & 0.31 & 0.44 & 0.10 \\
GPT-4o+IQP & 8.53 & 7.38 & 7.08 & 6.98 & 0.17 & 0.52 & 0.42 & 0.36 \\
Llama+IQP & 10.68 & 7.77 & 7.54 & 7.21 & 0.12 & 0.58 & 0.50 & 0.31 \\
\bottomrule
\end{tabular}
\caption{Table of all results for the menu design task. ``Emissions" is in kg of CO$_2$eq per kg of food item. ``Animal Lives" is number of animals killed per kg of food item. ``Satisfied", ``Visit", and ``Recommend" are on a ten point (1 to 10) scale. ``PB" is the fraction of choices that were plant based (either vegan or vegetarian). ``LLM Gen. (AO)" is the fraction of available options in the final optimized menu that were LLM generated (as opposed to recipes from the original menu). ``LLM Gen. (AC)" is the fraction of actual choices that were LLM generated (as opposed to recipes from the original menu).
In ``\opreview+IQP, RD", \opreview's descriptions are replaced with a simple list of ingredients.
In ``\opreview+IQP, RP", the preferences component of the objective (the first term in Problem~\ref{eq:menu-formulation}) is removed.
In ``\opreview+IQP, RDi", the diversity component of the objective is removed, i.e. $\lambda=0$ in Problem~\ref{eq:menu-formulation}.
In ``\opreview+IQP, RPDi", both the preference and diversity components of the objective in Problem~\ref{eq:menu-formulation} are removed, i.e. it becomes a feasibility problem.}
\label{tab:menu-design-all-results}
\end{table}

\begin{figure}[h!]
\centering
\includegraphics[width=\textwidth]{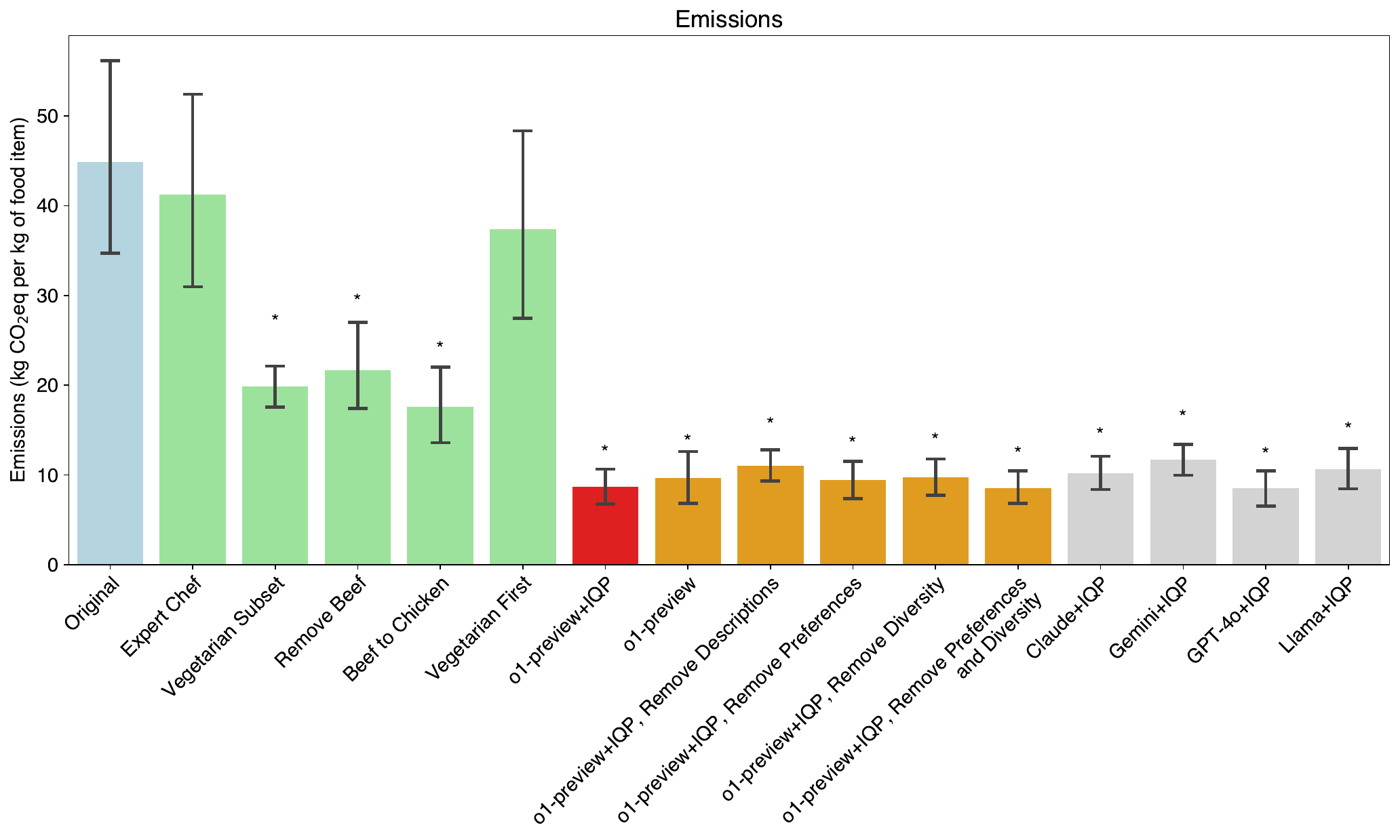}
\caption{Average emissions for all arms. $n=800$ across arms. Lower is better. Asterisks indicate a statistically significant difference ($t$-test with Bonferroni correction) compared to the original menu. Baselines are in green, ablations are in orange. In ``\opreview+IQP, Remove Descriptions", \opreview's descriptions are replaced with a simple list of ingredients. In ``\opreview+IQP, Remove Preferences", the preferences component of the objective (the first term in Problem~\ref{eq:menu-formulation}) is removed.
In ``\opreview+IQP, Remove Diversity", the diversity component of the objective is removed, i.e. $\lambda=0$ in Problem~\ref{eq:menu-formulation}.
In ``\opreview+IQP, Remove Preferences and Diversity", both the preference and diversity components of the objective in Problem~\ref{eq:menu-formulation} are removed, i.e. it becomes a feasibility problem.}
\label{fig:task4-all-emissions}
\end{figure}

\begin{figure}[h!]
\centering
\includegraphics[width=\textwidth]{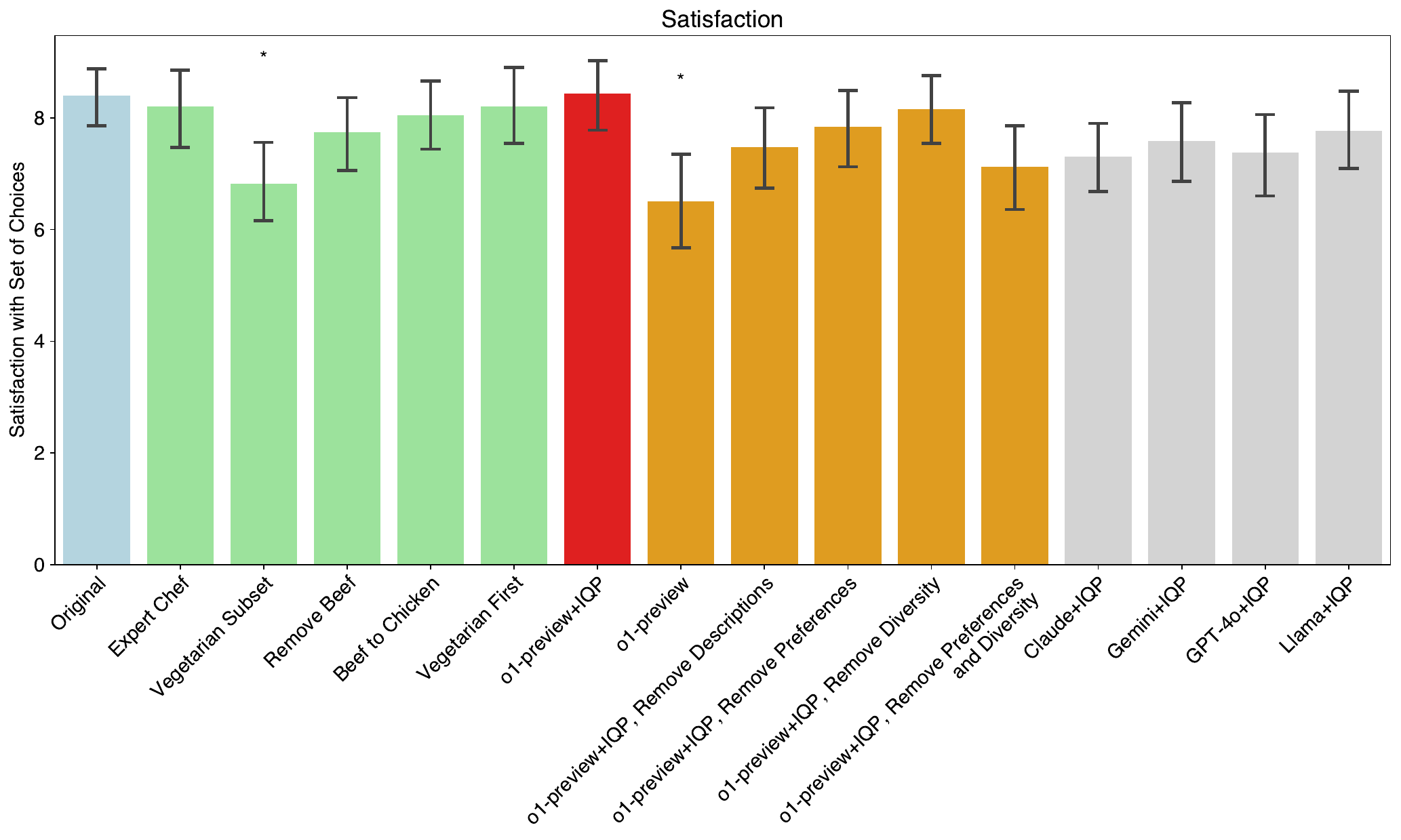}
\caption{Average satisfaction for all arms. $n=800$ across arms. Higher is better. Asterisks indicate a statistically significant difference ($t$-test with Bonferroni correction) compared to the original menu. Baselines are in green, ablations are in orange. In ``\opreview+IQP, Remove Descriptions", \opreview's descriptions are replaced with a simple list of ingredients. In ``\opreview+IQP, Remove Preferences", the preferences component of the objective (the first term in Problem~\ref{eq:menu-formulation}) is removed.
In ``\opreview+IQP, Remove Diversity", the diversity component of the objective is removed, i.e. $\lambda=0$ in Problem~\ref{eq:menu-formulation}.
In ``\opreview+IQP, Remove Preferences and Diversity", both the preference and diversity components of the objective in Problem~\ref{eq:menu-formulation} are removed, i.e. it becomes a feasibility problem.}
\label{fig:task4-satisfaction}
\end{figure}

\begin{figure}[h!]
\centering
\includegraphics[width=\textwidth]{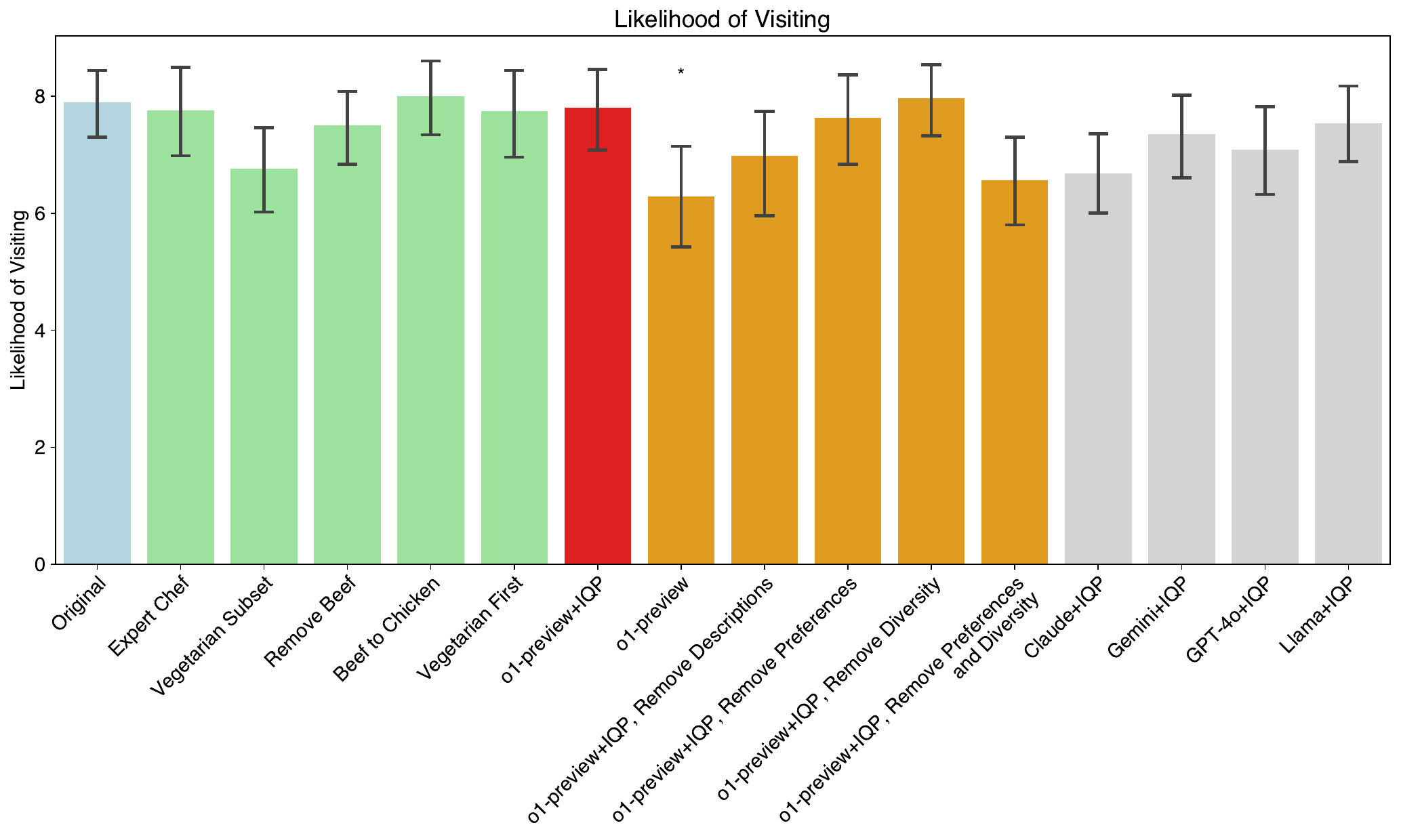}
\caption{Average likelihood of visiting the hypothetical restaurant for all arms. $n=800$ across arms. Higher is better. Asterisks indicate a statistically significant difference ($t$-test with Bonferroni correction) compared to the original menu. Baselines are in green, ablations are in orange. In ``\opreview+IQP, Remove Descriptions", \opreview's descriptions are replaced with a simple list of ingredients. In ``\opreview+IQP, Remove Preferences", the preferences component of the objective (the first term in Problem~\ref{eq:menu-formulation}) is removed.
In ``\opreview+IQP, Remove Diversity", the diversity component of the objective is removed, i.e. $\lambda=0$ in Problem~\ref{eq:menu-formulation}.
In ``\opreview+IQP, Remove Preferences and Diversity", both the preference and diversity components of the objective in Problem~\ref{eq:menu-formulation} are removed, i.e. it becomes a feasibility problem.}
\label{fig:task4-all-visit}
\end{figure}

\begin{figure}[h!]
\centering
\includegraphics[width=\textwidth]{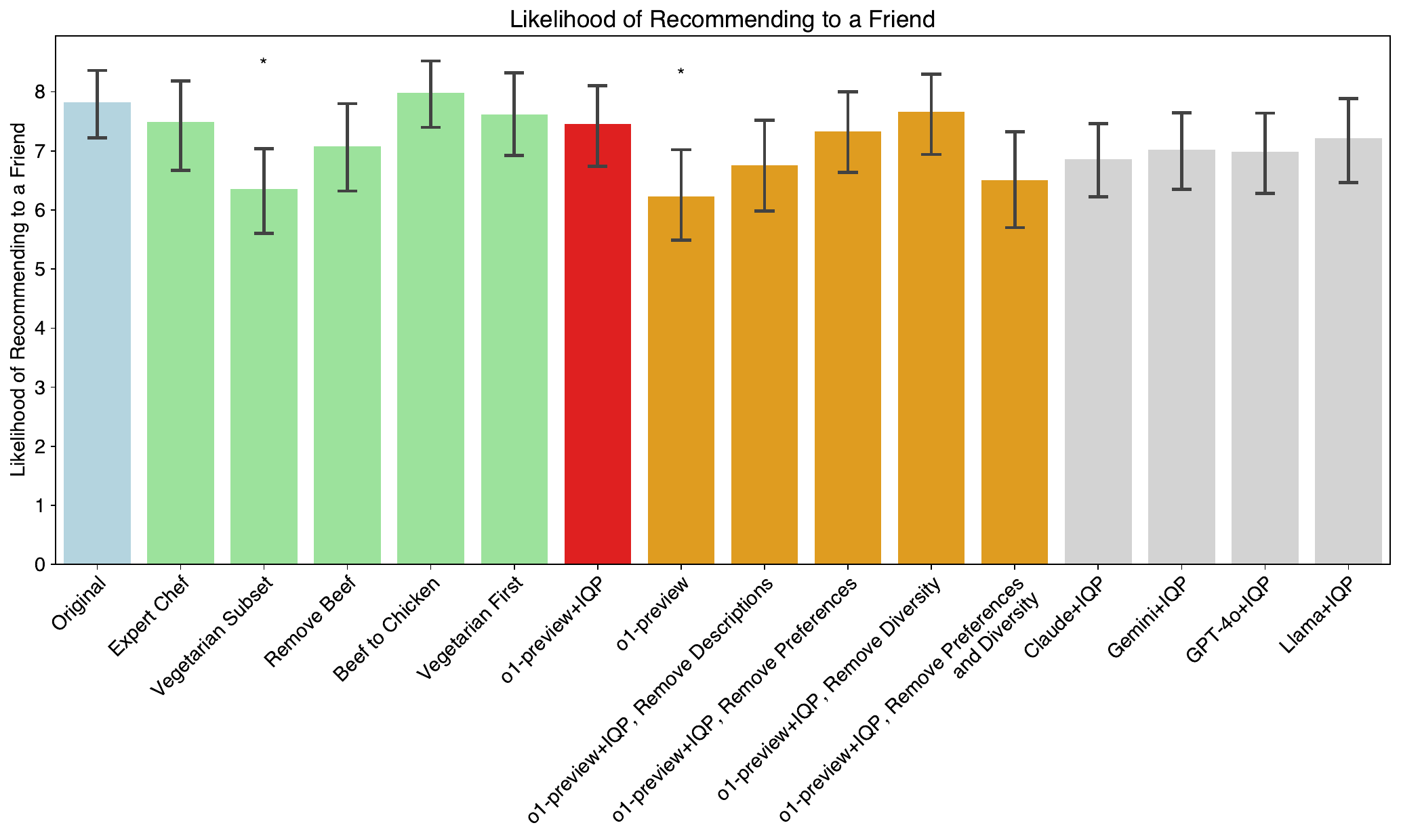}
\caption{Average likelihood of recommending the hypothetical restaurant to a friend for all arms. $n=800$ across arms.  Higher is better. Asterisks indicate a statistically significant difference ($t$-test with Bonferroni correction) compared to the original menu. Baselines are in green, ablations are in orange. In ``\opreview+IQP, Remove Descriptions", \opreview's descriptions are replaced with a simple list of ingredients. In ``\opreview+IQP, Remove Preferences", the preferences component of the objective (the first term in Problem~\ref{eq:menu-formulation}) is removed.
In ``\opreview+IQP, Remove Diversity", the diversity component of the objective is removed, i.e. $\lambda=0$ in Problem~\ref{eq:menu-formulation}.
In ``\opreview+IQP, Remove Preferences and Diversity", both the preference and diversity components of the objective in Problem~\ref{eq:menu-formulation} are removed, i.e. it becomes a feasibility problem.}
\label{fig:task4-all-recommend}
\end{figure}

\begin{figure}[h!]
\centering
\includegraphics[width=\textwidth]{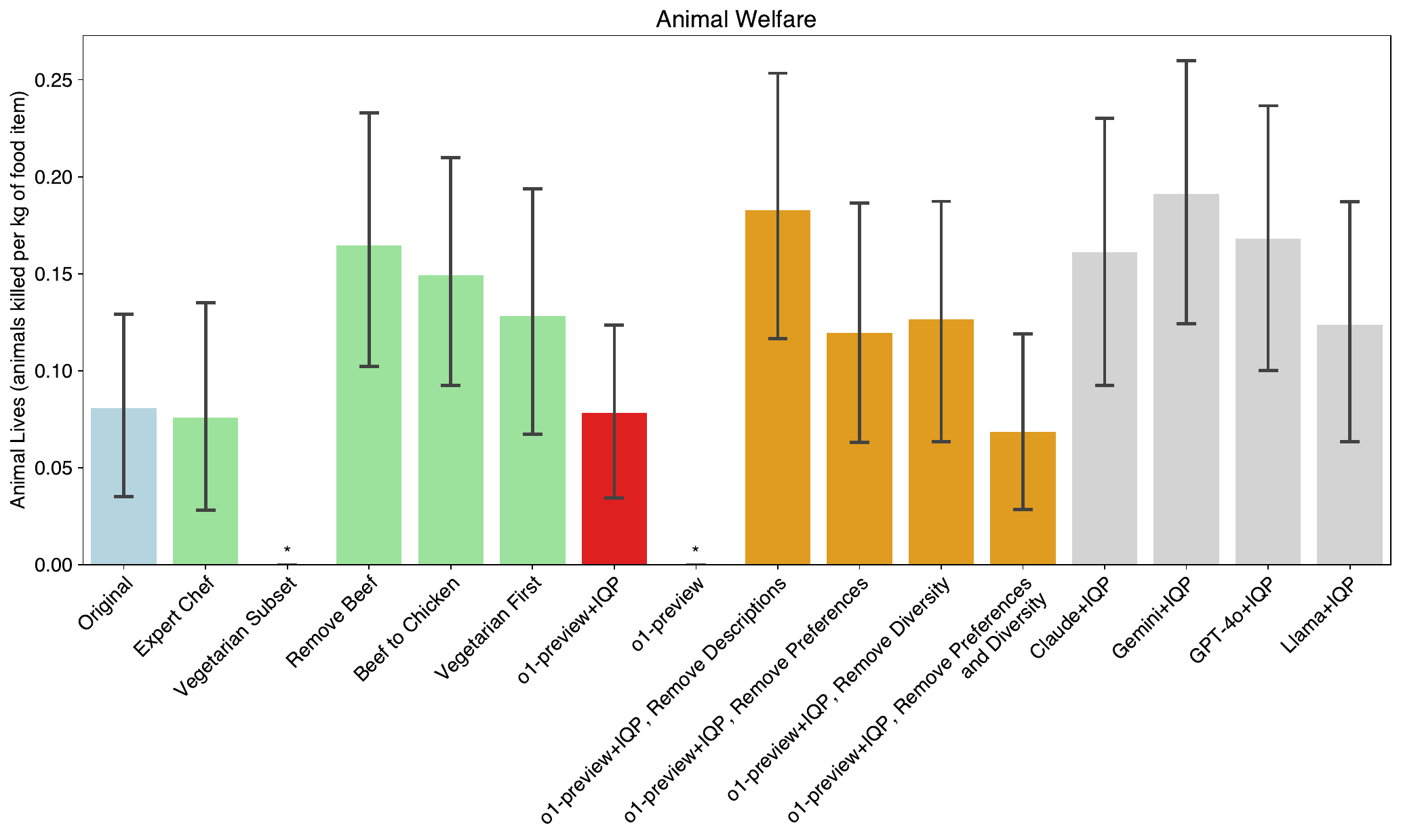}
\caption{Average animal usage for all arms. $n=800$ across arms.  Lower is better for animal welfare. Asterisks indicate a statistically significant difference ($t$-test with Bonferroni correction) compared to the original menu. Baselines are in green, ablations are in orange. In ``\opreview+IQP, Remove Descriptions", \opreview's descriptions are replaced with a simple list of ingredients. In ``\opreview+IQP, Remove Preferences", the preferences component of the objective (the first term in Problem~\ref{eq:menu-formulation}) is removed.
In ``\opreview+IQP, Remove Diversity", the diversity component of the objective is removed, i.e. $\lambda=0$ in Problem~\ref{eq:menu-formulation}.
In ``\opreview+IQP, Remove Preferences and Diversity", both the preference and diversity components of the objective in Problem~\ref{eq:menu-formulation} are removed, i.e. it becomes a feasibility problem.}
\label{fig:task4-all-animal-welfare}
\end{figure}

\begin{figure}[h!]
\centering
\includegraphics[width=\textwidth]{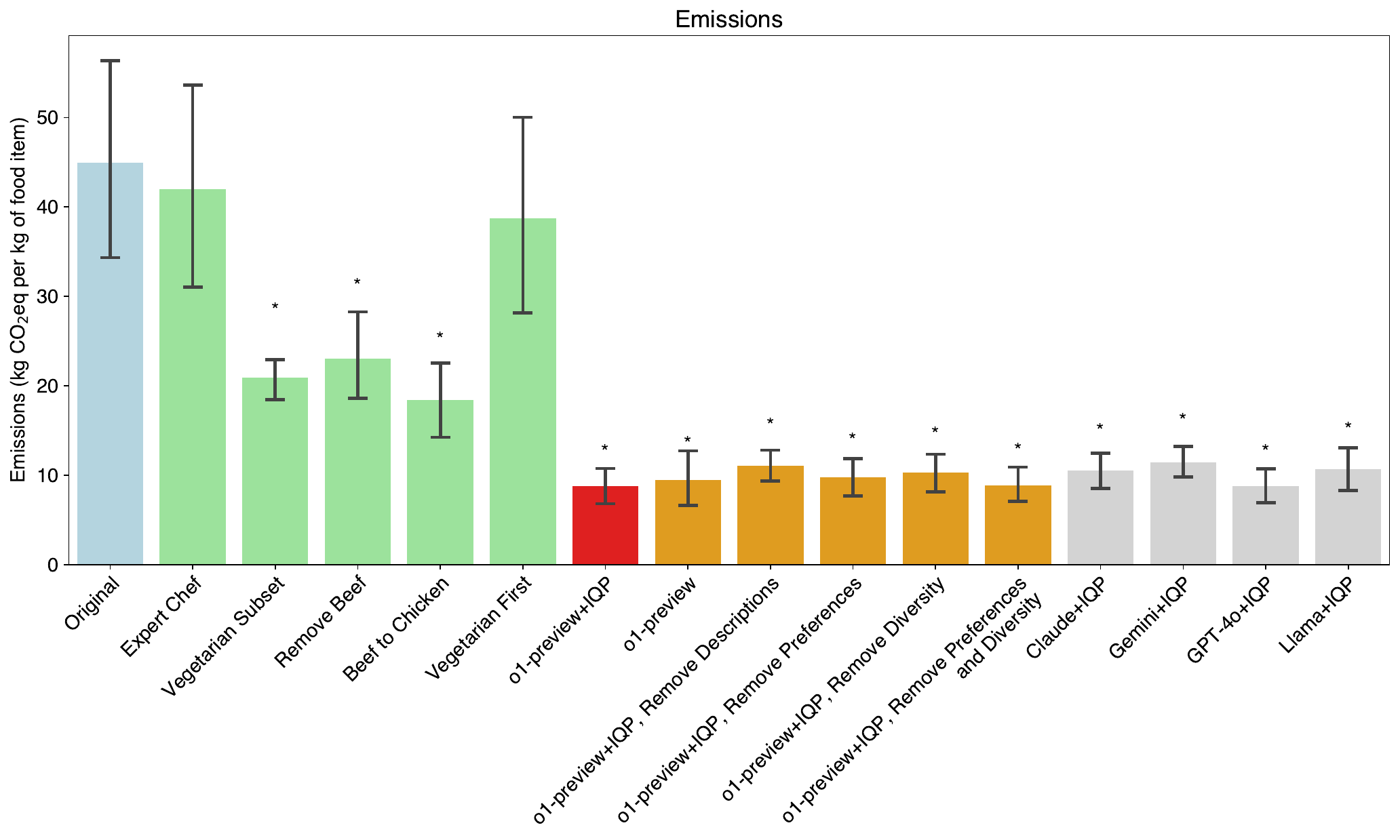}
\caption{Average emissions for all arms, excluding vegan and vegetarian participants. $n=740$ across arms. Lower is better. Asterisks indicate a statistically significant difference ($t$-test with Bonferroni correction) compared to the original menu. Baselines are in green, ablations are in orange. In ``\opreview+IQP, Remove Descriptions", \opreview's descriptions are replaced with a simple list of ingredients. In ``\opreview+IQP, Remove Preferences", the preferences component of the objective (the first term in Problem~\ref{eq:menu-formulation}) is removed.
In ``\opreview+IQP, Remove Diversity", the diversity component of the objective is removed, i.e. $\lambda=0$ in Problem~\ref{eq:menu-formulation}.
In ``\opreview+IQP, Remove Preferences and Diversity", both the preference and diversity components of the objective in Problem~\ref{eq:menu-formulation} are removed, i.e. it becomes a feasibility problem.}
\label{fig:task4-all-emissions-exclude-v}
\end{figure}

\begin{figure}[h!]
\centering
\includegraphics[width=\textwidth]{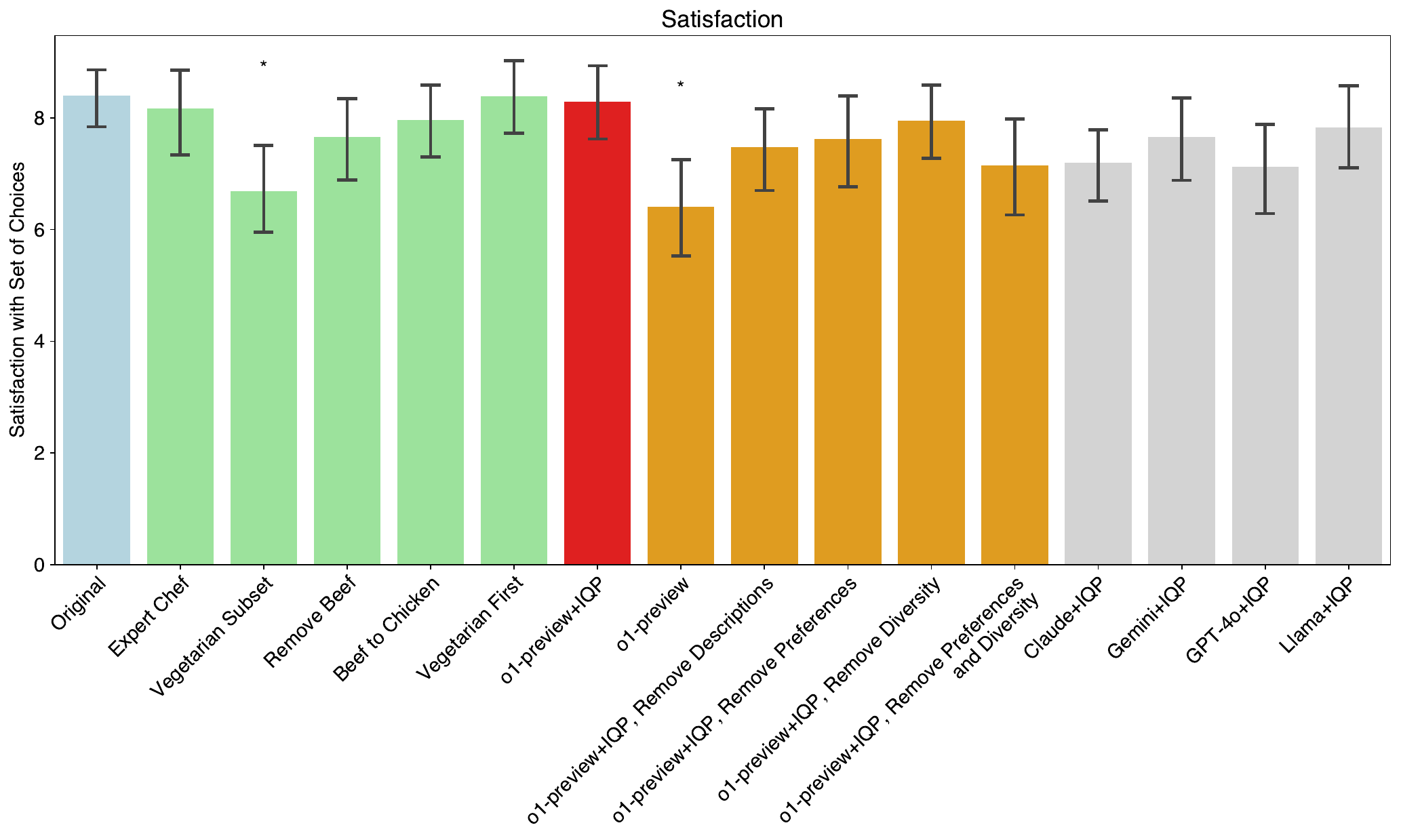}
\caption{Average satisfaction for all arms, excluding vegan and vegetarian participants. $n=740$ across arms. Higher is better. Asterisks indicate a statistically significant difference ($t$-test with Bonferroni correction) compared to the original menu. Baselines are in green, ablations are in orange. In ``\opreview+IQP, Remove Descriptions", \opreview's descriptions are replaced with a simple list of ingredients. In ``\opreview+IQP, Remove Preferences", the preferences component of the objective (the first term in Problem~\ref{eq:menu-formulation}) is removed.
In ``\opreview+IQP, Remove Diversity", the diversity component of the objective is removed, i.e. $\lambda=0$ in Problem~\ref{eq:menu-formulation}.
In ``\opreview+IQP, Remove Preferences and Diversity", both the preference and diversity components of the objective in Problem~\ref{eq:menu-formulation} are removed, i.e. it becomes a feasibility problem.}
\label{fig:task4-satisfaction-exclude-v}
\end{figure}

\begin{figure}[h!]
\centering
\includegraphics[width=\textwidth]{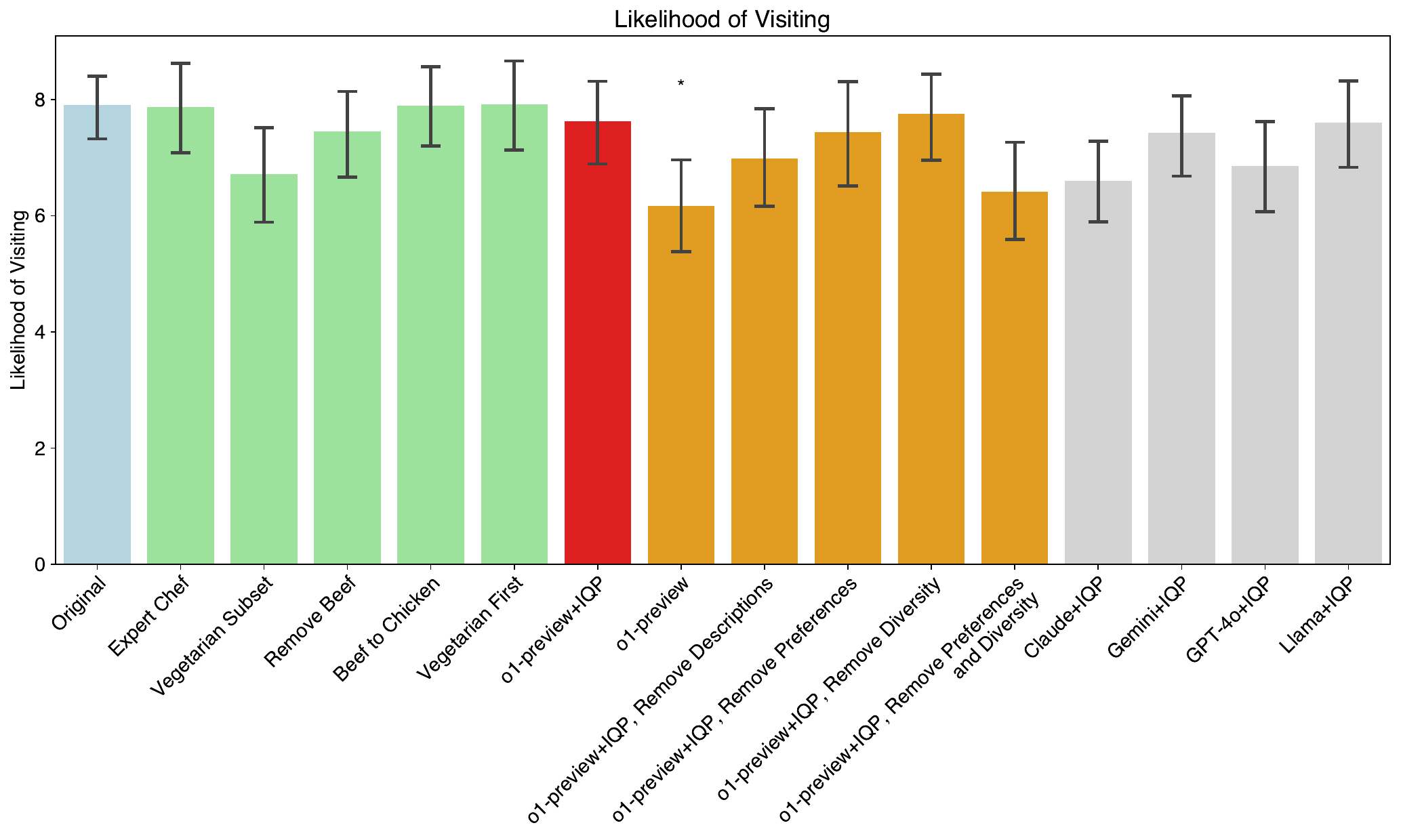}
\caption{Average likelihood of visiting the hypothetical restaurant for all arms, excluding vegan and vegetarian participants. $n=740$ across arms. Higher is better. Asterisks indicate a statistically significant difference ($t$-test with Bonferroni correction) compared to the original menu. Baselines are in green, ablations are in orange. In ``\opreview+IQP, Remove Descriptions", \opreview's descriptions are replaced with a simple list of ingredients. In ``\opreview+IQP, Remove Preferences", the preferences component of the objective (the first term in Problem~\ref{eq:menu-formulation}) is removed.
In ``\opreview+IQP, Remove Diversity", the diversity component of the objective is removed, i.e. $\lambda=0$ in Problem~\ref{eq:menu-formulation}.
In ``\opreview+IQP, Remove Preferences and Diversity", both the preference and diversity components of the objective in Problem~\ref{eq:menu-formulation} are removed, i.e. it becomes a feasibility problem.}
\label{fig:task4-all-visit-exclude-v}
\end{figure}

\begin{figure}[h!]
\centering
\includegraphics[width=\textwidth]{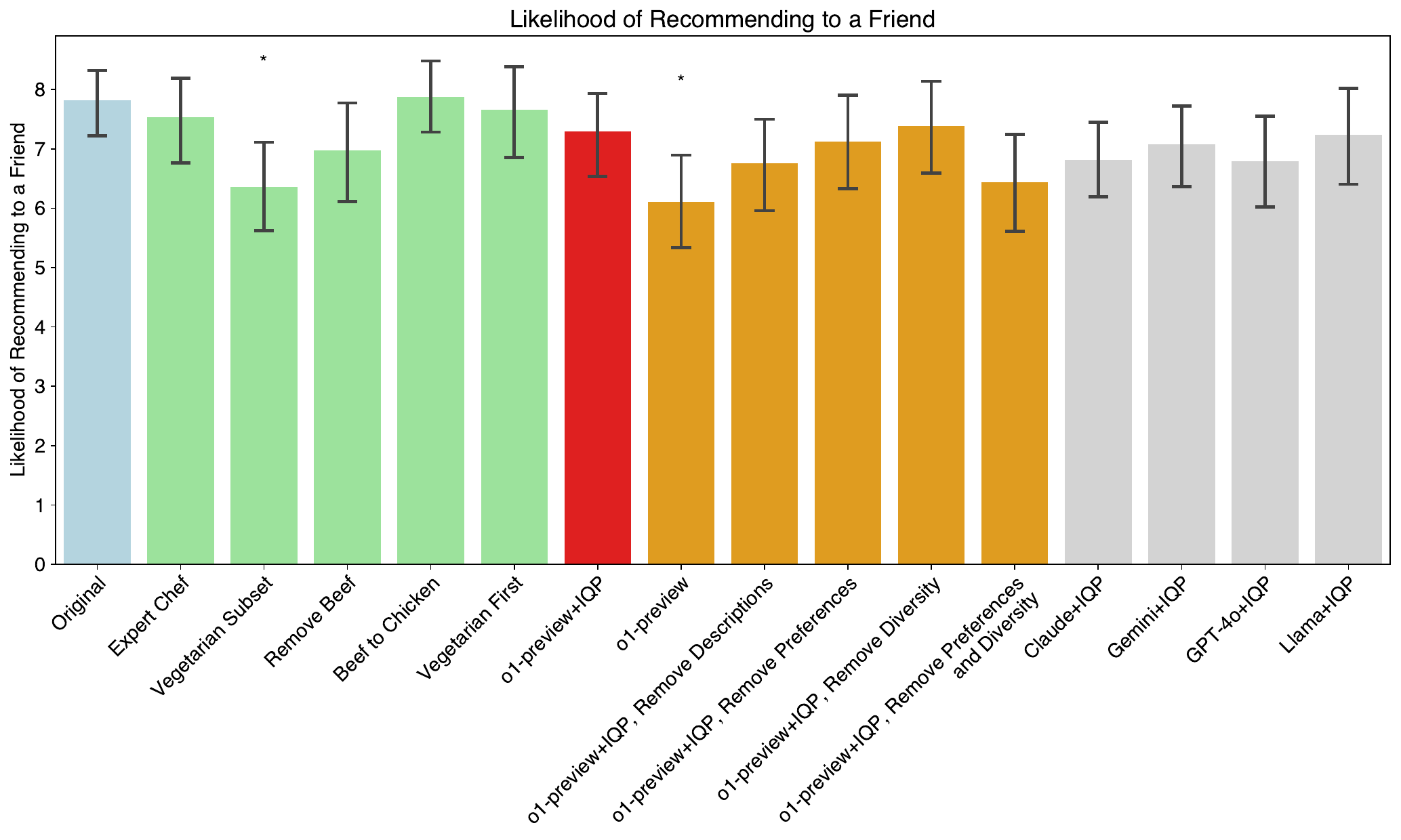}
\caption{Average likelihood of recommending the hypothetical restaurant to a friend for all arms, excluding vegan and vegetarian participants. $n=740$ across arms.  Higher is better. Asterisks indicate a statistically significant difference ($t$-test with Bonferroni correction) compared to the original menu. Baselines are in green, ablations are in orange. In ``\opreview+IQP, Remove Descriptions", \opreview's descriptions are replaced with a simple list of ingredients. In ``\opreview+IQP, Remove Preferences", the preferences component of the objective (the first term in Problem~\ref{eq:menu-formulation}) is removed.
In ``\opreview+IQP, Remove Diversity", the diversity component of the objective is removed, i.e. $\lambda=0$ in Problem~\ref{eq:menu-formulation}.
In ``\opreview+IQP, Remove Preferences and Diversity", both the preference and diversity components of the objective in Problem~\ref{eq:menu-formulation} are removed, i.e. it becomes a feasibility problem.}
\label{fig:task4-all-recommend-exclude-v}
\end{figure}

\begin{figure}[h!]
\centering
\includegraphics[width=\textwidth]{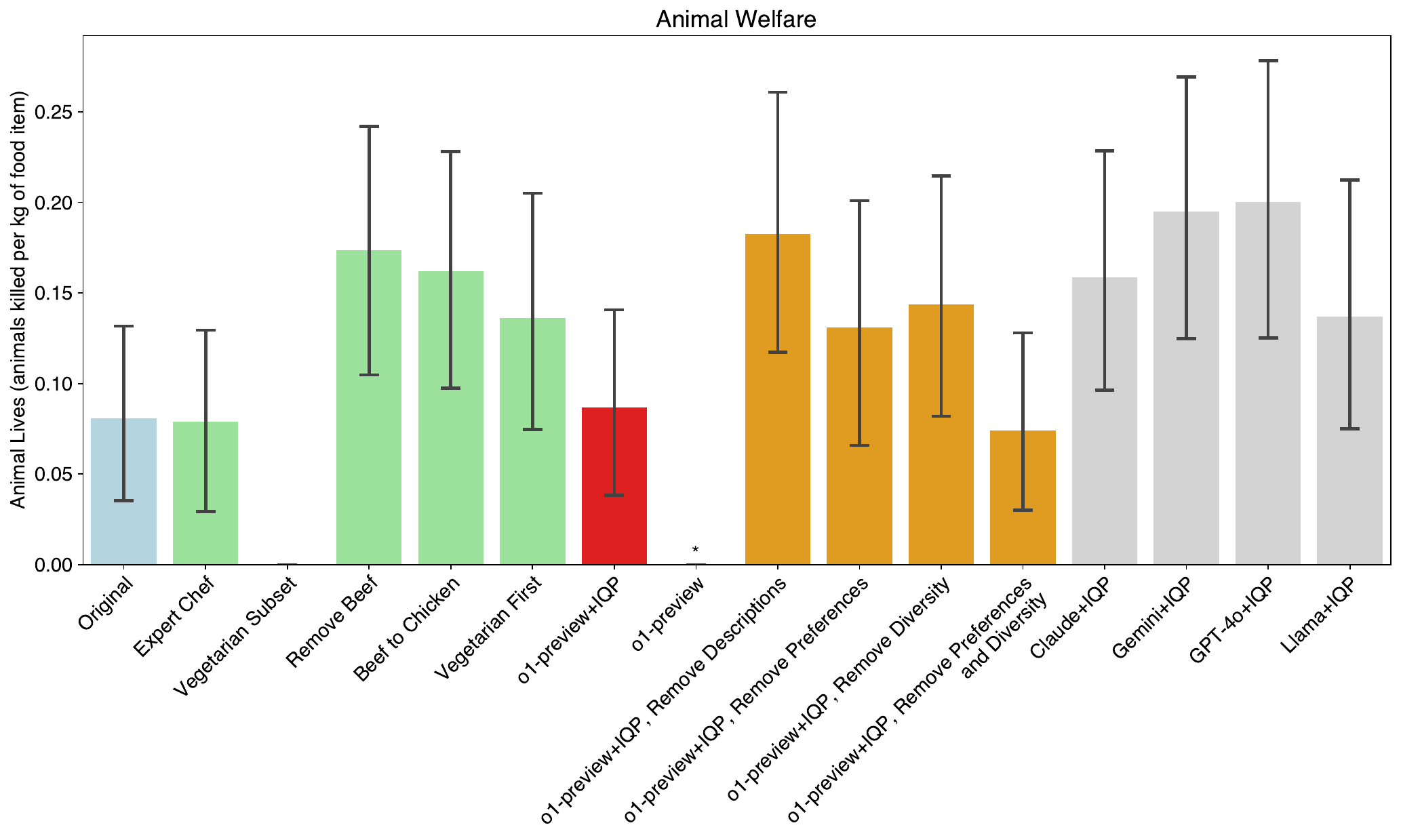}
\caption{Average animal usage for all arms, excluding vegan and vegetarian participants. $n=740$ across arms.  Lower is better.Asterisks indicate a statistically significant difference ($t$-test with Bonferroni correction) compared to the original menu. Baselines are in green, ablations are in orange. In ``\opreview+IQP, Remove Descriptions", \opreview's descriptions are replaced with a simple list of ingredients. In ``\opreview+IQP, Remove Preferences", the preferences component of the objective (the first term in Problem~\ref{eq:menu-formulation}) is removed.
In ``\opreview+IQP, Remove Diversity", the diversity component of the objective is removed, i.e. $\lambda=0$ in Problem~\ref{eq:menu-formulation}.
In ``\opreview+IQP, Remove Preferences and Diversity", both the preference and diversity components of the objective in Problem~\ref{eq:menu-formulation} are removed, i.e. it becomes a feasibility problem.}
\label{fig:task4-all-animal-welfare-exclude-v}
\end{figure}


\subsection{Sensory Profile Prediction}\label{sec:task1-quartiles}
The performance of an expert food scientist specifically for the `Overall Satisfaction' dimension is shown in Table~\ref{tab:expert-food-scientist}.
Accuracies in each quartile of the ground truth preference gap are shown in Tables~\ref{tab:sensory-profile-prediction-results-q1},~\ref{tab:sensory-profile-prediction-results-q2},~\ref{tab:sensory-profile-prediction-results-q3},~\ref{tab:sensory-profile-prediction-results-q4}.

\begin{table*}[h!]
\small 
\centering
\begin{tabular}{lllllllll}
\toprule
Data Subset                                                 & Expert & Claude 3.5 Sonnet & Gemini 1.5 Pro & GPT-3.5 Turbo & GPT-4o        & Llama & o1-preview    & Nutr. Baseline \\
\midrule
All                                                                    & 0.65                  & 0.60              & 0.41           & 0.38          & 0.39          & 0.51  & 0.64          & 0.73                 \\
Quartile 1                                                             & 0.62                  & 0.43              & 0.29           & 0.43          & 0.33          & 0.43  & 0.43          & 0.71                 \\
Quartile 2                                                             & 0.60                  & 0.65              & 0.45           & 0.35          & 0.40          & 0.55  & 0.60          & 0.80                 \\
Quartile 3                                                             & 0.77                  & 0.64              & 0.59           & 0.41          & 0.64          & 0.59  & 0.68          & 0.55                 \\
Quartile 4 & 0.59                  & 0.67              & 0.29           & 0.33          & \textbf{0.19} & 0.48  & \textbf{0.81} & \textbf{0.86} \\ 
\bottomrule
\end{tabular}
\caption{Comparison of performance of the six LLMs on the sensory profile prediction task, specifically the ``Overall Satisfaction" dimension, against an expert human food scientist. Statistically significant results (chi-squared test with Bonferroni correction) are in bold.}\label{tab:expert-food-scientist}
\end{table*}

\begin{table*}[h!]
\small
\centering
\begin{tabular}{llllllll}
\toprule
 & \claude & \gemini & \gptthree & \gptfouro & 
Llama & \opreview & Baseline \\
\midrule
Overall Satisfaction & 0.43 & 0.29 & 0.43 & 0.33 & 0.43 & 0.43 & 0.71 \\
Meatiness & 0.27 & 0.33 & 0.40 & 0.47 & 0.27 & 0.40 & 0.47 \\
Greasiness & 0.65 & 0.69 & 0.46 & \textbf{0.77} & \textbf{0.77} & 0.65 & 0.69 \\
Juiciness & 0.44 & 0.67 & 0.67 & 0.67 & 0.33 & 0.61 & 0.50 \\
Sweetness & 0.29 & 0.71 & 0.86 & 0.71 & 0.71 & 0.71 & 0.86 \\
Saltiness & 0.75 & 0.81 & 0.62 & 0.81 & 0.69 & 0.75 & 0.56 \\
Purchase & 0.60 & 0.45 & 0.50 & 0.50 & 0.50 & 0.50 & 0.45 \\
\midrule
All Dimensions & 0.52 & 0.55 & 0.53 & 0.60 & 0.53 & 0.57 & 0.59 \\
\bottomrule
\end{tabular}
\caption{Accuracies of six LLMs on the sensory profile prediction task, quartile 1 of the preference gap. Statistically significant results (chi-squared test with Bonferroni correction) are in bold.}\label{tab:sensory-profile-prediction-results-q1}
\end{table*}

\begin{table*}[h!]
\small
\centering
\begin{tabular}{llllllll}
\toprule
 & \claude & \gemini & \gptthree & \gptfouro & 
Llama & \opreview & Baseline \\
\midrule
Overall Satisfaction & 0.65 & 0.45 & 0.35 & 0.40 & 0.55 & 0.60 & 0.80 \\
Meatiness & \textbf{0.87} & 0.80 & 0.60 & 0.67 & 0.60 & 0.80 & 0.53 \\
Greasiness & 0.52 & 0.60 & 0.72 & 0.72 & 0.64 & 0.56 & 0.68 \\
Juiciness & 0.65 & 0.71 & 0.71 & 0.53 & 0.59 & 0.59 & 0.71 \\
Sweetness & 0.29 & 0.43 & 0.57 & 0.29 & 0.14 & 0.43 & 0.43 \\
Saltiness & 0.67 & 0.60 & 0.67 & 0.60 & 0.73 & 0.53 & 0.60 \\
Purchase & 0.62 & \textbf{0.10} & 0.38 & 0.38 & 0.33 & 0.57 & 0.57 \\
\midrule
All Dimensions & \textbf{0.62} & 0.52 & 0.57 & 0.53 & 0.54 & 0.59 & \textbf{0.64} \\
\bottomrule
\end{tabular}
\caption{Accuracies of six LLMs on the sensory profile prediction task, quartile 2 of the preference gap. Statistically significant results (chi-squared test with Bonferroni correction) are in bold.}\label{tab:sensory-profile-prediction-results-q2}
\end{table*}

\begin{table*}[t!]
\small
\centering
\begin{tabular}{llllllll}
\toprule
 & \claude & \gemini & \gptthree & \gptfouro & 
Llama & \opreview & Baseline \\
\midrule
Overall Satisfaction & 0.64 & 0.59 & 0.41 & 0.64 & 0.59 & 0.68 & 0.55 \\
Meatiness & 0.60 & 0.60 & 0.60 & 0.67 & 0.60 & 0.60 & 0.53 \\
Greasiness & 0.65 & 0.69 & 0.65 & \textbf{0.85} & \textbf{0.81} & 0.73 & \textbf{0.77} \\
Juiciness & 0.68 & 0.53 & 0.58 & 0.68 & 0.53 & 0.79 & 0.74 \\
Sweetness & 0.43 & 0.43 & 0.43 & 0.57 & 0.43 & 0.43 & 0.71 \\
Saltiness & 0.69 & 0.69 & 0.50 & 0.81 & 0.62 & \textbf{0.88} & \textbf{0.88} \\
Purchase & 0.55 & 0.30 & 0.25 & 0.45 & 0.40 & 0.35 & 0.75 \\
\midrule
All Dimensions & \textbf{0.62} & 0.56 & 0.50 & \textbf{0.68} & 0.59 & \textbf{0.66} & \textbf{0.70} \\
\bottomrule
\end{tabular}
\caption{Accuracies of six LLMs on the sensory profile prediction task, quartile 3 of the preference gap. Statistically significant results (chi-squared test with Bonferroni correction) are in bold.}\label{tab:sensory-profile-prediction-results-q3}
\end{table*}

\begin{table*}[t!]
\small
\centering
\begin{tabular}{llllllll}
\toprule
 & \claude & \gemini & \gptthree & \gptfouro & 
Llama & \opreview & Baseline \\
\midrule
Overall Satisfaction & 0.67 & 0.29 & 0.33 & \textbf{0.19} & 0.48 & \textbf{0.81} & \textbf{0.86} \\
Meatiness & 0.67 & 0.53 & 0.40 & 0.73 & 0.60 & 0.80 & 0.40 \\
Greasiness & 0.50 & 0.58 & 0.69 & 0.62 & 0.69 & 0.62 & 0.62 \\
Juiciness & 0.67 & 0.50 & 0.56 & 0.67 & 0.61 & 0.56 & \textbf{0.89} \\
Sweetness & 0.43 & 0.86 & 0.57 & 0.86 & 0.71 & 1.00 & 1.00 \\
Saltiness & 0.69 & 0.56 & 0.50 & 0.69 & 0.75 & 0.81 & \textbf{0.88} \\
Purchase & 0.80 & 0.35 & 0.35 & 0.50 & 0.45 & 0.75 & 0.70 \\
\midrule
All Dimensions & \textbf{0.64} & 0.49 & 0.49 & 0.57 & 0.60 & \textbf{0.73} & \textbf{0.74} \\
\bottomrule
\end{tabular}
\caption{Accuracies of six LLMs on the sensory profile prediction task, quartile 4 of the preference gap. Statistically significant results (chi-squared test with Bonferroni correction) are in bold.}\label{tab:sensory-profile-prediction-results-q4}
\end{table*}

\subsection{Recipe Preference Prediction}\label{appendix:task3}

Tables~\ref{tab:task3a} and \ref{tab:task3b} list supplementary analyses of performance on the recipe preference task. Table~\ref{tab:task3c} shows performance on a smaller held-out set.

\begin{table*}[h!]
\small
\centering
\begin{tabular}{l|rrrr|r}
\toprule
\multirow{2}{*}{\textbf{LLM}} & \multicolumn{5}{c}{\textbf{Condition}} \\
 & \textbf{Q1} & \textbf{Q2} & \textbf{Q3} & \textbf{Q4} & \textbf{Overall} \\

\midrule
\textbf{\gptfouro} & \textbf{62.32\%} & 65.31\% & 67.74\% & 71.43\% & \textbf{64.00\%} \\
\textbf{\claude} & \textbf{64.79\%} & 61.22\% & 58.06\% & 85.71\% & \textbf{63.20\%} \\
\textbf{\opreview} & 60.92\% & 57.14\% & 59.68\% & 57.14\% & \textbf{59.60\%} \\
\textbf{\gptthree} & 55.99\% & 55.78\% & 58.06\% & 85.71\% & 56.60\% \\
\textbf{\llama} & 53.17\% & 48.3\% & 43.55\% & 57.14\% & 50.60\% \\
\textbf{\gemini} & 50.00\% & 50.34\% & 33.87\% & 28.57\% & 47.80\% \\
\bottomrule
\end{tabular}
\caption{Accuracies of six LLMs in the recipe preference prediction task, stratified by the quartiles of the difference in review scores. Statistically significant results (chi-squared test with Bonferroni correction) are in bold.}
\label{tab:task3a}
\end{table*}

\begin{table*}[h!]
\small
\centering
\begin{tabular}{l|r|rrr|rr}
\toprule
\multirow{2}{*}{\textbf{LLM}} &  \multicolumn{6}{c}{\textbf{Condition}} \\
 & \textbf{Overall }& $\mathbf{\neg V \mathrm{vs.} \neg V}$ & $\mathbf{V\mathrm{vs.}\neg V}$ & $\mathbf{V\mathrm{vs.}V}$ &  $\mathbf{V>\neg V}$ & $\mathbf{V<\neg V}$ \\
\midrule
\midrule
\textbf{\gptfouro} & \textbf{64.00\%} & \textbf{70.81\%} & 50.77\% & 37.5\% & 53.66\% & \textbf{66.03\%} \\
\textbf{\claude }& \textbf{63.20\%} & \textbf{67.34\%} & 56.15\% & 41.67\% & 43.9\% & \textbf{66.99\%} \\
\textbf{\opreview} & \textbf{59.60\%} & \textbf{65.32\%} & 46.15\% & 50.0\% & 48.78\% & \textbf{61.72\%} \\
\textbf{\gptthree} & 56.60\% & \textbf{63.87\%} & 42.31\% & 29.17\% & 40.24\% & \textbf{59.81\%} \\
\textbf{\llama} & 50.60\% & 54.91\% & 42.31\% & 33.33\% & 43.9\% & 51.91\% \\
\textbf{\gemini} & 47.80\% & 50.58\% & 43.08\% & 33.33\% & 45.12\% & 48.33\% \\
\bottomrule
\end{tabular}
\caption{Accuracies of six LLMs in the recipe preference prediction task, stratified by the ground truth comparison type: non- vegetarian vs. non- vegetarian  ($\mathbf{\neg V \mathrm{vs.} \neg V}$),  vegetarian vs. non- vegetarian ($\mathbf{V\mathrm{vs.}\neg V}$), and  vegetarian vs.  vegetarian comparisons ($\mathbf{V\mathrm{vs.} V}$). Within the  vegetarian vs. non- vegetarian comparisons, performance is displayed separately when  vegetarian is preferred ($\mathbf{V>\neg V}$) and when non- vegetarian is preferred ($\mathbf{V<\neg V}$). Statistically significant results (chi-squared test with Bonferroni correction) are in bold.}
\label{tab:task3b}
\end{table*}

\begin{table}[h]
\small
\centering
\begin{tabular}{l|r}
\toprule

\textbf{ LLM }&  \textbf{Accuracy} \\

\midrule
\textbf{\claude} & {56.90\%} \\
\textbf{\gemini} & 41.38\% \\
\textbf{\gptthree}& 55.17\% \\
\textbf{\gptfouro} &  {55.17\%} \\
\textbf{\llama} & 74.14\% \\
\textbf{\opreview} &  {68.97\%} \\
\bottomrule
\end{tabular}
\caption{\textbf{Recipe preference prediction: Post-cut-off dataset.} Overall accuracy on the post-cut-off dataset.}
\label{tab:task3c}
\end{table}

\clearpage
\newpage

\section{Menus}\label{sec:menus}
\paragraph{Original Menu~\cite{banerjee2023sustainable}}

\begin{enumerate}
\itemsep0em 
\item Chicken Curry Ramen.  Japanese fried chicken \& noodles in a delicious curry broth.
\item Pork Ribs.  Pork Ribs smothered with Kentucky style BBQ sauce.
\item Beef Bourguignon.  6oz Black Angus beef burger topped with crispy bacon, red wine braised onions, Raclette cheese.
\item Chicken Katsu Curry.  Succulent chicken in a crispy Japanese panko crumb with mild curry sauce, pickles and steamed rice.
\item Pepperoni Melt.  This one’s got our triple cheese blend, tomato sauce and pepperoni.
\item Lamb Meatballs with Spaghetti.  Handmade lamb patties grilled and topped with Greek yoghurt, tomato sauce, and onions.
\item Aromatic Duck Rolls.  Aromatic roast duck, hoisin sauce, cucumber, spring onion, rolled in rice paper.
\item Slow Cooked Beef in Chianti Sauce.  Beef shin and shallots slow cooked in rich Chianti red wine sauce with a cheesy dumpling.
\item Pork Porchetta.  Slow-roasted pork belly in herbed red wine sauce, served with roasted new potatoes and broccoli.
\item Cured Salmon Sashimi.  Dazzling beetroot cured slices of salmon served with lemon.
\item Beef Brisket and Venison Tagliatelle.  A beef, brisket, venison, red wine and beef dripping ragu with tagliatelle, fresh bufala mozzarella and crispy sage.
\item Rustica Chorizo Pizza.  Chorizo salami, torn wild boar and pork meatballs, smoked mozzarella and baby sunblush tomatoes.
\item Butterfly Chicken Burger.  Two succulent chicken breasts joined by crispy skin, packaged in a Portuguese roll and topped with tomato and lettuce.
\item American Hot Pizza.  Pepperoni, mozzarella and tomato, with your choice of hot green, Roquito or jalapeno peppers.
\item Beef Angus Burger.  6oz beef burger with baby gem lettuce and plum tomato in a chargrilled brioche bun with Dijon mayonnaise.
\item Beer Battered Fish with Chips.  with frites and pea \& mint puree.
\item Wild Boar Polpette.  Oven-baked herby wild boar and pork meatballs in a pomodoro sauce with melted smoked mozzarella.
\item Steak.  Sliced Black Angus rump steak with red onion chutney, watercress, and Dijon mayonnaise.
\item Canelloni.  pasta with béchamel, mozzarella and Gran Milano cheese.
\item Supreme Pizza.  Cheese and tomato, onion, mushroom, fresh basil, olive and garlic oil.
\item Goat's Cheese Calzone.  Goats cheese, grilled aubergines, roasted peppers, oven roasted tomatoes.
\item Vegan Spaghetti Lentil Ragu.  A hearty ragu of green lentils and mixed vegetables in a rich tomato and fennel sauce.
\item Jack Fruit Arrabiata.  Vegan jackfruit peppers, spicy harissa, roquito chilis \& vegan Mozzarella cheese, finished with basil.
\item Halloumi Sticks with Mayo Avocado Dip.  Chunky sticks of grilled halloumi cheese with a chilli jam and mayo avocado dip.
\item Cheese Salad.  Smoked cheddar, cheddar and sage stuffing.
\item Falafel with Tahini.  Our signature recipe, served with a tahini dip.
\item Four Cheese Margherita.  Mozzarella, mascarpone, fontal and grana cheese on a tomato base.
\item Vegan Meatballs.  Vegan meatballs in a rich tomato and fennel sauce.
\item Bufala Caprese.  Specialty tomatoes and drizzle of roasted garlic in extra virgin olive oil with Buffalo mozzarella.
\item Butternut Squash Salad.  Roasted butternut squash with feta cheese, pomegranate seeds, mixed leaf salad and watercress.
\item Panchetta Carbonara.  crispy pancetta and asparagus in a velvety sauce made with mascarpone, pecorino and Grana Padano cheese.
\item Lentil Linguine Ragu.  Rich Italian lentil ragu with baby spinach, tomatoes, basil \& pecorino cheese.
\item Aegean Slaw.  Thinly shredded cabbage, carrot and red onion peppers, with an olive oil dressing.
\item Goat's Cheese Salad.  Goat's cheese and balsamic onion crostinis on winter baby kale, red pepper, cucumber, plum tomatoes.
\item Harusami Aubergine.  Fried slices of aubergine with garlic \& ginger sesame soy.
\item Inari Taco.  Golden tofu pockets filled with sticky sushi rice, avocado salsa \& vegan sriracha mayo.
\end{enumerate}

\paragraph{Expert Chef}
\begin{enumerate}
\itemsep0em
\item Crispy Miso Eggplant Ramen. Miso marinated eggplant in a crispy panko crust served with a shitake rich curry broth and hand pulled ramen noodles.
\item Pork Ribs. Pork Ribs smothered with Kentucky style BBQ sauce.
\item Beef Bourguignon. 6oz Black Angus beef burger topped with crispy bacon, red wine braised onions, Raclette cheese.
\item Chicken Katsu Curry.  Succulent chicken in a crispy Japanese panko crumb with mild curry sauce, pickles and steamed rice.
\item Pepperoni Melt.  This one’s got our triple cheese blend, tomato sauce and pepperoni.
\item Lamb Meatballs with Spaghetti.  Handmade lamb patties grilled and topped with Greek yoghurt, tomato sauce, and onions.
\item Star Anise Black Bean Tempeh Rolls. Black bean tempeh with a lacquered star anise hoisin sauce layered with pickled cucumber, shredded carrots, mint and maple toasted sesame seeds wrapped in rice paper.
\item Basque Celeriac and Gigante Bean Ragout. Celeriac, Gigante beans, nantes carrots, baby leeks, spring garlic, thyme, bay leaves and a lacto fermented tomato sauce creates an orchestra of flavors and textures.
\item Pork Porchetta.  Slow-roasted pork belly in herbed red wine sauce, served with roasted new potatoes and broccoli.
\item Beet and Tamari Infused “Tuna”. Watermelon infused with Tamari and beet juice, dehydrated and served on a beet horseradish puree with baby watercress, toasted tamari sunflower seeds and a yuzu dressing.
\item Beef Brisket and Venison Tagliatelle.  A beef, brisket, venison, red wine and beef dripping ragu with tagliatelle, fresh bufala mozzarella and crispy sage.
\item Rustica Soyrizo Flatbread. Smoked paprika Soyrizo, toasted pinenut basil pesto slathered on extra virgin olive oil herb brushed house flatbread with confit sun dried tomato and garlic , spicy arugula.
\item Butterfly Chicken Burger.  Two succulent chicken breasts joined by crispy skin, packaged in a Portuguese roll and topped with tomato and lettuce.
\item American Hot Pizza.  Pepperoni, mozzarella and tomato, with your choice of hot green, Roquito or jalapeno peppers.
\item Beef Angus Burger. 6oz beef burger with baby gem lettuce and plum tomato in a chargrilled brioche bun with Dijon mayonnaise.
\item Beer Battered Shio Koji Marinated Enoki Mushroom. Shio koji marinated enoki mushroom, Okinawa sweet potatoes and Kabocha squash tempura with a fermented garlic dipping sauce.
\item Wild Boar Polpette.  Oven-baked herby wild boar and pork meatballs in a pomodoro sauce with melted smoked mozzarella.
\item Steak.  Sliced Black Angus rump steak with red onion chutney, watercress, and Dijon mayonnaise.
\item Pasta e olio e ceci. cast iron sauteed garlic, chilli flakes and local pressed extra virgin olive oil tossed with al dente spaghetti and finished with pasta water and smashed ceci [garbanzo beans] then garnished with heaps of chopped Italian parsley and ground black pepper and a squeeze of Meyer lemon.
\item Supreme Pizza. Hand thrown sourdough pizza with fermented tomato basil sauce , wood grilled mushrooms and herbed cashew cheese.
\item Goat's Cheese Calzone.  Goats cheese, grilled aubergines, roasted peppers, oven roasted tomatoes.
\item Vegan Spaghetti Lentil Ragu.  A hearty ragu of green lentils and mixed vegetables in a rich tomato and fennel sauce.
\item Jack Fruit Arrabiata.  Vegan jackfruit peppers, spicy harissa, roquito chilis \& vegan Mozzarella cheese, finished with basil.
\item Halloumi Sticks with Mayo Avocado Dip.  Chunky sticks of grilled halloumi cheese with a chilli jam and mayo avocado dip.
\item Cheese Salad.  Smoked cheddar, cheddar and sage stuffing.
\item Falafel with Tahini.  Our signature recipe, served with a tahini dip.
\item Four Cheese Margherita.  Mozzarella, mascarpone, fontal and grana cheese on a tomato base.
\item Vegan Meatballs.  Vegan meatballs in a rich tomato and fennel sauce.
\item Bufala Caprese.  Specialty tomatoes and drizzle of roasted garlic in extra virgin olive oil with Buffalo mozzarella.
\item Butternut Squash Salad. Roasted pomegranate molasses rubbed butternut squash tossed with pomegranate arils, toasted pumpkin seeds, herbed macadamia nut cheese with mushroom flat bread crumble.
\item Panchetta Carbonara.  crispy pancetta and asparagus in a velvety sauce made with mascarpone, pecorino and Grana Padano cheese.
\item Pasta e lenticchie. Beluga lentil ragout slow cooked with Italian plum tomatoes ,tossed with bronze cut linguine and finished with smoky Shiitake bacon.
\item Aegean Slaw.  Thinly shredded cabbage, carrot and red onion peppers, with an olive oil dressing.
\item Goat's Cheese Salad.  Goat's cheese and balsamic onion crostinis on winter baby kale, red pepper, cucumber, plum tomatoes.
\item Harusami Aubergine.  Fried slices of aubergine with garlic \& ginger sesame soy.
\item Inari Taco.  Golden tofu pockets filled with sticky sushi rice, avocado salsa \& vegan sriracha mayo.
\end{enumerate}

\paragraph{Remove Beef}
\begin{enumerate}
\itemsep0em 
\item Chicken Curry Ramen.  Japanese fried chicken \& noodles in a delicious curry broth.
\item Pork Ribs.  Pork Ribs smothered with Kentucky style BBQ sauce.
\item Chicken Katsu Curry.  Succulent chicken in a crispy Japanese panko crumb with mild curry sauce, pickles and steamed rice.
\item Pepperoni Melt.  This one’s got our triple cheese blend, tomato sauce and pepperoni.
\item Lamb Meatballs with Spaghetti.  Handmade lamb patties grilled and topped with Greek yoghurt, tomato sauce, and onions.
\item Aromatic Duck Rolls.  Aromatic roast duck, hoisin sauce, cucumber, spring onion, rolled in rice paper.
\item Pork Porchetta.  Slow-roasted pork belly in herbed red wine sauce, served with roasted new potatoes and broccoli.
\item Cured Salmon Sashimi.  Dazzling beetroot cured slices of salmon served with lemon.
\item Rustica Chorizo Pizza.  Chorizo salami, torn wild boar and pork meatballs, smoked mozzarella and baby sunblush tomatoes.
\item Butterfly Chicken Burger.  Two succulent chicken breasts joined by crispy skin, packaged in a Portuguese roll and topped with tomato and lettuce.
\item American Hot Pizza.  Pepperoni, mozzarella and tomato, with your choice of hot green, Roquito or jalapeno peppers.
\item Beer Battered Fish with Chips.  with frites and pea \& mint puree.
\item Wild Boar Polpette.  Oven-baked herby wild boar and pork meatballs in a pomodoro sauce with melted smoked mozzarella.
\item Canelloni.  pasta with béchamel, mozzarella and Gran Milano cheese.
\item Supreme Pizza.  Cheese and tomato, onion, mushroom, fresh basil, olive and garlic oil.
\item Goat's Cheese Calzone.  Goats cheese, grilled aubergines, roasted peppers, oven roasted tomatoes.
\item Vegan Spaghetti Lentil Ragu.  A hearty ragu of green lentils and mixed vegetables in a rich tomato and fennel sauce.
\item Jack Fruit Arrabiata.  Vegan jackfruit peppers, spicy harissa, roquito chilis \& vegan Mozzarella cheese, finished with basil.
\item Halloumi Sticks with Mayo Avocado Dip.  Chunky sticks of grilled halloumi cheese with a chilli jam and mayo avocado dip.
\item Cheese Salad.  Smoked cheddar, cheddar and sage stuffing.
\item Falafel with Tahini.  Our signature recipe, served with a tahini dip.
\item Four Cheese Margherita.  Mozzarella, mascarpone, fontal and grana cheese on a tomato base.
\item Vegan Meatballs.  Vegan meatballs in a rich tomato and fennel sauce.
\item Bufala Caprese.  Specialty tomatoes and drizzle of roasted garlic in extra virgin olive oil with Buffalo mozzarella.
\item Butternut Squash Salad.  Roasted butternut squash with feta cheese, pomegranate seeds, mixed leaf salad and watercress.
\item Panchetta Carbonara.  crispy pancetta and asparagus in a velvety sauce made with mascarpone, pecorino and Grana Padano cheese.
\item Lentil Linguine Ragu.  Rich Italian lentil ragu with baby spinach, tomatoes, basil \& pecorino cheese.
\item Aegean Slaw.  Thinly shredded cabbage, carrot and red onion peppers, with an olive oil dressing.
\item Goat's Cheese Salad.  Goat's cheese and balsamic onion crostinis on winter baby kale, red pepper, cucumber, plum tomatoes.
\item Harusami Aubergine.  Fried slices of aubergine with garlic \& ginger sesame soy.
\item Inari Taco.  Golden tofu pockets filled with sticky sushi rice, avocado salsa \& vegan sriracha mayo.
\end{enumerate}

\paragraph{Vegetarian}
\begin{enumerate}
\itemsep0em 
\item Canelloni.  pasta with béchamel, mozzarella and Gran Milano cheese.
\item Supreme Pizza.  Cheese and tomato, onion, mushroom, fresh basil, olive and garlic oil.
\item Goat's Cheese Calzone.  Goats cheese, grilled aubergines, roasted peppers, oven roasted tomatoes.
\item Vegan Spaghetti Lentil Ragu.  A hearty ragu of green lentils and mixed vegetables in a rich tomato and fennel sauce.
\item Jack Fruit Arrabiata.  Vegan jackfruit peppers, spicy harissa, roquito chilis \& vegan Mozzarella cheese, finished with basil.
\item Halloumi Sticks with Mayo Avocado Dip.  Chunky sticks of grilled halloumi cheese with a chilli jam and mayo avocado dip.
\item Cheese Salad.  Smoked cheddar, cheddar and sage stuffing.
\item Falafel with Tahini.  Our signature recipe, served with a tahini dip.
\item Four Cheese Margherita.  Mozzarella, mascarpone, fontal and grana cheese on a tomato base.
\item Vegan Meatballs.  Vegan meatballs in a rich tomato and fennel sauce.
\item Bufala Caprese.  Specialty tomatoes and drizzle of roasted garlic in extra virgin olive oil with Buffalo mozzarella.
\item Butternut Squash Salad.  Roasted butternut squash with feta cheese, pomegranate seeds, mixed leaf salad and watercress.
\item Lentil Linguine Ragu.  Rich Italian lentil ragu with baby spinach, tomatoes, basil \& pecorino cheese.
\item Aegean Slaw.  Thinly shredded cabbage, carrot and red onion peppers, with an olive oil dressing.
\item Goat's Cheese Salad.  Goat's cheese and balsamic onion crostinis on winter baby kale, red pepper, cucumber, plum tomatoes.
\item Harusami Aubergine.  Fried slices of aubergine with garlic \& ginger sesame soy.
\item Inari Taco.  Golden tofu pockets filled with sticky sushi rice, avocado salsa \& vegan sriracha mayo.
\end{enumerate}

\paragraph{Beef to Chicken}
\begin{enumerate}
\itemsep0em
\item Chicken Curry Ramen.  Japanese fried chicken \& noodles in a delicious curry broth.
\item Pork Ribs.  Pork Ribs smothered with Kentucky style BBQ sauce.
\item Chicken Bourguignon. 6oz chicken burger topped with crispy bacon, red wine braised onions, Raclette cheese.
\item Chicken Katsu Curry.  Succulent chicken in a crispy Japanese panko crumb with mild curry sauce, pickles and steamed rice.
\item Pepperoni Melt.  This one’s got our triple cheese blend, tomato sauce and pepperoni.
\item Lamb Meatballs with Spaghetti.  Handmade lamb patties grilled and topped with Greek yoghurt, tomato sauce, and onions.
\item Aromatic Duck Rolls.  Aromatic roast duck, hoisin sauce, cucumber, spring onion, rolled in rice paper.
\item Slow Cooked Chicken in Chianti Sauce.  Chicken and shallots slow cooked in rich Chianti red wine sauce with a cheesy dumpling.
\item Pork Porchetta.  Slow-roasted pork belly in herbed red wine sauce, served with roasted new potatoes and broccoli.
\item Cured Salmon Sashimi.  Dazzling beetroot cured slices of salmon served with lemon.
\item Chicken and Venison Tagliatelle.  A chicken breast, venison, red wine and beef dripping ragu with tagliatelle, fresh bufala mozzarella and crispy sage.
\item Rustica Chorizo Pizza.  Chorizo salami, torn wild boar and pork meatballs, smoked mozzarella and baby sunblush tomatoes.
\item Butterfly Chicken Burger.  Two succulent chicken breasts joined by crispy skin, packaged in a Portuguese roll and topped with tomato and lettuce.
\item American Hot Pizza.  Pepperoni, mozzarella and tomato, with your choice of hot green, Roquito or jalapeno peppers.
\item Chicken Burger. 6oz chicken burger with baby gem lettuce and plum tomato in a chargrilled brioche bun with Dijon mayonnaise.
\item Beer Battered Fish with Chips.  with frites and pea \& mint puree.
\item Wild Boar Polpette.  Oven-baked herby wild boar and pork meatballs in a pomodoro sauce with melted smoked mozzarella.
\item Baked Chicken. Baked chicken with red onion chutney, watercress, and Dijon mayonnaise.
\item Canelloni.  pasta with béchamel, mozzarella and Gran Milano cheese.
\item Supreme Pizza.  Cheese and tomato, onion, mushroom, fresh basil, olive and garlic oil.
\item Goat's Cheese Calzone.  Goats cheese, grilled aubergines, roasted peppers, oven roasted tomatoes.
\item Vegan Spaghetti Lentil Ragu.  A hearty ragu of green lentils and mixed vegetables in a rich tomato and fennel sauce.
\item Jack Fruit Arrabiata.  Vegan jackfruit peppers, spicy harissa, roquito chilis \& vegan Mozzarella cheese, finished with basil.
\item Halloumi Sticks with Mayo Avocado Dip.  Chunky sticks of grilled halloumi cheese with a chilli jam and mayo avocado dip.
\item Cheese Salad.  Smoked cheddar, cheddar and sage stuffing.
\item Falafel with Tahini.  Our signature recipe, served with a tahini dip.
\item Four Cheese Margherita.  Mozzarella, mascarpone, fontal and grana cheese on a tomato base.
\item Vegan Meatballs.  Vegan meatballs in a rich tomato and fennel sauce.
\item Bufala Caprese.  Specialty tomatoes and drizzle of roasted garlic in extra virgin olive oil with Buffalo mozzarella.
\item Butternut Squash Salad.  Roasted butternut squash with feta cheese, pomegranate seeds, mixed leaf salad and watercress.
\item Panchetta Carbonara.  crispy pancetta and asparagus in a velvety sauce made with mascarpone, pecorino and Grana Padano cheese.
\item Lentil Linguine Ragu.  Rich Italian lentil ragu with baby spinach, tomatoes, basil \& pecorino cheese.
\item Aegean Slaw.  Thinly shredded cabbage, carrot and red onion peppers, with an olive oil dressing.
\item Goat's Cheese Salad.  Goat's cheese and balsamic onion crostinis on winter baby kale, red pepper, cucumber, plum tomatoes.
\item Harusami Aubergine.  Fried slices of aubergine with garlic \& ginger sesame soy.
\item Inari Taco.  Golden tofu pockets filled with sticky sushi rice, avocado salsa \& vegan sriracha mayo.
\end{enumerate}

\paragraph{Vegetarian First}
\begin{enumerate}
\itemsep0em 
    \item Canelloni.  Pasta with béchamel, mozzarella and Gran Milano cheese.
    \item Supreme Pizza.  Cheese and tomato, onion, mushroom, fresh basil, olive and garlic oil.
    \item Goat's Cheese Calzone.  Goats cheese, grilled aubergines, roasted peppers, oven roasted tomatoes.
    \item Vegan Spaghetti Lentil Ragu.  A hearty ragu of green lentils and mixed vegetables in a rich tomato and fennel sauce.
    \item Jack Fruit Arrabiata.  Vegan jackfruit peppers, spicy harissa, roquito chilis \& vegan Mozzarella cheese, finished with basil.
    \item Halloumi Sticks with Mayo Avocado Dip.  Chunky sticks of grilled halloumi cheese with a chilli jam and mayo avocado dip.
    \item Cheese Salad.  Smoked cheddar, cheddar and sage stuffing.
    \item Falafel with Tahini.  Our signature recipe, served with a tahini dip.
    \item Four Cheese Margherita.  Mozzarella, mascarpone, fontal and grana cheese on a tomato base.
    \item Vegan Meatballs.  Vegan meatballs in a rich tomato and fennel sauce.
    \item Bufala Caprese.  Specialty tomatoes and drizzle of roasted garlic in extra virgin olive oil with Buffalo mozzarella.
    \item  Butternut Squash Salad.  Roasted butternut squash with feta cheese, pomegranate seeds, mixed leaf salad and watercress.
    \item Lentil Linguine Ragu.  Rich Italian lentil ragu with baby spinach, tomatoes, basil \& pecorino cheese.
    \item  Aegean Slaw.  Thinly shredded cabbage, carrot and red onion peppers, with an olive oil dressing.
    \item Goat's Cheese Salad.  Goat's cheese and balsamic onion crostinis on winter baby kale, red pepper, cucumber, plum tomatoes.
    \item Harusami Aubergine.  Fried slices of aubergine with garlic \& ginger sesame soy.
    \item Inari Taco.  Golden tofu pockets filled with sticky sushi rice, avocado salsa \& vegan sriracha mayo.
    \item Chicken Curry Ramen.  Japanese fried chicken \& noodles in a delicious curry broth.
    \item Pork Ribs.  Pork Ribs smothered with Kentucky style BBQ sauce.
    \item Beef Bourguignon.  6oz Black Angus beef burger topped with crispy bacon, red wine braised onions, Raclette cheese.
    \item Chicken Katsu Curry.  Succulent chicken in a crispy Japanese panko crumb with mild curry sauce, pickles and steamed rice.
    \item Pepperoni Melt.  This one’s got our triple cheese blend, tomato sauce and pepperoni.
    \item Lamb Meatballs with Spaghetti.  Handmade lamb patties grilled and topped with Greek yoghurt, tomato sauce, and onions.
    \item Aromatic Duck Rolls.  Aromatic roast duck, hoisin sauce, cucumber, spring onion, rolled in rice paper.
    \item Slow Cooked Beef in Chianti Sauce.  Beef shin and shallots slow cooked in rich Chianti red wine sauce with a cheesy dumpling.
    \item Pork Porchetta.  Slow-roasted pork belly in herbed red wine sauce, served with roasted new potatoes and broccoli.
    \item Cured Salmon Sashimi.  Dazzling beetroot cured slices of salmon served with lemon.
    \item Beef Brisket and Venison Tagliatelle.  A beef, brisket, venison, red wine and beef dripping ragu with tagliatelle, fresh bufala mozzarella and crispy sage.
    \item Rustica Chorizo Pizza.  Chorizo salami, torn wild boar and pork meatballs, smoked mozzarella and baby sunblush tomatoes.
    \item Butterfly Chicken Burger.  Two succulent chicken breasts joined by crispy skin, packaged in a Portuguese roll and topped with tomato and lettuce.
    \item  American Hot Pizza.  Pepperoni, mozzarella and tomato, with your choice of hot green, Roquito or jalapeno peppers.
    \item Beef Angus Burger.  6oz beef burger with baby gem lettuce and plum tomato in a chargrilled brioche bun with Dijon mayonnaise.
    \item Beer Battered Fish with Chips.  With frites and pea \& mint puree.
    \item Wild Boar Polpette.  Oven-baked herby wild boar and pork meatballs in a pomodoro sauce with melted smoked mozzarella.
    \item Steak.  Sliced Black Angus rump steak with red onion chutney, watercress, and Dijon mayonnaise.
    \item Panchetta Carbonara.  Crispy pancetta and asparagus in a velvety sauce made with mascarpone, pecorino and Grana Padano cheese.
\end{enumerate}

\paragraph{\opreview+IQP}
\begin{enumerate}
\itemsep0em 
\item Creamy Mushroom Tagliatelle. An indulgent pasta dish featuring sautéed mushrooms and baby spinach in a creamy mascarpone sauce, tossed with tagliatelle and seasoned with garlic oil and fresh basil.
\item Vegetable Delight Pizza. A delicious crispy pizza topped with mozzarella, tomato sauce, and a medley of grilled vegetables, finished with fresh basil.
\item Three Cheese Omelette. A fluffy omelette loaded with smoked cheddar, mozzarella, grana cheese, and sautéed red onions.
\item Eggplant Parmesan. Layers of tender aubergine baked with rich tomato sauce, melted mozzarella, and Grana Padano cheese, garnished with fresh basil.
\item Mushroom and Goat's Cheese Omelette. A fluffy omelette filled with sautéed mushrooms and creamy goat's cheese, infused with garlic oil and fresh basil.
\item Chickpea Curry with Rice. A flavorful and hearty chickpea curry served with steamed rice and accompanied by tangy pickles.
\item Spinach and Feta Stuffed Mushrooms. Large mushrooms stuffed with sautéed baby spinach and creamy feta cheese, drizzled with garlic oil and baked to perfection.
\item Lentil Veggie Burger. A hearty lentil-based veggie burger served on a toasted brioche bun with fresh lettuce, tomato, and tangy Dijon mayonnaise.
\item Falafel Salad. Our signature falafel served over a fresh mixed salad, with crunchy cucumbers, juicy tomatoes, red onions, and a creamy tahini dressing.
\item Egg Shakshuka. Poached eggs simmered in a spiced tomato sauce with peppers and onions, garnished with fresh basil.
\item Vegetable and Tofu Stir-Fry. A vibrant stir-fry of crispy tofu and fresh vegetables tossed with noodles in a savory garlic and ginger sesame soy sauce.
\item Tofu Katsu Curry. Succulent tofu coated in crispy panko crumbs, served with mild curry sauce, tangy pickles, and steamed rice.
\item Lentil Stuffed Peppers. Roasted bell peppers stuffed with hearty lentils, tomato sauce, and fresh baby spinach, topped with creamy goat's cheese.
\item Butternut Squash and Feta Salad. Sweet roasted butternut squash and tangy feta cheese on a bed of fresh mixed greens, sprinkled with pomegranate seeds and watercress.
\item Vegan Meatball Sub. Hearty vegan meatballs simmered in pomodoro sauce, topped with melted smoked mozzarella, served in a toasted Portuguese roll.
\item Chickpea and Spinach Curry. A nourishing curry of chickpeas and baby spinach simmered in a mild curry sauce, served with steamed rice.
\item Tofu Curry Ramen. Japanese-style ramen with fried tofu and noodles in a delicious curry broth, topped with pak choi and pickled onions.
\item Mushroom and Lentil Bolognese. Hearty mushrooms and lentils cooked in a rich tomato sauce, served over tagliatelle pasta and garnished with fresh basil.
\item Supreme Pizza.  Cheese and tomato, onion, mushroom, fresh basil, olive and garlic oil.
\item Chicken Curry Ramen.  Japanese fried chicken \& noodles in a delicious curry broth.
\item Pork Ribs.  Pork Ribs smothered with Kentucky style BBQ sauce.
\item Butterfly Chicken Burger.  Two succulent chicken breasts joined by crispy skin, packaged in a Portuguese roll and topped with tomato and lettuce.
\item Four Cheese Margherita.  Mozzarella, mascarpone, fontal and grana cheese on a tomato base.
\item Panchetta Carbonara.  crispy pancetta and asparagus in a velvety sauce made with mascarpone, pecorino and Grana Padano cheese.
\item Chicken Katsu Curry.  Succulent chicken in a crispy Japanese panko crumb with mild curry sauce, pickles and steamed rice.
\item Rustica Chorizo Pizza.  Chorizo salami, torn wild boar and pork meatballs, smoked mozzarella and baby sunblush tomatoes.
\item Beer Battered Fish with Chips.  with frites and pea \& mint puree.
\item Pork Porchetta.  Slow-roasted pork belly in herbed red wine sauce, served with roasted new potatoes and broccoli.
\item Butternut Squash Salad.  Roasted butternut squash with feta cheese, pomegranate seeds, mixed leaf salad and watercress.
\item Aegean Slaw.  Thinly shredded cabbage, carrot and red onion peppers, with an olive oil dressing.
\item Cured Salmon Sashimi.  Dazzling beetroot cured slices of salmon served with lemon.
\item Falafel with Tahini.  Our signature recipe, served with a tahini dip.
\item Lentil Linguine Ragu.  Rich Italian lentil ragu with baby spinach, tomatoes, basil \& pecorino cheese.
\item Aromatic Duck Rolls.  Aromatic roast duck, hoisin sauce, cucumber, spring onion, rolled in rice paper.
\item Vegan Spaghetti Lentil Ragu.  A hearty ragu of green lentils and mixed vegetables in a rich tomato and fennel sauce.
\item Harusami Aubergine.  Fried slices of aubergine with garlic \& ginger sesame soy.
\end{enumerate}

\paragraph{\gptfouro+IQP}
\begin{enumerate}
\itemsep0em 
\item Falafel Wrap. A flavorful and nutritious wrap perfect for any time of day.
\item Lentil Meatball Spaghetti. A comforting combination of lentil meatballs and rich tomato sauce with pasta.
\item Eggs Florentine Pizza. A delightful pizza featuring fresh eggs and vibrant toppings.
\item Spinach and Asparagus Carbonara. A creamy and delicious pasta loaded with greens.
\item Butternut Squash and Feta Cannelloni. Rich, creamy, and packed with flavor, perfect for vegetarians.
\item Roasted Beetroot and Goat's Cheese Salad. A vibrant salad with the earthy sweetness of beetroot and creamy goat's cheese.
\item BBQ Jackfruit Ribs. Tender jackfruit smothered in a tangy BBQ sauce, served with sides.
\item Eggplant Ragu Tagliatelle. A hearty and delicious pasta with a rich eggplant ragu.
\item Vegan Stuffed Bell Peppers. Hearty stuffed peppers, packed with delicious and healthy fillings.
\item Spicy Lentil and Jackfruit Tacos. A fiery taco loaded with protein and bold flavors.
\item Mushroom Bourguignon. A rich and savory dish, perfect for a comforting meal.
\item Tofu Porchetta. Herb-crusted tofu and flavorful sides for a vegan take on porchetta.
\item Chickpea Katsu Burger. A satisfying burger with a crunchy chickpea patty and fresh toppings.
\item Panko Crusted Tofu Fingers. Crunchy tofu sticks with a deliciously creamy dip.
\item Tofu and Veggie Stir-fry. A nutritious stir-fry with tofu and a mix of fresh vegetables.
\item Pork Ribs.  Pork Ribs smothered with Kentucky style BBQ sauce.
\item Beer Battered Fish with Chips.  with frites and pea \& mint puree.
\item Supreme Pizza.  Cheese and tomato, onion, mushroom, fresh basil, olive and garlic oil.
\item Four Cheese Margherita.  Mozzarella, mascarpone, fontal and grana cheese on a tomato base.
\item Panchetta Carbonara.  crispy pancetta and asparagus in a velvety sauce made with mascarpone, pecorino and Grana Padano cheese.
\item Chicken Curry Ramen.  Japanese fried chicken \& noodles in a delicious curry broth.
\item Pork Porchetta.  Slow-roasted pork belly in herbed red wine sauce, served with roasted new potatoes and broccoli.
\item Rustica Chorizo Pizza.  Chorizo salami, torn wild boar and pork meatballs, smoked mozzarella and baby sunblush tomatoes.
\item Goat's Cheese Salad.  Goat's cheese and balsamic onion crostinis on winter baby kale, red pepper, cucumber, plum tomatoes.
\item Chicken Katsu Curry.  Succulent chicken in a crispy Japanese panko crumb with mild curry sauce, pickles and steamed rice.
\item Aromatic Duck Rolls.  Aromatic roast duck, hoisin sauce, cucumber, spring onion, rolled in rice paper.
\item Wild Boar Polpette.  Oven-baked herby wild boar and pork meatballs in a pomodoro sauce with melted smoked mozzarella.
\item Aegean Slaw.  Thinly shredded cabbage, carrot and red onion peppers, with an olive oil dressing.
\item Cured Salmon Sashimi.  Dazzling beetroot cured slices of salmon served with lemon.
\item Jack Fruit Arrabiata.  Vegan jackfruit peppers, spicy harissa, roquito chilis \& vegan Mozzarella cheese, finished with basil.
\item Butternut Squash Salad.  Roasted butternut squash with feta cheese, pomegranate seeds, mixed leaf salad and watercress.
\item Lentil Linguine Ragu.  Rich Italian lentil ragu with baby spinach, tomatoes, basil \& pecorino cheese.
\item Vegan Spaghetti Lentil Ragu.  A hearty ragu of green lentils and mixed vegetables in a rich tomato and fennel sauce.
\item Vegan Meatballs.  Vegan meatballs in a rich tomato and fennel sauce.
\item Falafel with Tahini.  Our signature recipe, served with a tahini dip.
\item Harusami Aubergine.  Fried slices of aubergine with garlic \& ginger sesame soy.
\end{enumerate}

\paragraph{\claude+IQP}
\begin{enumerate}
\itemsep0em 
\item Cheese Melt Pizza. A gooey, cheesy pizza topped with savory mushrooms and tangy tomato sauce.
\item  Rustica Vegetable Pizza. A hearty vegetarian pizza loaded with savory mushrooms, smoky cheese, and sweet tomatoes.
\item  Mushroom Tagliatelle. Ribbon pasta tossed with a rich mushroom and red wine sauce, topped with creamy mozzarella and crispy sage leaves.
\item  Vegetable Hot Pizza. A spicy vegetarian pizza featuring a medley of peppers and savory mushrooms.
\item  Mushroom Bourguignon. A hearty vegan twist on the classic French dish, featuring meaty mushrooms and savory red wine sauce.
\item  Butternut Squash Carbonara. Creamy pasta featuring roasted butternut squash and crisp asparagus in a rich cheese sauce.
\item  Tofu Katsu Curry. Crispy breaded tofu cutlet served with a mild and aromatic curry sauce.
\item  Mushroom Rolls. Savory mushrooms and crisp vegetables wrapped in delicate rice paper with a sweet hoisin glaze.
\item  Slow Cooked Lentils in Chianti Sauce. Hearty lentils slow-simmered in a rich red wine sauce, topped with a fluffy cheese dumpling.
\item  Wild Mushroom Polpette. Oven-baked mushroom "meatballs" in a rich tomato sauce with melted smoky cheese.
\item  Lentil Linguine Ragu. Al dente linguine tossed with a robust lentil ragu and finished with fresh herbs and cheese.
\item  Lentil Meatballs with Spaghetti. Tender lentil meatballs served over spaghetti with a rich tomato sauce and creamy yoghurt.
\item  Lentil Ragu Spaghetti. A hearty lentil and vegetable ragu served over tender spaghetti with aromatic fennel.
\item  Jackfruit Arrabiata. Spicy jackfruit "pulled pork" with peppers and vegan cheese in a fiery arrabiata sauce.
\item  Tofu Fish and Chips. Crispy battered tofu fillets served with golden fries and a refreshing pea puree.
\item  Pork Ribs.  Pork Ribs smothered with Kentucky style BBQ sauce.
\item  Rustica Chorizo Pizza.  Chorizo salami, torn wild boar and pork meatballs, smoked mozzarella and baby sunblush tomatoes.
\item  Four Cheese Margherita.  Mozzarella, mascarpone, fontal and grana cheese on a tomato base.
\item  Panchetta Carbonara.  crispy pancetta and asparagus in a velvety sauce made with mascarpone, pecorino and Grana Padano cheese.
\item  Chicken Curry Ramen.  Japanese fried chicken \& noodles in a delicious curry broth.
\item  Chicken Katsu Curry.  Succulent chicken in a crispy Japanese panko crumb with mild curry sauce, pickles and steamed rice.
\item  Pork Porchetta.  Slow-roasted pork belly in herbed red wine sauce, served with roasted new potatoes and broccoli.
\item  Beer Battered Fish with Chips.  with frites and pea \& mint puree.
\item  Goat's Cheese Calzone.  Goats cheese, grilled aubergines, roasted peppers, oven roasted tomatoes.
\item  Aromatic Duck Rolls.  Aromatic roast duck, hoisin sauce, cucumber, spring onion, rolled in rice paper.
\item  Butterfly Chicken Burger.  Two succulent chicken breasts joined by crispy skin, packaged in a Portuguese roll and topped with tomato and lettuce.
\item  Halloumi Sticks with Mayo Avocado Dip.  Chunky sticks of grilled halloumi cheese with a chilli jam and mayo avocado dip.
\item  Falafel with Tahini.  Our signature recipe, served with a tahini dip.
\item  Lentil Linguine Ragu.  Rich Italian lentil ragu with baby spinach, tomatoes, basil \& pecorino cheese.
\item  Wild Boar Polpette.  Oven-baked herby wild boar and pork meatballs in a pomodoro sauce with melted smoked mozzarella.
\item  Vegan Spaghetti Lentil Ragu.  A hearty ragu of green lentils and mixed vegetables in a rich tomato and fennel sauce.
\item  Butternut Squash Salad.  Roasted butternut squash with feta cheese, pomegranate seeds, mixed leaf salad and watercress.
\item  Aegean Slaw.  Thinly shredded cabbage, carrot and red onion peppers, with an olive oil dressing.
\item  Jack Fruit Arrabiata.  Vegan jackfruit peppers, spicy harissa, roquito chilis \& vegan Mozzarella cheese, finished with basil.
\item Vegan Meatballs.  Vegan meatballs in a rich tomato and fennel sauce.
\item  Harusami Aubergine.  Fried slices of aubergine with garlic \& ginger sesame soy.
\end{enumerate}

\paragraph{\gemini+IQP}
\begin{enumerate}
\itemsep0em 
\item Pepperoni Melt with Tofu. Enjoy this twist on our classic Pepperoni Melt by swapping meat for tofu.
\item Four Cheese and Mushroom Pizza. A classic pizza with an added layer of flavor.
\item Chickpea and Spinach Curry. Warm up with this hearty and flavorful chickpea and spinach curry.
\item Rustica Veggie Pizza. This meat-free pizza is packed with flavor and will satisfy any pizza lover.
\item Veggie Supreme Pizza. A classic pizza, loaded with fresh veggies and bursting with flavor.
\item Spinach and Mushroom Carbonara. This creamy and flavorful pasta dish is sure to please everyone at the table.
\item Mushroom Bourguignon. Enjoy the rich flavors of our classic Beef Bourguignon with a meat-free twist.
\item Spicy Jackfruit Pizza. This spicy and delicious pizza is sure to tantalize your taste buds.
\item Mushroom and Spinach Meatballs. These flavorful and healthy meatballs are a great alternative to the classic.
\item Tofu Curry Ramen. A vegan alternative to our classic Chicken Curry Ramen, with crispy fried tofu.
\item Tofu Katsu Curry. All the goodness of our Chicken Katsu Curry in a satisfying plant-based alternative.
\item Mushroom and Lentil Ragu with Tagliatelle. A hearty and satisfying vegetarian take on our Beef Brisket and Venison Tagliatelle.
\item Mushroom Calzone. Savory and satisfying, our mushroom calzone is packed full of fresh ingredients.
\item Butternut Squash and Chickpea Salad . Enjoy all the current flavors of Fall in this colorful salad.
\item Lentil Meatballs with Spaghetti. Enjoy this lighter version of our Lamb Meatballs with Spaghetti.
\item Lentil Bolognese. This lentil bolognese is a hearty and flavorful alternative to our classic spaghetti dish.
\item Chicken Curry Ramen.  Japanese fried chicken \& noodles in a delicious curry broth.
\item Pork Ribs.  Pork Ribs smothered with Kentucky style BBQ sauce.
\item Chicken Katsu Curry.  Succulent chicken in a crispy Japanese panko crumb with mild curry sauce, pickles and steamed rice.
\item Beer Battered Fish with Chips.  with frites and pea \& mint puree.
\item Halloumi Sticks with Mayo Avocado Dip.  Chunky sticks of grilled halloumi cheese with a chilli jam and mayo avocado dip.
\item Four Cheese Margherita.  Mozzarella, mascarpone, fontal and grana cheese on a tomato base.
\item Panchetta Carbonara.  crispy pancetta and asparagus in a velvety sauce made with mascarpone, pecorino and Grana Padano cheese.
\item Rustica Chorizo Pizza.  Chorizo salami, torn wild boar and pork meatballs, smoked mozzarella and baby sunblush tomatoes.
\item Supreme Pizza.  Cheese and tomato, onion, mushroom, fresh basil, olive and garlic oil.
\item Goat's Cheese Calzone.  Goats cheese, grilled aubergines, roasted peppers, oven roasted tomatoes.
\item Aromatic Duck Rolls.  Aromatic roast duck, hoisin sauce, cucumber, spring onion, rolled in rice paper.
\item Butterfly Chicken Burger.  Two succulent chicken breasts joined by crispy skin, packaged in a Portuguese roll and topped with tomato and lettuce.
\item Canelloni.  pasta with béchamel, mozzarella and Gran Milano cheese.
\item Jack Fruit Arrabiata.  Vegan jackfruit peppers, spicy harissa, roquito chilis \& vegan Mozzarella cheese, finished with basil.
\item Falafel with Tahini.  Our signature recipe, served with a tahini dip.
\item Aegean Slaw.  Thinly shredded cabbage, carrot and red onion peppers, with an olive oil dressing.
\item Pork Porchetta.  Slow-roasted pork belly in herbed red wine sauce, served with roasted new potatoes and broccoli.
\item Butternut Squash Salad.  Roasted butternut squash with feta cheese, pomegranate seeds, mixed leaf salad and watercress.
\item Lentil Linguine Ragu.  Rich Italian lentil ragu with baby spinach, tomatoes, basil \& pecorino cheese.
\item Vegan Spaghetti Lentil Ragu.  A hearty ragu of green lentils and mixed vegetables in a rich tomato and fennel sauce.
\end{enumerate}

\paragraph{\llama+IQP}
\begin{enumerate}
\itemsep0em 
\item Grilled Aubergine and Goat's Cheese Salad. A flavorful and refreshing salad perfect for any time of the year.
\item Stuffed Bell Peppers. A colorful and flavorful vegetarian main course.
\item Grilled Halloumi and Vegetable Skewers. A simple yet impressive vegetarian appetizer or snack.
\item Grilled Portobello Mushroom and Goat's Cheese Salad. A flavorful and refreshing salad perfect for any time of the year.
\item Roasted Butternut Squash Risotto. A creamy and comforting dish featuring roasted butternut squash.
\item Spaghetti with Roasted Vegetable Ragu. A vibrant and satisfying vegan pasta dish.
\item Pan-Seared Portobello Mushroom Burger. A savory and filling vegetarian burger option.
\item Chickpea and Spinach Curry. A nutritious and aromatic curry perfect for a weeknight dinner.
\item Spaghetti with Grilled Aubergine and Tomato Sauce. A classic and comforting vegetarian pasta dish.
\item Roasted Butternut Squash and Sage Risotto. A creamy and comforting vegetarian main course.
\item Lentil and Mushroom Bolognese. A hearty and rich vegetarian take on the classic Bolognese.
\item Roasted Vegetable and Lentil Tagliatelle. A creative and delicious vegetarian take on the classic pasta dish.
\item Lentil and Vegetable Stew. A hearty and comforting vegetarian stew perfect for a cold winter's day.
\item Spaghetti with Lentil and Mushroom Bolognese. A hearty and rich vegetarian take on the classic Bolognese.
\item Vegan Lentil and Mushroom Meatballs. A tasty and satisfying vegan alternative to traditional meatballs.
\item Roasted Vegetable and Chickpea Wrap. A healthy and convenient vegetarian wrap option.
\item Grilled Aubergine and Red Pepper Salad. A flavorful and refreshing salad perfect for any time of the year.
\item Roasted Vegetable and Lentil Polpette. A creative and delicious vegan alternative to traditional meatballs.
\item Pork Ribs.  Pork Ribs smothered with Kentucky style BBQ sauce.
\item Rustica Chorizo Pizza.  Chorizo salami, torn wild boar and pork meatballs, smoked mozzarella and baby sunblush tomatoes.
\item Supreme Pizza.  Cheese and tomato, onion, mushroom, fresh basil, olive and garlic oil.
\item Halloumi Sticks with Mayo Avocado Dip.  Chunky sticks of grilled halloumi cheese with a chilli jam and mayo avocado dip.
\item Four Cheese Margherita.  Mozzarella, mascarpone, fontal and grana cheese on a tomato base.
\item Bufala Caprese.  Specialty tomatoes and drizzle of roasted garlic in extra virgin olive oil with Buffalo mozzarella.
\item Chicken Curry Ramen.  Japanese fried chicken \& noodles in a delicious curry broth.
\item Chicken Katsu Curry.  Succulent chicken in a crispy Japanese panko crumb with mild curry sauce, pickles and steamed rice.
\item Pork Porchetta.  Slow-roasted pork belly in herbed red wine sauce, served with roasted new potatoes and broccoli.
\item Butterfly Chicken Burger.  Two succulent chicken breasts joined by crispy skin, packaged in a Portuguese roll and topped with tomato and lettuce.
\item Beer Battered Fish with Chips.  with frites and pea \& mint puree.
\item Falafel with Tahini.  Our signature recipe, served with a tahini dip.
\item Panchetta Carbonara.  crispy pancetta and asparagus in a velvety sauce made with mascarpone, pecorino and Grana Padano cheese.
\item Butternut Squash Salad.  Roasted butternut squash with feta cheese, pomegranate seeds, mixed leaf salad and watercress.
\item Aegean Slaw.  Thinly shredded cabbage, carrot and red onion peppers, with an olive oil dressing.
\item Jack Fruit Arrabiata.  Vegan jackfruit peppers, spicy harissa, roquito chilis \& vegan Mozzarella cheese, finished with basil.
\item Vegan Meatballs.  Vegan meatballs in a rich tomato and fennel sauce.
\item Lentil Linguine Ragu.  Rich Italian lentil ragu with baby spinach, tomatoes, basil \& pecorino cheese.
\end{enumerate}

\paragraph{\gemini}
\begin{enumerate}
\itemsep0em 
\item Tofu Curry Ramen. A vegan take on our classic ramen, with crispy fried tofu in a fragrant curry broth.
\item Lentil Meatballs with Spaghetti . A lighter, plant-based version of our classic meatballs, packed with flavor.
\item Mushroom Bourguignon. A flavorful twist on a classic, featuring tender mushrooms in a rich red wine sauce.
\item Chicken Katsu Curry. Succulent chicken in a crispy Japanese panko crumb with mild curry sauce, pickles and steamed rice.
\item Pepperoni Melt. This one’s got our triple cheese blend, tomato sauce and pepperoni.
\item Chickpea Meatballs with Spaghetti. A hearty and flavorful vegetarian option, featuring tender chickpea meatballs.
\item Aromatic Duck Rolls. Aromatic roast duck, hoisin sauce, cucumber, spring onion, rolled in rice paper.
\item Mushroom and Lentil Chianti. A hearty and flavorful vegetarian stew featuring mushrooms and lentils in a rich Chianti red wine sauce.
\item Mushroom Porchetta. Savory mushrooms roasted to perfection in an herbed red wine sauce, served with roasted new potatoes and broccoli.
\item Cured Salmon Sashimi. Dazzling beetroot cured slices of salmon served with lemon. .
\item Mushroom and Tofu Tagliatelle. A vegetarian-friendly twist on our classic ragu: mushrooms and tofu in a rich red wine and beef dripping ragu.
\item Rustica Chorizo Pizza. Rustica Chorizo Pizza. Chorizo salami, torn wild boar and pork meatballs, smoked mozzarella and baby sunblush tomatoes. Topped with chilli threads and riserva cheese.
\item Butterfly Chicken Burger. Two succulent chicken breasts joined by crispy skin, packaged in a Portuguese roll and topped with tomato and lettuce.
\item Mushroom and Pepperoni Pizza. A classic pizza with your chocie of hot green, Roquito or jalapeno peppers.
\item Tofu Burger. A plant-based take on the classic burger, with flavorful tofu, fresh veggies, and tangy Dijon mayonnaise.
\item Beer Battered Fish with Chips. with frites and pea \& mint puree.
\item Lentil Polpette. Oven-baked lentil patties in a pomodoro sauce with melted smoked mozzarella.
\item Steak. Sliced Black Angus rump steak with red onion chutney, watercress, and Dijon mayonnaise.
\item Cheese and Tomato Cannelloni. Pasta with béchamel, mozzarella, tomato sauce and Gran Milano cheese.
\item Supreme Pizza. Cheese and tomato, onion, mushroom, fresh basil, olive and garlic oil.
\item Goat's Cheese Calzone. Goats cheese, grilled aubergines, roasted peppers, oven roasted tomatoes.
\item Vegan Spaghetti Lentil Ragu. A hearty ragu of green lentils and mixed vegetables in a rich tomato and fennel sauce.
\item Jack Fruit Arrabiata. Vegan jackfruit peppers, spicy harissa, roquito chilis \& vegan Mozzarella cheese, finished with basil.
\item Halloumi Sticks with Mayo Avocado Dip. Chunky sticks of grilled halloumi cheese with a chilli jam and mayo avocado dip.
\item Cheese Salad. Smoked cheddar, cheddar and sage stuffing.
\item Falafel with Tahini. Our signature recipe, served with a tahini dip.
\item Four Cheese Margherita. Mozzarella, mascarpone, fontal and grana cheese on a tomato base.
\item Vegan Meatballs. Vegan meatballs in a rich tomato and fennel sauce.
\item Bufala Caprese. Specialty tomatoes and drizzle of roasted garlic in extra virgin olive oil with Buffalo mozzarella.
\item Butternut Squash Salad. Roasted butternut squash with feta cheese, pomegranate seeds, mixed leaf salad and watercress.
\item Panchetta Carbonara. Crispy pancetta and asparagus in a velvety sauce made with mascarpone, pecorino and Grana Padano cheese.
\item Lentil Linguine Ragu. Rich Italian lentil ragu with baby spinach, tomatoes, basil \& pecorino cheese.
\item Aegean Slaw. Thinly shredded cabbage, carrot and red onion peppers, with an olive oil dressing.
\item Goat's Cheese Salad. Goat's cheese and balsamic onion crostinis on winter baby kale, red pepper, cucumber, plum tomatoes.
\item Harusami Aubergine. Fried slices of aubergine with garlic \& ginger sesame soy.
\item Inari Taco. Golden tofu pockets filled with sticky sushi rice, avocado salsa \& vegan sriracha mayo.
\end{enumerate}

\paragraph{\claude}
\begin{enumerate}
\itemsep0em 
\item Tofu curry ramen. Crispy fried tofu and noodles in a rich curry broth, topped with fresh pak choi and tangy pickled onions.
\item Mushroom ribs. Meaty mushrooms smothered in Kentucky style BBQ sauce, offering a plant-based twist on a classic.
\item Lentil bourguignon burger. Hearty lentil patty topped with crispy bacon, red wine braised onions, and melted Raclette cheese.
\item Tofu katsu curry. Crispy panko-crusted tofu with mild curry sauce, tangy pickles, and steamed rice.
\item Cheese and mushroom melt. Triple cheese blend melted over savory mushrooms and tangy tomato sauce.
\item Lentil meatballs with spaghetti. Handmade lentil patties grilled and topped with creamy Greek yoghurt, tomato sauce, and onions.
\item Mushroom spring rolls. Aromatic roasted mushrooms with hoisin sauce, cucumber, and spring onion, rolled in delicate rice paper.
\item Slow cooked lentils in Chianti sauce. Lentils and shallots slow cooked in rich Chianti red wine sauce with a cheesy dumpling.
\item Mushroom porchetta. Slow-roasted mushrooms in herbed red wine sauce, served with roasted new potatoes and broccoli.
\item Tofu sashimi. Dazzling beetroot cured slices of tofu served with zesty lemon.
\item Lentil and mushroom tagliatelle. A lentil, mushroom, and red wine ragu with tagliatelle, fresh bufala mozzarella and crispy sage.
\item Rustica mushroom pizza. Savory mushrooms, lentil meatballs, smoked mozzarella and baby sunblush tomatoes, topped with chilli threads and riserva cheese.
\item Tofu burger. Two succulent tofu fillets in a crispy coating, packaged in a Portuguese roll and topped with tomato and lettuce.
\item Mushroom hot pizza. Savory mushrooms, mozzarella and tomato, with your choice of hot green, Roquito or jalapeno peppers.
\item Lentil burger. Hearty lentil patty with baby gem lettuce and plum tomato in a chargrilled brioche bun with Dijon mayonnaise.
\item Beer battered tofu with chips. Crispy beer-battered tofu with frites and refreshing pea \& mint puree.
\item Lentil polpette. Oven-baked herby lentil meatballs in a pomodoro sauce with melted smoked mozzarella.
\item Grilled mushroom steak. Sliced grilled mushroom steak with red onion chutney, watercress, and Dijon mayonnaise.
\item Mushroom cannelloni. Pasta rolls filled with savory mushrooms, topped with béchamel, mozzarella and Gran Milano cheese.
\item Supreme veggie pizza. A medley of cheese, tomato, onion, mushroom, fresh basil, olive and garlic oil.
\item Goat's cheese calzone. Goat's cheese, grilled aubergines, roasted peppers, and oven roasted tomatoes in a folded pizza pocket.
\item Lentil spaghetti ragu. A hearty ragu of green lentils and mixed vegetables in a rich tomato and fennel sauce.
\item Jackfruit arrabiata. Spicy jackfruit and peppers with harissa, roquito chilis \& vegan mozzarella cheese, finished with basil.
\item Halloumi sticks with avocado dip. Chunky sticks of grilled halloumi cheese with a chilli jam and creamy mayo avocado dip.
\item Cheese salad. A variety of cheeses including smoked cheddar and cheddar, with aromatic sage stuffing.
\item Falafel with tahini. Our signature chickpea falafel recipe, served with a creamy tahini dip.
\item Four cheese margherita. A blend of mozzarella, mascarpone, fontal and grana cheese on a classic tomato base.
\item Lentil meatballs. Savory lentil meatballs in a rich tomato and fennel sauce.
\item Tofu caprese. Marinated tofu with specialty tomatoes and a drizzle of roasted garlic in extra virgin olive oil with Buffalo mozzarella.
\item Butternut squash salad. Roasted butternut squash with tangy feta cheese, pomegranate seeds, mixed leaf salad and peppery watercress.
\item Mushroom carbonara. Savory mushrooms and crisp asparagus in a velvety sauce made with mascarpone, pecorino and Grana Padano cheese.
\item Lentil linguine ragu. Rich Italian lentil ragu with baby spinach, tomatoes, basil \& pecorino cheese.
\item Aegean slaw. Thinly shredded cabbage, carrot and red onion peppers, with a light olive oil dressing.
\item Goat's cheese salad. Goat's cheese and balsamic onion crostinis on winter baby kale, red pepper, cucumber, and plum tomatoes.
\item Harusami aubergine. Fried slices of aubergine with aromatic garlic \& ginger sesame soy sauce.
\item Inari taco. Golden tofu pockets filled with sticky sushi rice, avocado salsa \& spicy vegan sriracha mayo.
\end{enumerate}

\paragraph{\gptfouro}
\begin{enumerate}
\itemsep0em 
\item Tofu Curry Ramen. Japanese fried tofu \& noodles in a delicious curry broth. Topped with pak choi \& pickled onions.
\item Vegetarian Ribs. Grilled tofu slathered in rich, Kentucky-style BBQ sauce. .
\item  Vegetarian Bourguignon Burger. Lentil patty burger topped with crispy mushrooms, red wine braised onions, Raclette cheese.
\item  Tofu Katsu Curry. Succulent tofu in a crispy Japanese panko crumb with mild curry sauce, pickles, and steamed rice.
\item Mushroom Melt. This one’s got our triple cheese blend, tomato sauce, and grilled mushrooms.
\item  Chickpea Meatballs with Spaghetti. Handmade chickpea patties grilled and topped with Greek yoghurt, tomato sauce, and onions.
\item  Aromatic Tofu Rolls. Aromatic roast tofu, hoisin sauce, cucumber, spring onion, rolled in rice paper.
\item  Slow Cooked Mushroom in Chianti Sauce. Portobello mushrooms and shallots slow cooked in rich Chianti red wine sauce with a cheesy dumpling.
\item  Lentil Porchetta. Slow-roasted lentil loaf in herbed red wine sauce, served with roasted new potatoes and broccoli.
\item Beetroot Cured Tofu Sashimi. Dazzling beetroot cured slices of tofu served with lemon.
\item  Mushroom and Lentil Tagliatelle. A mushroom and lentil ragu with tagliatelle, fresh bufala mozzarella, and crispy sage.
\item  Rustica Tofu Pizza. Grilled tofu, smoked mozzarella, baby sunblush tomatoes, topped with chilli threads and riserva cheese.
\item  Grilled Tofu Burger. Two succulent pieces of grilled tofu joined by a crispy skin, packaged in a Portuguese roll and topped with tomato and lettuce.
\item  Vegetarian Hot Pizza. Grilled tofu, mozzarella, and tomato, with your choice of hot green, Roquito, or jalapeno peppers.
\item  Lentil Burger. 6oz lentil patty with baby gem lettuce and plum tomato in a chargrilled brioche bun with Dijon mayonnaise.
\item  Beer Battered Tofu with Chips. Beer battered tofu with frites and pea \& mint puree.
\item  Wild Mushroom Polpette. Oven-baked herby wild mushroom and lentil meatballs in pomodoro sauce with melted smoked mozzarella.
\item  Vegetarian Steak. Grilled portobello mushrooms with red onion chutney, watercress, and Dijon mayonnaise.
\item  Cheese Cannelloni. Pasta with béchamel, mozzarella, and Gran Milano cheese.
\item  Supreme Vegetable Pizza. Cheese and tomato, onion, mushroom, fresh basil, olive oil, and garlic oil.
\item  Goat's Cheese Calzone. Goat's cheese, grilled aubergines, roasted peppers, oven roasted tomatoes in a folded pizza.
\item  Lentil Spaghetti Ragu. A hearty ragu of green lentils and mixed vegetables in a rich tomato and fennel sauce.
\item  Jackfruit Arrabbiata. Vegan jackfruit, spicy harissa, roquito chilis, and vegan mozzarella cheese, finished with basil.
\item  Halloumi Sticks with Mayo Avocado Dip. Chunky sticks of grilled halloumi cheese with a chilli jam and mayo avocado dip.
\item  Cheese Salad. Smoked cheddar, cheddar, and sage stuffing served as a salad.
\item  Falafel with Tahini. Our signature chickpea falafel served with a tahini dip.
\item  Four Cheese Margherita Pizza. Mozzarella, mascarpone, fontal, and grana cheese on a tomato base.
\item  Vegetarian Meatballs. Lentil meatballs in a rich tomato and fennel sauce.
\item  Bufala Caprese. Specialty tomatoes and roasted garlic in extra virgin olive oil with Buffalo mozzarella.
\item  Butternut Squash Salad. Roasted butternut squash with feta cheese, pomegranate seeds, mixed leaf salad, and watercress.
\item  Egg Carbonara. Velvety carbonara sauce made with egg, crispy pancetta, and asparagus. .
\item  Lentil Linguine. Green lentils, baby spinach, tomatoes, basil, and pecorino cheese with linguine.
\item  Aegean Slaw. Thinly shredded cabbage, carrot and red onion peppers with an olive oil dressing.
\item  Goat's Cheese Salad. Goat's cheese and balsamic onion crostinis on winter baby kale, red pepper, cucumber, plum tomatoes.
\item  Harusami Mushroom. Fried mushrooms with garlic, ginger, and sesame soy.
\item  Inari Taco. Golden tofu pockets filled with sticky sushi rice, avocado salsa, and vegan sriracha mayo.
\end{enumerate}

\paragraph{\opreview}
\begin{enumerate}
\itemsep0em 
\item Tofu Curry Ramen. A flavorful bowl of ramen with crispy fried tofu and aromatic curry broth.
\item Grilled Mushroom BBQ Skewers. Juicy and tender mushrooms glazed with smoky barbecue sauce.
\item Lentil Bourguignon. A hearty twist on a classic, featuring tender lentils in rich red wine sauce topped with melted Raclette cheese.
\item Egg Katsu Curry. Delight in a crispy egg coated in panko crumbs, served with a mild curry sauce.
\item Mushroom Melt. This melt features savory mushrooms with melted cheese and rich tomato sauce.
\item Chickpea Meatballs with Spaghetti. Handmade chickpea patties grilled and served with spaghetti and tangy tomato sauce.
\item Aromatic Tofu Rolls. Fresh rolls filled with aromatic tofu and crisp vegetables.
\item Slow Cooked Mushrooms in Chianti Sauce. Earthy mushrooms slow-cooked in rich Chianti wine sauce with a cheesy dumpling.
\item Stuffed Aubergine Porchetta. Roasted aubergine stuffed with aromatic herbs and served with hearty sides.
\item Tofu Sashimi. Delicate slices of tofu served with a squeeze of lemon.
\item Mushroom and Lentil Tagliatelle. A rich and hearty pasta with mushrooms and lentils in a savory ragu.
\item Rustica Veggie Pizza. A delightful pizza loaded with vegetables and smoky cheeses.
\item Grilled Halloumi Burger. A savory burger featuring grilled halloumi and fresh toppings.
\item Spicy Veggie Pizza. A fiery pizza topped with spicy peppers and melted cheese.
\item Gourmet Mushroom Burger. A hearty burger featuring grilled mushrooms and classic fixings.
\item Beer Battered Tofu with Chips. Crispy beer-battered tofu served with traditional sides.
\item Vegan Polpette. Herby vegan meatballs baked in flavorful tomato sauce with melted cheese.
\item Grilled Aubergine Stack. Layers of grilled aubergine with tangy red onion chutney.
\item Spinach and Cheese Cannelloni. Classic Italian dish with creamy spinach and cheese filling.
\item Supreme Veggie Pizza. A classic pizza loaded with fresh vegetables.
\item Goat's Cheese Calzone. A folded pizza filled with Mediterranean vegetables and goat's cheese.
\item Vegan Spaghetti Lentil Ragu. A hearty vegan spaghetti with rich lentil ragu.
\item Jackfruit Arrabiata. A spicy and flavorful dish with tender jackfruit.
\item Halloumi Sticks with Mayo Avocado Dip. Chunky grilled halloumi sticks served with a delicious dip.
\item Cheese Salad. A satisfying salad featuring a variety of cheeses.
\item Falafel with Tahini. Crispy falafel made from chickpeas served with creamy tahini.
\item Four Cheese Margherita. An indulgent pizza for cheese lovers.
\item Vegan Meatballs. Savory plant-based meatballs in a rich sauce.
\item Bufala Caprese. A fresh and delicious Italian classic.
\item Butternut Squash Salad. A vibrant salad with sweet and tangy flavors.
\item Asparagus Carbonara. A creamy pasta dish featuring tender asparagus.
\item Lentil Linguine Ragu. A comforting pasta with rich lentil sauce.
\item Aegean Slaw. A crisp and refreshing slaw with Mediterranean flavors.
\item Goat's Cheese Salad. A hearty salad with tangy goat's cheese and fresh vegetables.
\item Harusami Aubergine. Fried slices of aubergine in a flavorful sauce.
\item Inari Taco. Golden tofu pockets filled with rice and zesty salsa.
\end{enumerate}

\paragraph{\opreview, Remove Descriptions}
\begin{enumerate}
\itemsep0em 
\item Creamy Mushroom Tagliatelle. Mushrooms,  tagliatelle pasta,  mascarpone cheese,  baby spinach,  garlic oil,  basil. 
\item Vegetable Delight Pizza. Mozzarella,  tomato sauce,  grilled aubergines,  roasted peppers,  oven-roasted tomatoes,  fresh basil.
\item Three Cheese Omelette. Eggs,  smoked cheddar,  mozzarella,  grana cheese,  red onions. 
\item Eggplant Parmesan. Aubergine,  tomato sauce,  mozzarella,  Grana Padano cheese,  basil. 
\item Mushroom and Goat's Cheese Omelette. Eggs,  mushrooms,  goat's cheese,  garlic oil,  basil.
\item Chickpea Curry with Rice. Chickpeas,  mild curry sauce,  pickles,  steamed rice.
\item Spinach and Feta Stuffed Mushrooms. Mushrooms,  baby spinach,  feta cheese,  garlic oil.
\item Lentil Veggie Burger. Lentils,  chargrilled brioche bun,  baby gem lettuce,  plum tomato,  Dijon mayonnaise.
\item Falafel Salad. Falafel,  mixed leaf salad,  cucumber,  plum tomatoes,  red onion,  tahini dip.
\item Egg Shakshuka. Eggs,  tomato sauce,  red peppers,  red onions,  garlic oil,  basil. 
\item Vegetable and Tofu Stir-Fry. Tofu,  noodles,  pak choi,  red peppers,  red onions,  garlic \& ginger sesame soy. 
\item Tofu Katsu Curry. Tofu in crispy Japanese panko crumb,  mild curry sauce,  pickles,  steamed rice. 
\item Lentil Stuffed Peppers. Bell peppers,  lentils,  tomato sauce,  baby spinach, goat's cheese.
\item Butternut Squash and Feta Salad. Roasted butternut squash,  feta cheese,  pomegranate seeds,  mixed leaf salad,  watercress. 
\item Vegan Meatball Sub. Vegan meatballs,  pomodoro sauce,  smoked mozzarella,  Portuguese roll.
\item Chickpea and Spinach Curry. Chickpeas,  baby spinach,  mild curry sauce,  steamed rice. 
\item Tofu Curry Ramen. Fried tofu,  noodles,  curry broth,  pak choi,  pickled onions.
\item Mushroom and Lentil Bolognese. Mushrooms,  lentils,  tagliatelle pasta,  tomato sauce,  basil.
\item Supreme Pizza.  Cheese and tomato, onion, mushroom, fresh basil, olive and garlic oil.
\item Chicken Curry Ramen.  Japanese fried chicken \& noodles in a delicious curry broth.
\item Pork Ribs.  Pork Ribs smothered with Kentucky style BBQ sauce.
\item Butterfly Chicken Burger.  Two succulent chicken breasts joined by crispy skin, packaged in a Portuguese roll and topped with tomato and lettuce.
\item Four Cheese Margherita.  Mozzarella, mascarpone, fontal and grana cheese on a tomato base.
\item Panchetta Carbonara.  crispy pancetta and asparagus in a velvety sauce made with mascarpone, pecorino and Grana Padano cheese.
\item Chicken Katsu Curry.  Succulent chicken in a crispy Japanese panko crumb with mild curry sauce, pickles and steamed rice.
\item Rustica Chorizo Pizza.  Chorizo salami, torn wild boar and pork meatballs, smoked mozzarella and baby sunblush tomatoes.
\item Beer Battered Fish with Chips.  with frites and pea \& mint puree.
\item Pork Porchetta.  Slow-roasted pork belly in herbed red wine sauce, served with roasted new potatoes and broccoli.
\item Butternut Squash Salad.  Roasted butternut squash with feta cheese, pomegranate seeds, mixed leaf salad and watercress.
\item Aegean Slaw.  Thinly shredded cabbage, carrot and red onion peppers, with an olive oil dressing.
\item Cured Salmon Sashimi.  Dazzling beetroot cured slices of salmon served with lemon.
\item Falafel with Tahini.  Our signature recipe, served with a tahini dip.
\item Lentil Linguine Ragu.  Rich Italian lentil ragu with baby spinach, tomatoes, basil \& pecorino cheese.
\item Aromatic Duck Rolls.  Aromatic roast duck, hoisin sauce, cucumber, spring onion, rolled in rice paper.
\item Vegan Spaghetti Lentil Ragu.  A hearty ragu of green lentils and mixed vegetables in a rich tomato and fennel sauce.
\item Harusami Aubergine.  Fried slices of aubergine with garlic \& ginger sesame soy.
\end{enumerate}

\paragraph{\opreview, Remove Preferences}
\begin{enumerate}
\itemsep0em
\item Tofu Curry Ramen. Japanese-style ramen with fried tofu and noodles in a delicious curry broth, topped with pak choi and pickled onions.
\item Creamy Mushroom Tagliatelle. An indulgent pasta dish featuring sautéed mushrooms and baby spinach in a creamy mascarpone sauce, tossed with tagliatelle and seasoned with garlic oil and fresh basil.
\item Falafel Salad. Our signature falafel served over a fresh mixed salad, with crunchy cucumbers, juicy tomatoes, red onions, and a creamy tahini dressing.
\item Lentil Stuffed Peppers. Roasted bell peppers stuffed with hearty lentils, tomato sauce, and fresh baby spinach, topped with creamy goat's cheese.
\item Mushroom and Goat's Cheese Omelette. A fluffy omelette filled with sautéed mushrooms and creamy goat's cheese, infused with garlic oil and fresh basil.
\item Eggplant Parmesan. Layers of tender aubergine baked with rich tomato sauce, melted mozzarella, and Grana Padano cheese, garnished with fresh basil.
\item Egg Shakshuka. Poached eggs simmered in a spiced tomato sauce with peppers and onions, garnished with fresh basil.
\item Butternut Squash and Feta Salad. Sweet roasted butternut squash and tangy feta cheese on a bed of fresh mixed greens, sprinkled with pomegranate seeds and watercress.
\item Lentil Veggie Burger. A hearty lentil-based veggie burger served on a toasted brioche bun with fresh lettuce, tomato, and tangy Dijon mayonnaise.
\item Vegetable and Tofu Stir-Fry. A vibrant stir-fry of crispy tofu and fresh vegetables tossed with noodles in a savory garlic and ginger sesame soy sauce.
\item Three Cheese Omelette. A fluffy omelette loaded with smoked cheddar, mozzarella, grana cheese, and sautéed red onions.
\item Tofu Katsu Curry. Succulent tofu coated in crispy panko crumbs, served with mild curry sauce, tangy pickles, and steamed rice.
\item Spinach and Feta Stuffed Mushrooms. Large mushrooms stuffed with sautéed baby spinach and creamy feta cheese, drizzled with garlic oil and baked to perfection.
\item Supreme Pizza.  Cheese and tomato, onion, mushroom, fresh basil, olive and garlic oil.
\item Canelloni.  pasta with béchamel, mozzarella and Gran Milano cheese.
\item Beer Battered Fish with Chips.  with frites and pea \& mint puree.
\item Harusami Aubergine.  Fried slices of aubergine with garlic \& ginger sesame soy.
\item Chicken Katsu Curry.  Succulent chicken in a crispy Japanese panko crumb with mild curry sauce, pickles and steamed rice.
\item Vegan Meatballs.  Vegan meatballs in a rich tomato and fennel sauce.
\item Inari Taco.  Golden tofu pockets filled with sticky sushi rice, avocado salsa \& vegan sriracha mayo.
\item Pork Ribs.  Pork Ribs smothered with Kentucky style BBQ sauce.
\item Aegean Slaw.  Thinly shredded cabbage, carrot and red onion peppers, with an olive oil dressing.
\item Vegan Spaghetti Lentil Ragu.  A hearty ragu of green lentils and mixed vegetables in a rich tomato and fennel sauce.
\item Falafel with Tahini.  Our signature recipe, served with a tahini dip.
\item Halloumi Sticks with Mayo Avocado Dip.  Chunky sticks of grilled halloumi cheese with a chilli jam and mayo avocado dip.
\item Wild Boar Polpette.  Oven-baked herby wild boar and pork meatballs in a pomodoro sauce with melted smoked mozzarella.
\item Chicken Curry Ramen.  Japanese fried chicken \& noodles in a delicious curry broth.
\item Butterfly Chicken Burger.  Two succulent chicken breasts joined by crispy skin, packaged in a Portuguese roll and topped with tomato and lettuce.
\item Bufala Caprese.  Specialty tomatoes and drizzle of roasted garlic in extra virgin olive oil with Buffalo mozzarella.
\item Butternut Squash Salad.  Roasted butternut squash with feta cheese, pomegranate seeds, mixed leaf salad and watercress.
\item Panchetta Carbonara.  crispy pancetta and asparagus in a velvety sauce made with mascarpone, pecorino and Grana Padano cheese.
\item Lentil Linguine Ragu.  Rich Italian lentil ragu with baby spinach, tomatoes, basil \& pecorino cheese.
\item Jack Fruit Arrabiata.  Vegan jackfruit peppers, spicy harissa, roquito chilis \& vegan Mozzarella cheese, finished with basil.
\item Aromatic Duck Rolls.  Aromatic roast duck, hoisin sauce, cucumber, spring onion, rolled in rice paper.
\item Rustica Chorizo Pizza.  Chorizo salami, torn wild boar and pork meatballs, smoked mozzarella and baby sunblush tomatoes.
\item Pork Porchetta.  Slow-roasted pork belly in herbed red wine sauce, served with roasted new potatoes and broccoli.
\end{enumerate}

\paragraph{\opreview, Remove Diversity}
\begin{enumerate}
\itemsep0em
\item Creamy Mushroom Tagliatelle. An indulgent pasta dish featuring sautéed mushrooms and baby spinach in a creamy mascarpone sauce, tossed with tagliatelle and seasoned with garlic oil and fresh basil.
\item Vegetable Delight Pizza. A delicious crispy pizza topped with mozzarella, tomato sauce, and a medley of grilled vegetables, finished with fresh basil.
\item Three Cheese Omelette. A fluffy omelette loaded with smoked cheddar, mozzarella, grana cheese, and sautéed red onions.
\item Eggplant Parmesan. Layers of tender aubergine baked with rich tomato sauce, melted mozzarella, and Grana Padano cheese, garnished with fresh basil.
\item Mushroom and Goat's Cheese Omelette. A fluffy omelette filled with sautéed mushrooms and creamy goat's cheese, infused with garlic oil and fresh basil.
\item Chickpea Curry with Rice. A flavorful and hearty chickpea curry served with steamed rice and accompanied by tangy pickles.
\item Spinach and Feta Stuffed Mushrooms. Large mushrooms stuffed with sautéed baby spinach and creamy feta cheese, drizzled with garlic oil and baked to perfection.
\item Lentil Veggie Burger. A hearty lentil-based veggie burger served on a toasted brioche bun with fresh lettuce, tomato, and tangy Dijon mayonnaise.
\item Falafel Salad. Our signature falafel served over a fresh mixed salad, with crunchy cucumbers, juicy tomatoes, red onions, and a creamy tahini dressing.
\item Egg Shakshuka. Poached eggs simmered in a spiced tomato sauce with peppers and onions, garnished with fresh basil.
\item Vegetable and Tofu Stir-Fry. A vibrant stir-fry of crispy tofu and fresh vegetables tossed with noodles in a savory garlic and ginger sesame soy sauce.
\item Tofu Katsu Curry. Succulent tofu coated in crispy panko crumbs, served with mild curry sauce, tangy pickles, and steamed rice.
\item Lentil Stuffed Peppers. Roasted bell peppers stuffed with hearty lentils, tomato sauce, and fresh baby spinach, topped with creamy goat's cheese.
\item Butternut Squash and Feta Salad. Sweet roasted butternut squash and tangy feta cheese on a bed of fresh mixed greens, sprinkled with pomegranate seeds and watercress.
\item Vegan Meatball Sub. Hearty vegan meatballs simmered in pomodoro sauce, topped with melted smoked mozzarella, served in a toasted Portuguese roll.
\item Chickpea and Spinach Curry. A nourishing curry of chickpeas and baby spinach simmered in a mild curry sauce, served with steamed rice.
\item Tofu Curry Ramen. Japanese-style ramen with fried tofu and noodles in a delicious curry broth, topped with pak choi and pickled onions.
\item Supreme Pizza.  Cheese and tomato, onion, mushroom, fresh basil, olive and garlic oil.
\item Chicken Curry Ramen.  Japanese fried chicken \& noodles in a delicious curry broth.
\item Pork Ribs.  Pork Ribs smothered with Kentucky style BBQ sauce.
\item Butterfly Chicken Burger.  Two succulent chicken breasts joined by crispy skin, packaged in a Portuguese roll and topped with tomato and lettuce.
\item Four Cheese Margherita.  Mozzarella, mascarpone, fontal and grana cheese on a tomato base.
\item Panchetta Carbonara.  crispy pancetta and asparagus in a velvety sauce made with mascarpone, pecorino and Grana Padano cheese.
\item Chicken Katsu Curry.  Succulent chicken in a crispy Japanese panko crumb with mild curry sauce, pickles and steamed rice.
\item Rustica Chorizo Pizza.  Chorizo salami, torn wild boar and pork meatballs, smoked mozzarella and baby sunblush tomatoes.
\item Beer Battered Fish with Chips.  with frites and pea \& mint puree.
\item Pork Porchetta.  Slow-roasted pork belly in herbed red wine sauce, served with roasted new potatoes and broccoli.
\item Butternut Squash Salad.  Roasted butternut squash with feta cheese, pomegranate seeds, mixed leaf salad and watercress.
\item Aegean Slaw.  Thinly shredded cabbage, carrot and red onion peppers, with an olive oil dressing.
\item Goat's Cheese Calzone.  Goats cheese, grilled aubergines, roasted peppers, oven roasted tomatoes.
\item Inari Taco.  Golden tofu pockets filled with sticky sushi rice, avocado salsa \& vegan sriracha mayo.
\item Falafel with Tahini.  Our signature recipe, served with a tahini dip.
\item Lentil Linguine Ragu.  Rich Italian lentil ragu with baby spinach, tomatoes, basil \& pecorino cheese.
\item Vegan Spaghetti Lentil Ragu.  A hearty ragu of green lentils and mixed vegetables in a rich tomato and fennel sauce.
\item Harusami Aubergine.  Fried slices of aubergine with garlic \& ginger sesame soy.
\item Vegan Meatballs.  Vegan meatballs in a rich tomato and fennel sauce.
\end{enumerate}

\paragraph{\opreview, Remove Preferences and Diversity}
\begin{enumerate}
\itemsep0em
\item Tofu Curry Ramen. Japanese-style ramen with fried tofu and noodles in a delicious curry broth, topped with pak choi and pickled onions.
\item Chickpea Curry with Rice. A flavorful and hearty chickpea curry served with steamed rice and accompanied by tangy pickles.
\item Eggplant Parmesan. Layers of tender aubergine baked with rich tomato sauce, melted mozzarella, and Grana Padano cheese, garnished with fresh basil.
\item Mushroom and Lentil Bolognese. Hearty mushrooms and lentils cooked in a rich tomato sauce, served over tagliatelle pasta and garnished with fresh basil.
\item Vegan Meatball Sub. Hearty vegan meatballs simmered in pomodoro sauce, topped with melted smoked mozzarella, served in a toasted Portuguese roll.
\item Egg Shakshuka. Poached eggs simmered in a spiced tomato sauce with peppers and onions, garnished with fresh basil.
\item Lentil Veggie Burger. A hearty lentil-based veggie burger served on a toasted brioche bun with fresh lettuce, tomato, and tangy Dijon mayonnaise.
\item Mushroom and Goat's Cheese Omelette. A fluffy omelette filled with sautéed mushrooms and creamy goat's cheese, infused with garlic oil and fresh basil.
\item Chickpea and Spinach Curry. A nourishing curry of chickpeas and baby spinach simmered in a mild curry sauce, served with steamed rice.
\item Tofu Katsu Curry. Succulent tofu coated in crispy panko crumbs, served with mild curry sauce, tangy pickles, and steamed rice.
\item Falafel Salad. Our signature falafel served over a fresh mixed salad, with crunchy cucumbers, juicy tomatoes, red onions, and a creamy tahini dressing.
\item Lentil Stuffed Peppers. Roasted bell peppers stuffed with hearty lentils, tomato sauce, and fresh baby spinach, topped with creamy goat's cheese.
\item Vegetable Delight Pizza. A delicious crispy pizza topped with mozzarella, tomato sauce, and a medley of grilled vegetables, finished with fresh basil.
\item Butternut Squash and Feta Salad. Sweet roasted butternut squash and tangy feta cheese on a bed of fresh mixed greens, sprinkled with pomegranate seeds and watercress.
\item Vegetable and Tofu Stir-Fry. A vibrant stir-fry of crispy tofu and fresh vegetables tossed with noodles in a savory garlic and ginger sesame soy sauce.
\item Creamy Mushroom Tagliatelle. An indulgent pasta dish featuring sautéed mushrooms and baby spinach in a creamy mascarpone sauce, tossed with tagliatelle and seasoned with garlic oil and fresh basil.
\item Three Cheese Omelette. A fluffy omelette loaded with smoked cheddar, mozzarella, grana cheese, and sautéed red onions.
\item Pork Porchetta.  Slow-roasted pork belly in herbed red wine sauce, served with roasted new potatoes and broccoli.
\item Beer Battered Fish with Chips.  with frites and pea \& mint puree.
\item Inari Taco.  Golden tofu pockets filled with sticky sushi rice, avocado salsa \& vegan sriracha mayo.
\item Halloumi Sticks with Mayo Avocado Dip.  Chunky sticks of grilled halloumi cheese with a chilli jam and mayo avocado dip.
\item Falafel with Tahini.  Our signature recipe, served with a tahini dip.
\item Four Cheese Margherita.  Mozzarella, mascarpone, fontal and grana cheese on a tomato base.
\item Wild Boar Polpette.  Oven-baked herby wild boar and pork meatballs in a pomodoro sauce with melted smoked mozzarella.
\item Aromatic Duck Rolls.  Aromatic roast duck, hoisin sauce, cucumber, spring onion, rolled in rice paper.
\item Rustica Chorizo Pizza.  Chorizo salami, torn wild boar and pork meatballs, smoked mozzarella and baby sunblush tomatoes.
\item Lentil Linguine Ragu.  Rich Italian lentil ragu with baby spinach, tomatoes, basil \& pecorino cheese.
\item Pork Ribs.  Pork Ribs smothered with Kentucky style BBQ sauce.
\item Canelloni.  pasta with béchamel, mozzarella and Gran Milano cheese.
\item Butternut Squash Salad.  Roasted butternut squash with feta cheese, pomegranate seeds, mixed leaf salad and watercress.
\item Jack Fruit Arrabiata.  Vegan jackfruit peppers, spicy harissa, roquito chilis \& vegan Mozzarella cheese, finished with basil.
\item Vegan Spaghetti Lentil Ragu.  A hearty ragu of green lentils and mixed vegetables in a rich tomato and fennel sauce.
\item Harusami Aubergine.  Fried slices of aubergine with garlic \& ginger sesame soy.
\item Butterfly Chicken Burger.  Two succulent chicken breasts joined by crispy skin, packaged in a Portuguese roll and topped with tomato and lettuce.
\item Vegan Meatballs.  Vegan meatballs in a rich tomato and fennel sauce.
\item Aegean Slaw.  Thinly shredded cabbage, carrot and red onion peppers, with an olive oil dressing.
\end{enumerate}

\end{document}